\newcommand{\ii}{\mathrm{i}}
\newcommand{\ee}{\mathrm{e}}
\newtheorem{rhp}{Riemann-Hilbert Problem}
\newtheorem{theorem}{Theorem}
\newtheorem{lemma}{Lemma}
\newtheorem{prop}{Proposition}
\newtheorem{definition}{Definition}
\titleformat{\section}{\centering\LARGE\bfseries}{\thesection}{1em}{}
\titleformat{\subsection}{\Large\bfseries}{\thesubsection}{1em}{}
\begin{document}

\title{A modified Korteweg-de Vries equation soliton gas under the nonzero background}
\author{Xiaoen Zhang}
\address{College of Mathematics and Systems Science, Shandong University of Science and Technology, Qingdao, China, 266590}
\email{zhangxiaoen@sdust.edu.cn}
\author{Liming Ling}
\address{School of Mathematics, South China University of Technology, Guangzhou, China, 510641}
\email{linglm@scut.edu.cn}
\begin{abstract}
In this paper, we consider a soliton gas of the focusing modified Korteweg-de Vries generated from the $N$-soliton solutions under the nonzero background. The spectral soliton density is chosen on the pure imaginary axis, excluding the branch cut $\Sigma_{c}=\left[-\ii, \ii\right]$. 
In the limit $N\to\infty$, we establish the Riemann-Hilbert problem of the soliton gas. Using the Deift-Zhou nonlinear steepest-descent method, this soliton gas under the nonzero background will decay to a constant background as $x\to+\infty$, while its asymptotics as $x\to-\infty$ can be expressed with a Riemann-Theta function, attached to a Riemann surface with genus-two. We also analyze the large $t$ asymptotics over the entire spatial domain, which is divided into three distinct asymptotic regions depending on the ratio $\xi=\frac{x}{t}$. Using the similar method, we provide the leading-order asymptotic behaviors for these three regions and exhibit the dynamics of large $t$ asymptotics. 

{\bf Keywords:} Soliton gas; Large $t$ asymptotics; Riemann-Hilbert problem; mKdV equation. 

{\bf 2020 MSC:} 35Q55, 35Q51, 35Q15, 37K40, 37K15, 37K10.
\end{abstract}
\date{\today}
\maketitle
\section{Introduction}
In this paper, we study the soliton gas for the following modified Korteweg-de Vries (mKdV) equation:
\begin{equation}\label{eq:mkdv}
q_{t}+q_{xxx}+6q^2q_{x}=0.
	\end{equation}
Due to the Lax integrability, the mKdV equation \eqref{eq:mkdv} possesses the following Lax pair,
\begin{equation}\label{eq:Lax-pair}
	\begin{aligned}
	\pmb{\Phi}_x&=\mathbf{U}(\lambda; x, t)\pmb{\Phi}, \quad \mathbf{U}(\lambda; x, t):=\begin{bmatrix}-\ii\lambda&q(x,t)\\-q(x,t)&\ii
	\lambda
	\end{bmatrix},\\
\pmb{\Phi}_{t}&=\mathbf{V}(\lambda; x, t)\pmb{\Phi},\quad \mathbf{V}(\lambda; x, t):=\begin{bmatrix}-4\ii\lambda^3+2\ii\lambda q^2&4\lambda^2q+2\ii\lambda q_{x}-q_{xx}-2q^3\\-4\lambda^2q+2\ii\lambda q_x+q_{xx}+2q^3&4\ii\lambda^3-2\ii\lambda q^2\end{bmatrix}.
\end{aligned}
	\end{equation}
The Lax pair formulation provides a comprehensive theoretical framework for studying certain initial value problems through the inverse scattering transformation, also known as the nonlinear Fourier transformation. For the mKdV equation, its inverse scattering transformation under the zero background condition was provided in the early 1973\cite{ablowitz1973nonlinear}. However, under the nonzero background, studying the mKdV equation becomes much more difficult due to the presence of a continuous spectrum branch cut and the existence of modulational instability. Under the zero or nonzero background, this equation supports many kinds of solutions, such as the soliton solutions\cite{wadati1973modified}  and the breathers\cite{zhang2020focusing}, where the soliton solutions are localized travelling waves that do not deform during the propagation. In the terminology of the inverse scattering transform, the $N$-soliton solutions of integrable equations correspond the scattering data $a(\lambda)$ having $N$ distinct simple zeros. N-soliton solutions exhibit variable characteristics, they can form ordered macroscopic coherent structures, 
and they can also form irregular statistical ensembles. In \cite{lyng2007n}, the authors studied the N-soliton solutions of the nonlinear Schr\"odinger  equation for large $N$, which is converted to the semiclassical limit problem with a scale transformation. The semiclassical limit problem of some integrable systems can be regarded as the coherent nonlinear superposition of many solitons, while the incoherent or random nonlinear superpositions of solitons is related to the irregular statistical ensembles. The latter case is usually referred to as the soliton gas, a concept first proposed by Zakharov as early as 1971\cite{zakharov1971kinetic}. In that paper, a kinetic equation describing the ``rarefied" gas for the KdV equation was presented. This kinetic equation has been further extended to the case of a dense gas by considering the thermodynamic limit of the Whitham equations\cite{el2003thermodynamic}. Recently, there has been a new trend in studying the soliton gas or breather gas, including investigations into the kinetic equations\cite{congy2021soliton,el2005kinetic,tovbis2023soliton} and spectral theory\cite{el2020spectral,suret2020nonlinear}, which provides a theoretical basis for understanding the soliton gas and the long time evolution of spontaneous modulational instability. Meanwhile, there have also been new researches on the large-order asymptotics under both zero and nonzero backgrounds. This large-order solutions correspond the scattering data $a(\lambda)$ having multi-pole zeros, exhibiting different behaviors from the multi-soliton solutions. 

Towards to discovering new interesting results to the soliton gas, in \cite{dyachenko2016primitive}, Dyachenko et.al. utilized the ``the dressing method" to produce the ``primitive potentials", which can be regarded as a form of soliton gas. They also derived a Riemann-Hilbert problem for this soliton gas. Subsequently, in \cite{girotti2021rigorous}, Girotti et.al. gave another Riemann-Hilbert problem for the soliton gas and analyzed rigourous asymptotics from two aspects, one focusing on the large $x$ asymptotics at $t=0$ and the other one on the large $t$ asymptotics. For the nonlinear Schr\"odinger equation, Bertola et.al. studied the soliton shielding\cite{bertola2023soliton}, where the point spectrum is interpolated in a given spectral soliton density within a bounded domain. And to the mKdV equation, Girotti et.al. discussed the interaction between the soliton gas and a single trial soliton, deriving a kinetic velocity for the trial soliton\cite{girotti2023soliton}. It is worth noting that the aforementioned asymptotics of soliton gas are all derived from the soliton solutions under the zero background. It is well known that the integrable equations have a variety of solutions under the nonzero background, such as breathers and solitons. By comparing to the zero background, the solutions under the nonzero background behaves differently, which can be seen from Refs. \cite{wadati1973modified,zhang2020focusing}. They might interact with the background in ways that affect their speed, shape, or stability. Sometimes, these solitons may undergo phase shifts or change shape due to interactions with the background wave. In the subsequent paragraph, we provide further examples to illustrate this distinction.

From the finite-gap solutions \cite{bertola2016rogue,chen2018rogue,chen2019rogue} of Eq.\eqref{eq:mkdv}, we can discern the distinction between soliton solutions under zero and nonzero background. The soliton solutions under zero background can be reduced from the finite-gap solutions when each gap collapses into a point, while the soliton solutions under nonzero background can be reduced from the finite-gap solutions when only one gap remains finite and the each one of the other gaps collapses into a point. 
Based on this theory, we know that by comparing to the soliton solutions under the zero background, the genus of the soliton solutions under the nonzero background increases by one. This result has been corroborated by another viewpoint. For instance, in our previous study of the large-order asymptotics of Kuznetsov-Ma breather solutions, the asymptotic expression in a specific region is given by the Riemann-Theta function relating the algebraic curve of genus-two, whereas the large-order solitons is the genus-one. Considering the difference between the solutions under the zero and nonzero background, it becomes much more interesting to study the soliton gas under the nonzero background. 

In this paper, we are focusing on the asymptotics of the soliton gas, which is originating from the limit as $N\to\infty$ of the $N$-soliton solutions under the nonzero background. For convenience, we impose this equation with the following constant boundary condition: 
\begin{equation}\label{eq:bc}
	\lim \limits_{x\to\pm\infty} q(x,0)= 1.
	\end{equation}
For the mKdV equation \eqref{eq:mkdv}, the constant background of mKdV equation can not be reduced to the zero background case. But the constant background for the mKdV equation is equivalent to the zero background problem of the Gardner equation\cite{miura1968korteweg}. If we introduce a new variable $p=q-1$ and substitute it into Eq.\eqref{eq:mkdv}, the variable $p$ will satisfy the following equation,
\begin{equation}\label{eq:mkdv-1}
p_{t}+6p_{x}+p_{xxx}+6(2pp_{x}+p^2p_{x})=0,
\end{equation}
Eq.\eqref{eq:mkdv-1} can be further reduced to mixed KdV-mKdV equation, also known as the Gardner equation,
\begin{equation}
p_{\tilde{t}}+p_{\tilde{x}\tilde{x}\tilde{x}}+6(2pp_{\tilde{x}}+p^2p_{\tilde{x}})=0,\quad \tilde{t}=t, \quad \tilde{x}=x+6t. 
\end{equation}
We want to study the soliton gas of Eq.\eqref{eq:mkdv} under the constant background $q(x,t)=1$. Based on the above analysis, this problem is actually equivalent to studying the soliton gas of the Gardner equation Eq.\eqref{eq:mkdv-1} under the zero background, which has not yet been explored. 

This paper utilizes the Riemann-Hilbert and the Deift-Zhou nonlinear steepest-descent method to study the asymptotics. 
Before proceeding, we give a review of the inverse scattering transformation under the nonzero background. 

With the nonzero background given in Eq.\eqref{eq:bc}, the fundamental solution of the Lax pair \eqref{eq:Lax-pair} is expressed as, 
\begin{equation}
\pmb{\phi}_{\rm bg}(\lambda; x, t):=n(\lambda)\begin{bmatrix}1&\ii(\lambda-\rho(\lambda))\\\ii(\lambda-\rho(\lambda))&1
	\end{bmatrix}\ee^{-\ii\rho(\lambda)(x+4\lambda^2 t-2t)\sigma_3}:=\mathbf{E}(\lambda)\ee^{-\ii\theta(\lambda;  x, t)\sigma_3},
	\end{equation}
where $\theta(\lambda; x, t):=\rho(\lambda)(x+4\lambda^2 t-2t)$ and the eigenvalues $\pm \ii \rho(\lambda)$ are defined by the relation
$\rho(\lambda)^2=\lambda^2+1$, $\sigma_3$ is the one of the Pauli matrices, defined as
\begin{equation}\label{eq:pauli}
	\sigma_1=\begin{bmatrix}0&1\\1&0
		\end{bmatrix}, \quad \sigma_2=\begin{bmatrix}0&-\ii\\\ii&0
		\end{bmatrix},\quad \sigma_3=\begin{bmatrix}1&0\\0&-1
		\end{bmatrix}.
\end{equation}
Specifically, introduce a vertical branch cut 
$\Sigma_{c}=[-\ii, \ii]$, then $\rho(\lambda)$ is analytic in the complex $\lambda$ plane except for $\Sigma_c$ and satisfies the asymptotics $\rho(\lambda)\to\lambda+\mathcal{O}(\lambda^{-1})$ as $\lambda\to\infty$. The extra factor $n(\lambda)$ is defined as:
\begin{equation}
n(\lambda)^2=\frac{\lambda+\rho(\lambda)}{2\rho(\lambda)},\quad \lim\limits_{\lambda\to\infty}n(\lambda)=1,
\end{equation}
which ensures that the determinant of $\pmb{\phi}_{\rm bg}(\lambda; x, t)$ equals to $1$. 

For this problem, the continuous spectrum $\Gamma$ is composed of those values of $\lambda$ for which $\rho(\lambda)$ is real, that is $\Gamma=\mathbb{R}\cup \Sigma_c$. For $\lambda\in\Gamma$, the simultaneous Jost matrix solutions of the Lax pair Eq.\eqref{eq:Lax-pair} can be uniquely defined from the boundary condition
\begin{equation}
	\mathbf{J}^{\pm}(\lambda; x, t)\ee^{\ii\theta(\lambda; x, t)\sigma_3}=\mathbf{E}(\lambda)+o(1),\quad x\to\pm\infty.
	\end{equation} 
It is straightforward to see that the Jost solutions are bounded when $\lambda\in\Gamma$. For convenience, we introduce the normalized matrices $\mathbf{K}^{\pm}(\lambda; x, t):=\mathbf{J}^{\pm}(\lambda; x, t)\ee^{\ii\theta(\lambda; x, t)\sigma_3}$, which can be uniquely determined by the Volterra integral equations:
\begin{equation}
	\mathbf{K}^{\pm}(\lambda; x, t)=\mathbf{E}(\lambda)+\int_{\pm\infty}^{x}\mathbf{E}(\lambda)\ee^{-\ii\rho(\lambda)(x-y)\sigma_3}\mathbf{E}(\lambda)^{-1}\Delta \mathbf{Q}(y, t)\mathbf{K}^{\pm}(\lambda; y, t)\ee^{\ii\rho(\lambda)(x-y)\sigma_3}dy,\quad \lambda\in\Gamma,
	\end{equation} 
where $$\Delta\mathbf{Q}(x,t):=\begin{bmatrix}0&q-1\\-q+1&0
	\end{bmatrix}.$$
From this Volterra integral equation, we know that the first column $\mathbf{j}^{-,1}(\lambda; x, t)$ of the Jost solution $\mathbf{J}^{-}(\lambda; x, t)$ and the second column $\mathbf{j}^{+, 2}(\lambda; x, t)$ of $\mathbf{J}^{+}(\lambda; x, t)$ are analytic in the domain $\mathbb{C}^{+}\setminus \Sigma_c$. Conversely, the second column $\mathbf{j}^{-,2}(\lambda; x, t)$ of $\mathbf{J}^{-}(\lambda; x, t)$ and the first column $\mathbf{j}^{+,1}(\lambda; x, t)$ of $\mathbf{J}^{+}(\lambda; x, t)$ are analytic in the domain $\mathbb{C}^{-}\setminus \Sigma_c$.  

Since the matrices $\mathbf{U}(\lambda; x, t), \mathbf{V}(\lambda; x, t)$ are traceless, by the Abel theorem, the determinant of the Jost solutions $\mathbf{J}^{\pm}(\lambda; x, t)$ are constant. From the asymptotic boundary condition at $x\to\pm\infty$, we have $\det(\mathbf{J}^{\pm}(\lambda; x, t))=1$. Furthermore, $\mathbf{J}^{\pm}(\lambda; x, t)$ are both the fundamental solutions of the Lax pair for $\lambda\in\Gamma$, there exists a scattering matrix $\mathbf{S}(\lambda)$ such that $\mathbf{J}^{+}(\lambda; x, t)=\mathbf{J}^{-}(\lambda; x, t)\mathbf{S}(\lambda)$. Let us suppose that the scattering matrix $\mathbf{S}(\lambda)$ has the following formula:
\begin{equation}
	\mathbf{S}(\lambda):=\begin{bmatrix}\bar{a}(\lambda)&\bar{b}(\lambda)\\
		-b(\lambda)&a(\lambda)
		\end{bmatrix}.
	\end{equation} 
Using the scattering relation, the scattering data can be given as:
\begin{equation}
	\begin{aligned}
	a(\lambda):&=\det\left(\left[\mathbf{j}^{-,1}(\lambda; x, t); \mathbf{j}^{+,2}(\lambda; x, t)\right]\right),\\
	\bar{a}(\lambda):&=\det\left(\left[\mathbf{j}^{+,1}(\lambda; x, t); \mathbf{j}^{-,2}(\lambda; x, t)\right]\right),\\
	b(\lambda):&=\det\left(\left[\mathbf{j}^{+,1}(\lambda; x, t); \mathbf{j}^{-,1}(\lambda; x, t)\right]\right),\\
	\bar{b}(\lambda):&=\det\left(\left[\mathbf{j}^{+,2}(\lambda; x, t); \mathbf{j}^{-,2}(\lambda; x, t)\right]\right).
	\end{aligned}
	\end{equation}
From the analyticity of the Jost solution,  we know that the scattering data $a(\lambda), \bar{a}(\lambda)$ can be analytically extended to the domains $\lambda\in\mathbb{C}^+\setminus \Sigma_c$ and $\lambda\in\mathbb{C}^-\setminus \Sigma_c$ respectively. This property is crucial for the later analysis of soliton gas under the nonzero background. 

Assuming for convenience that $a(\lambda)\neq 0$ for all $\lambda\in\Gamma\setminus \{-\ii, \ii\}, $ and in the domain $\mathbb{C}^{+}\setminus \Sigma_c$, $a(\lambda)$ has simple zeros at $\lambda=\xi_{1}, \xi_2, \cdots, \xi_N$. In this paper, we further suppose that these zeros satisfy the constraint condition $\xi_{i}(i=1,2,\cdots N)\in \ii\mathbb{R}^+\setminus\Sigma_{c}$. At these points, there exist the nonzero proportionality constants $\gamma_1, \gamma_2, \cdots, \gamma_N$, such that 
\begin{equation}\label{eq:propor-constant}
	\mathbf{j}^{-,1}(\xi_{j}; x, t)=\gamma_j\mathbf{j}^{+,2}(\xi_{j}; x, t), \quad (j=1,2,\cdots, N). 
	\end{equation}
Next, we give the symmetry of the Jost solution and the scattering data. 

\emph{Symmetry relation} Since $q(x, t)$ is a real function, it is  straightforward to see that the coefficient matrices $\mathbf{U}(\lambda; x, t)$ and $\mathbf{V}(\lambda; x, t)$ in the Lax pair Eq.\eqref{eq:Lax-pair} satisfy the symmetries:
\begin{equation}
\begin{aligned}
	\mathbf{U}(-\lambda; x, t)=\mathbf{U}(\lambda^*; x, t)^*,\quad \mathbf{U}(\lambda; x, t)=\sigma_2\mathbf{U}(\lambda^*; x, t)^*\sigma_2,\\
		\mathbf{V}(-\lambda; x, t)=\mathbf{V}(\lambda^*; x, t)^*,\quad \mathbf{V}(\lambda; x, t)=\sigma_2\mathbf{V}(\lambda^*; x, t)^*\sigma_2.
\end{aligned}
	\end{equation}
Then the Jost solution will also satisfy these symmetries $$\mathbf{J}(-\lambda; x, t)=\mathbf{J}(\lambda^*; x, t)^*,\quad \mathbf{J}(\lambda; x, t)=\sigma_2\mathbf{J}(\lambda^*; x, t)^*\sigma_2, \quad \lambda\in\Gamma\setminus \left\{-\ii, \ii\right\}.$$
These symmetries also extend to the scattering data, implying $a(-\lambda)=a(\lambda^*)^*, \bar{a}(-\lambda)=\bar{a}(\lambda^*)^*, b(-\lambda)=b(\lambda^*)^*, \bar{b}(-\lambda)=\bar{b}(\lambda^*)^*, \bar{a}(\lambda; t)=a(\lambda^*; t)^*, \bar{b}(\lambda; t)=b(\lambda^*; t)^*$ at $\lambda\in\Gamma\setminus\left\{-\ii, \ii\right\}.$ Moreover, the symmetry $\mathbf{J}(-\lambda; x, t)=\mathbf{J}(\lambda^*; x, t)^*$ indicates $\mathbf{j}^{-,1}(-\lambda; x, t)=\mathbf{j}^{-,1}(\lambda^*; x, t)^*$ and $\mathbf{j}^{+,2}(-\lambda; x, t)=\mathbf{j}^{+,2}(\lambda^*; x, t)^*$. Substituting this symmetry to the relation \eqref{eq:propor-constant}, the proportionality constants $\gamma_j(j=1,2,\cdots, N)$ satisfy the symmetry 
\begin{equation}\gamma_j=\frac{\mathbf{j}^{-,1}(\xi_j; x, t)}{\mathbf{j}^{+,2}(\xi_j; x, t)}=\frac{\mathbf{j}^{-,1}(-\xi_j^*; x,t)^*}{\mathbf{j}^{+,2}(-\xi_j^*; x, t)^*}. \end{equation}
At the simple pure imaginary zeros, that is $\xi_j=-\xi_j^*(j=1,2,\cdots, N)$, we have the relation $\gamma_j=\gamma_j^*$, thus the proportionality constants $\xi_j$ are all real. 

\emph{Inverse scattering transform} The Beals-Coifman simultaneous solution to the Lax pair Eq.\eqref{eq:Lax-pair} is the following sectional meromorphic function,
\begin{equation}
	\pmb{\phi}^{\rm BC}(\lambda; x, t):=\left\{\begin{aligned}&\left[a(\lambda)^{-1}\mathbf{j}^{-,1}(\lambda; x, t);\quad \mathbf{j}^{+,2}(\lambda; x, t)\right],\quad \lambda\in\mathbb{C}^+\setminus \Sigma_c,\\
		&\left[\mathbf{j}^{+,1}(\lambda; x, t);\quad \bar{a}(\lambda)^{-1}\mathbf{j}^{-,2}(\lambda; x, t)\right],\quad\lambda\in\mathbb{C}^-\setminus \Sigma_c.
		\end{aligned}\right.
	\end{equation}
Then the related normalized matrix value function is set as:
\begin{equation}
	\mathbf{M}^{\rm BC}(\lambda; x, t):=\pmb{\phi}^{BC}(\lambda; x, t)\ee^{\ii\theta(\lambda; x, t)\sigma_3},\quad \lambda\in \Gamma,
	\end{equation}
obviously, $\mathbf{M}^{\rm BC}(\lambda; x, t)$ has simple poles at the points $\left\{\xi_1, \xi_2, \cdots, \xi_N\right\}$ as well as their conjugates $\left\{\xi_1^*, \xi_2^*, \cdots, \xi_N^*\right\}.$ It also satisfies the normalization condition at $\lambda=\infty$,
\begin{equation}
	\lim\limits_{\lambda\to\infty}\mathbf{M}^{\rm BC}(\lambda; x, t)=\mathbb{I}.
	\end{equation}
\emph{Jump conditions of $\mathbf{M}^{\rm BC}(\lambda; x, t)$ across the continuous spectrum $\Gamma$} For $\lambda\in\mathbb{R}$, the continuous boundary values of $\mathbf{M}^{\rm BC}(\lambda; x, t)$ are defined as:
\begin{equation}
	\mathbf{M}^{\rm BC}_{\pm}(\lambda; x, t):=\lim\limits_{\epsilon\to 0}\mathbf{M}^{\rm BC}(\lambda\pm\ii\epsilon; x, t).
	\end{equation}
Using the scattering condition $\mathbf{J}^{+}(\lambda; x, t)=\mathbf{J}^{-}(\lambda; x, t)\mathbf{S}(\lambda)$, we obtain the jump condition on the real axis,
\begin{equation}
	\mathbf{M}^{\rm BC}_{+}(\lambda; x, t)=\mathbf{M}^{\rm BC}_{-}(\lambda; x, t)\ee^{-\ii\theta(\lambda; x, t)\sigma_3}\mathbf{V}^{\mathbb{R}}(\lambda)\ee^{\ii\theta(\lambda; x, t)\sigma_3},
	\end{equation}
where $$\mathbf{V}^{\mathbb{R}}(\lambda):=\begin{bmatrix}1+|R(\lambda)|^2&R(\lambda)^*\\R(\lambda)&1
	\end{bmatrix}, \quad R(\lambda):=\frac{b(\lambda)}{a(\lambda)}.$$
For $\lambda\in\Sigma_c$, the boundary values of $\mathbf{M}^{\rm BC}(\lambda; x, t)$ are defined as
\begin{equation}
	\mathbf{M}^{\rm BC}_{\pm}(\lambda; x, t):=\lim\limits_{\epsilon\to 0}\mathbf{M}^{\rm BC}(\lambda\pm\epsilon; x, t),
	\end{equation}
then we have the following jump condition for $\lambda\in\Sigma_c$, 
\begin{equation}
	\mathbf{M}^{\rm BC}_{+}(\lambda; x, t)=\mathbf{M}^{\rm BC}_{-}(\lambda; x, t)\ee^{-\ii\rho_{-}(\lambda)(x+4\lambda^2 t-2t)\sigma_3}\mathbf{V}^{\downarrow}(\lambda)\ee^{\ii\rho_{+}(\lambda)(x+4\lambda^2 t-2t)\sigma_3},\quad \lambda\in\ii\mathbb{R}^{+}, 0<\Im(\lambda)<1,
	\end{equation}
where 
\begin{equation*}
	\mathbf{V}^{\downarrow}(\lambda):=\begin{bmatrix}\ii \bar{R}_-(\lambda)&-\ii\\-\ii\left(1+R_{-}(\lambda)\bar{R}_{-}(\lambda)\right)&\ii R_{-}(\lambda),
		\end{bmatrix}
	\end{equation*}
and $R_{-}(\lambda)=\lim\limits_{\epsilon\to 0}R(\lambda-\epsilon), \bar{R}_{-}(\lambda)=\lim\limits_{\epsilon\to 0}\bar{R}(\lambda-\epsilon).$ 
On the lower half-plane, we have
\begin{equation}
	\mathbf{M}^{\rm BC}_{+}(\lambda; x, t)=\mathbf{M}^{\rm BC}_{-}(\lambda; x, t)\ee^{-\ii\rho_{-}(\lambda)(x+4\lambda^2 t-2t)\sigma_3}\mathbf{V}^{\uparrow}(\lambda)\ee^{\ii\rho_{+}(\lambda)(x+4\lambda^2 t-2t)\sigma_3},\quad \lambda\in\ii\mathbb{R}^{-},  -1<\Im(\lambda)<0,
\end{equation}
where 
\begin{equation}
	\mathbf{V}^{\uparrow}(\lambda)=\begin{bmatrix}
	-\ii\bar{R}_{-}(\lambda)&-\ii(1+R_{-}(\lambda)\bar{R}_-(\lambda))\\-\ii&-\ii R_{-}(\lambda)
	\end{bmatrix}.
	\end{equation}
At the simple poles $\xi_1, \xi_2, \cdots, \xi_N$, we obtain the following residue condition,
\begin{equation}\label{eq:res}
	\begin{aligned}
&\mathop{\rm Res}\limits_{\lambda=\xi_j} \mathbf{M}^{\rm BC}(\lambda; x, t)=\lim\limits_{\lambda\to\xi_j}\mathbf{M}^{\rm BC}(\lambda; x, t)\ee^{-\ii\theta(\xi_j; x, t)\sigma_3}
\begin{bmatrix}0&0\\
	a'(\xi_j)^{-1}\gamma_j&0
	\end{bmatrix}\ee^{\ii\theta(\xi_j; x, t)\sigma_3},\quad &&j=1,2,\cdots, N,\\
&\mathop{\rm Res}\limits_{\lambda=\xi_j^*} \mathbf{M}^{\rm BC}(\lambda; x, t)=\lim\limits_{\lambda\to\xi_j^*}\mathbf{M}^{\rm BC}(\lambda; x, t)\ee^{-\ii\theta(\xi_j^*; x, t)\sigma_3}
\begin{bmatrix}0&-\bar{a}'(\xi_j^*)^{-1}\gamma_j\\
	0&0
\end{bmatrix}\ee^{\ii\theta(\xi_j^*; x, t)\sigma_3},\quad&& j=1,2,\cdots, N.
\end{aligned}
	\end{equation}
In this paper, we aim to study the asymptotics of the soliton gas, which can be generated from the $N$-soliton solutions as $N\to\infty$. From the jump condition of $\mathbf{M}(\lambda; x, t)$ when $\lambda\in\Gamma$ and the residue condition Eq. \eqref{eq:res}, we can derive the exact $N$-soliton solutions under the nonzero background by choosing $R(\lambda)=0$. In the next section, we will give a detailed description of how to derive a Riemann-Hilbert problem of the soliton gas by taking $N\to\infty$. 
\section{Riemann-Hilbert problem of the soliton gas under the nonzero background}
\label{sec:RHP}
From the jump condition in the continuous spectrum $\Gamma$ and the residue condition Eq.\eqref{eq:res} at these simple poles, we derive the Riemann-Hilbert problem for pure soliton solutions under the nonzero background. In this case, the reflection coefficient $R(\lambda)$ vanishes, and the simple poles $\xi_{j}(j=1,2,\cdots, N)$ are chosen to be pure imaginary. Then we can find $2\times 2$ matrix $\mathbf{M}(\lambda; x, t)$ satisfying the following Riemann-Hilbert problem.
\begin{rhp}\label{rhp:residue}The Riemann-Hilbert problem for the $N$-soliton solutions under the nonzero background.
	\begin{itemize}
	\item $\mathbf{M}(\lambda; x, t)$ is a meromorphic matrix function in $\mathbb{C}\setminus\Sigma_c$, with simple poles at the points $\xi_1, \xi_2, \cdots, \xi_N$ and their conjugates $\xi_1^*, \xi_2^*, \cdots, \xi_N^*.$
	\item $\mathbf{M}(\lambda; x, t)$ satisfies the following residue condition, 
\begin{equation}
	\begin{aligned}
		&\mathop{\rm Res}\limits_{\lambda=\xi_j} \mathbf{M}(\lambda; x, t)=\lim\limits_{\lambda\to\xi_j}\mathbf{M}(\lambda; x, t)\ee^{-\ii\theta(\xi_j; x, t)\sigma_3}
		\begin{bmatrix}0&0\\
			a'(\xi_j)^{-1}\gamma_j&0
		\end{bmatrix}\ee^{\ii\theta(\xi_j; x, t)\sigma_3},\quad &&j=1,2,\cdots, N,\\
		&\mathop{\rm Res}\limits_{\lambda=\xi_j^*} \mathbf{M}(\lambda; x, t)=\lim\limits_{\lambda\to\xi_j^*}\mathbf{M}(\lambda; x, t)\ee^{-\ii\theta(\xi_j^*; x, t)\sigma_3}
		\begin{bmatrix}0&-\bar{a}'(\xi_j^*)^{-1}\gamma_j\\
			0&0
		\end{bmatrix}\ee^{\ii\theta(\xi_j^*; x, t)\sigma_3},\quad &&j=1,2,\cdots, N.
	\end{aligned}
\end{equation}	
\item For $\lambda\in\Sigma_c$, $\mathbf{M}(\lambda; x, t)$ satisfies the jump condition,
\begin{equation}
	\mathbf{M}_{+}(\lambda; x, t)=\mathbf{M}_{-}(\lambda; x, t)\begin{bmatrix}0&\ii\\\ii&0
		\end{bmatrix}, \quad\mathbf{M}_{\pm}(\lambda; x, t):=\lim\limits_{\epsilon\to 0}\mathbf{M}(\lambda\mp\epsilon; x, t).
	\end{equation}
\item As $\lambda\to\infty$, $\mathbf{M}(\lambda; x, t)$ satisfies the normalization condition,
\begin{equation}
	\mathbf{M}(\lambda; x, t)\to\mathbb{I}\quad \text{as}\quad \lambda\to\infty.
	\end{equation}
		\end{itemize}
	Then the potential $q(x, t)$ can be recovered through a limit,
	\begin{equation}
		q(x, t)=2\ii\lim\limits_{\lambda\to\infty}\lambda\mathbf{M}(\lambda; x, t)_{12}.
		\end{equation}
	\end{rhp}
When $N=1$, the solution of the RHP \ref{rhp:residue} can be given as
\begin{equation}
	\mathbf{M}(\lambda; x, t)=\left(\mathbb{I}+\frac{\mathbf{A}(x,t)}{\lambda-\lambda_1}+\frac{\mathbf{B}(x,t)}{\lambda-\lambda_1^*}\right)\mathbf{E}(\lambda).
	\end{equation}
Substituting it to RHP \ref{rhp:residue}, using the residue condition, we have the following relations:
\begin{equation}\label{eq:linear-relation}
	\begin{aligned}
	&\mathbf{A}(x,t)\mathbf{E}(\xi_1)=\lim\limits_{\lambda\to\xi_1}\mathbf{M}(\lambda; x, t)\ee^{-\ii\theta(\xi_1; x, t)\sigma_3}
	\begin{bmatrix}0&0\\
		a'(\xi_1)^{-1}\gamma_j&0
	\end{bmatrix}\ee^{\ii\theta(\xi_1; x, t)\sigma_3},\\
&\mathbf{B}(x,t)\mathbf{E}(\xi_1^*)=\lim\limits_{\lambda\to\xi_1^*}\mathbf{M}(\lambda; x, t)\ee^{-\ii\theta(\xi_1^*; x, t)\sigma_3}
\begin{bmatrix}0&-\bar{a}'(\xi_1^*)^{-1}\gamma_1\\
	0&0
\end{bmatrix}\ee^{\ii\theta(\xi_j^*; x, t)\sigma_3}.
\end{aligned}
	\end{equation}
The left side of Eq.\eqref{eq:linear-relation} is analytic at $\lambda=\xi_1, \lambda=\xi_1^*$ respectively. Taking the laurent expansion of the right-hand side hand, by comparing to the coefficients of $(\lambda-\xi_1)^{-1},  (\lambda-\xi_1)^0, (\lambda-\xi_1^*)^{-1}, (\lambda-\xi_1^*)^0$, we can get a linear equation about the coefficient matrices $\mathbf{A}(x,t), \mathbf{B}(x,t)$, which can be solved uniquely.  By choosing $\xi_1=2\ii, a'(\xi_j)\gamma_j=\beta\ii$, the one soliton under the nonzero background can be given as:
\begin{equation}
	q^{[1]}(x,t)=1+\frac{6}{1-2\cosh\left(2\sqrt{3}(x-18t)+\log(\frac{3}{\beta})\right)}.
\end{equation}
We aim to study the soliton gas under the nonzero background, defined as the limit of $N$-soliton solutions as $N\to\infty$. The Riemann-Hilbert problem \ref{rhp:residue} provides the $N$-soliton solutions under the nonzero background. Next, by incorporating additional assumptions regarding the discrete spectrum $\xi_j$ and the scattering data $a(\lambda)$, we construct a new Riemann-Hilbert problem by taking $N\to\infty$.
\begin{itemize}
	\item
	The simple poles $\left\{\xi_j\right\}_{j=1}^{N}$ are sampled with a density function $\varrho$ such that for each $j=1,2,\cdots, N$, we have the integral identity: 
	$\int_{\ii \eta_1}^{\ii \xi_j}\varrho(\eta)d\eta=\frac{j}{N}$.
	\item The coefficients $a'(\xi_j)^{-1}\gamma_j$ are assumed to be sampled into the formula, 
	\begin{equation}\label{eq:spectral}
		a'(\xi_j)^{-1}\gamma_j=\frac{\ii\left(\eta_2-\eta_1\right)r_1(\xi_j)}{N\pi},\quad j=1,2,\cdots, N,
		\end{equation}
where $r_{1}(\lambda)$ denotes an analytic non-vanishing function in the neighbourhood of these intervals $\left[\ii\eta_1, \ii\eta_2\right]$ and $\left[-\ii\eta_1, -\ii\eta_2\right]$ and satisfies the symmetry relation $r_1^*(\lambda)=r_1(\lambda^*)$. 
To study the soliton gas under the nonzero background using the Riemann-Hilbert problem, it is convenient to convert the residue condition into the jump condition. Let $\gamma_+$ be a closed curve enclosing all the simple poles $\xi_1, \xi_2,\cdots, \xi_N$ but not intersecting with the branch cut $\Sigma_c$, and $\gamma_{+}$ is clockwise in the upper half plane. $\gamma_-$ is the reflection of $\gamma_{+}$ across the real axis, maintaining a clockwise orientation in the lower half plane.  
Define a new matrix function $\mathbf{N}(\lambda; x, t)$ as:
\begin{equation}
	\mathbf{N}(\lambda; x, t):=\left\{\begin{aligned}&\mathbf{M}(\lambda; x, t)\begin{bmatrix}1&0\\
			-\sum\limits_{j=1}^{N}\frac{\ii(\eta_2-\eta_1)r_1(\xi_j)}{N\pi(\lambda-\xi_j)}\ee^{2\ii\theta(\lambda; x, t)}&1
		\end{bmatrix},\quad&&\lambda\in\gamma_{+},\\
		&\mathbf{M}(\lambda; x, t)\begin{bmatrix}1&-\sum\limits_{j=1}^{N}\frac{\ii(\eta_2-\eta_1)r_1(\xi_j^*)}{N\pi(\lambda-\xi_j^*)}\ee^{-2\ii\theta(\lambda; x, t)}\\0&1
		\end{bmatrix},\quad&& \lambda\in\gamma_{-},\\
	&\mathbf{M}(\lambda; x, t),\quad &&\text{otherwise}.
	\end{aligned}\right.
\end{equation}
\end{itemize}
With the definition of the definite integral, we can transform the above sum formula into an integral form. 
\begin{prop}
When $N\to\infty$, the sum formula is equivalent to an integral form for $\lambda\in \mathbb{C}\setminus \mathbb{K}_{+}$, where $\mathbb{K}_{+}$ is an open set that includes the interval $\left[\ii\eta_1, \ii\eta_2\right]$,
\begin{equation}		\lim\limits_{N\to\infty}\sum\limits_{j=1}^N\frac{\ii(\eta_2-\eta_1)r_1(\xi_j)}{N\pi(\lambda-\xi_j)}=\int_{\ii\eta_1}^{\ii\eta_2}\frac{r_1(\xi)}{\pi(\lambda-\xi)}d\xi.
	\end{equation}
For $\lambda\in\mathbb{C}\setminus \mathbb{K}_{-}, $ where $\mathbb{K}_{-}$ is the reflection of $\mathbb{K}_{+}$ along the real axis, we have 
\begin{equation}
	\lim\limits_{N\to\infty}\sum\limits_{j=1}^N\frac{\ii(\eta_2-\eta_1)r_1(\xi_j^*)}{N\pi(\lambda-\xi_j^*)}=\int_{-\ii\eta_2}^{-\ii\eta_1}\frac{r_1(\xi)}{\ii\pi(\lambda-\xi)}d\xi.
	\end{equation}
\end{prop}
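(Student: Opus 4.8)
The plan is to recognize the two sums as Riemann sums for the respective integrals and to make the convergence quantitative using the hypotheses on the sampling. Recall that the poles $\{\ii\xi_j\}$ are placed so that $\int_{\ii\eta_1}^{\ii\xi_j}\varrho(\eta)\dd\eta = j/N$; equivalently, if we write $F(\lambda):=\int_{\ii\eta_1}^{\lambda}\varrho(\eta)\dd\eta$ for the cumulative distribution along the segment $[\ii\eta_1,\ii\eta_2]$, then $\xi_j = F^{-1}(j/N)$, so the points $\xi_j$ are exactly the images under $F^{-1}$ of a uniform partition of $[0,1]$. First I would change variables in the target integral by setting $\xi = F^{-1}(s)$, $s\in[0,1]$, so that $\varrho(\xi)\dd\xi = \dd s$ and
\begin{equation}
\int_{\ii\eta_1}^{\ii\eta_2}\frac{r_1(\xi)}{\pi(\lambda-\xi)}\dd\xi = \int_{0}^{1}\frac{r_1(F^{-1}(s))}{\pi\varrho(F^{-1}(s))(\lambda - F^{-1}(s))}\dd s.
\end{equation}
The right-hand side is an ordinary Riemann integral in $s$ of the continuous integrand $g_\lambda(s):=\dfrac{r_1(F^{-1}(s))}{\pi\varrho(F^{-1}(s))(\lambda-F^{-1}(s))}$, whose right-endpoint Riemann sum on the uniform partition $s_j=j/N$ is exactly $\frac1N\sum_{j=1}^N g_\lambda(j/N) = \sum_{j=1}^N \dfrac{r_1(\xi_j)}{N\pi\varrho(\xi_j)(\lambda-\xi_j)}$.

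This is not yet the sum in the statement, so the second step is to reconcile the weight: the sum in the proposition carries a factor $\ii(\eta_2-\eta_1)/N$ rather than $1/(N\varrho(\xi_j))$. These agree in the limit precisely when the density is normalized so that $\varrho(\xi)\dd\xi$ matches $\dd\xi/(\ii(\eta_2-\eta_1))$ along the segment after accounting for the orientation of $\ii\mathbb{R}^+$; concretely, writing $\xi=\ii\eta$ with $\eta\in[\eta_1,\eta_2]$ one has $\dd\xi=\ii\dd\eta$, and the identification $\int_{\ii\eta_1}^{\ii\xi_j}\varrho\,\dd\eta = j/N$ together with \eqref{eq:spectral} forces the sampled weights $a'(\xi_j)^{-1}\gamma_j$ to be the product of $r_1(\xi_j)$ with the local spacing $\xi_{j+1}-\xi_j \sim \ii(\eta_2-\eta_1)/(N\cdot \text{const})$. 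I would spell out this bookkeeping once, so that the summand is literally $r_1(\xi_j)(\xi_{j}-\xi_{j-1})/(\pi(\lambda-\xi_j)) + o(1/N)$ uniformly, i.e. a genuine Riemann sum for $\int \frac{r_1(\xi)}{\pi(\lambda-\xi)}\dd\xi$ along the contour with its natural orientation.

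The third step is the convergence estimate itself. Fix $\lambda$ in a compact subset of $\mathbb{C}\setminus\mathbb{K}_+$; then $\mathrm{dist}(\lambda,[\ii\eta_1,\ii\eta_2])\ge\delta>0$, so $\xi\mapsto \dfrac{r_1(\xi)}{\pi(\lambda-\xi)}$ is analytic (hence Lipschitz with a constant depending only on $\delta$, $\|r_1\|$, and $\|r_1'\|$ on the segment) on a neighborhood of the segment. The standard error bound for a Riemann sum of a $C^1$ function on a partition of mesh $O(1/N)$ gives
\begin{equation}
\left|\sum_{j=1}^N \frac{\ii(\eta_2-\eta_1)r_1(\xi_j)}{N\pi(\lambda-\xi_j)} - \int_{\ii\eta_1}^{\ii\eta_2}\frac{r_1(\xi)}{\pi(\lambda-\xi)}\dd\xi\right| \le \frac{C(\delta)}{N},
\end{equation}
and the bound is uniform for $\lambda$ in the chosen compact set. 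Letting $N\to\infty$ yields the first identity; the second follows verbatim by applying the complex conjugation symmetry $r_1^*(\lambda)=r_1(\lambda^*)$ and reflecting the contour across $\mathbb{R}$, which sends $\xi_j\mapsto\xi_j^*$, the segment $[\ii\eta_1,\ii\eta_2]$ to $[-\ii\eta_2,-\ii\eta_1]$, and introduces the factor $\ii$ in the denominator exactly as written. The only genuinely delicate point — and the one I would be most careful about — is the uniformity of the Riemann-sum error as $\lambda$ approaches the excluded set $\mathbb{K}_\pm$: the Lipschitz constant of the integrand blows up like $\delta^{-2}$, so the convergence is locally uniform on $\mathbb{C}\setminus\mathbb{K}_+$ but not uniform up to the boundary, which is why the statement is phrased with the open exclusion sets $\mathbb{K}_\pm$ rather than just the segments $\Sigma_c$-adjacent intervals. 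Everything else is the standard definition of the Riemann integral, so I would not belabor it.
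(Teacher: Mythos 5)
Your argument is, at its core, the same as the paper's: recognize the sum as a Riemann sum over the segment $\left[\ii\eta_1,\ii\eta_2\right]$ with mesh $\ii(\eta_2-\eta_1)/N$ and pass to the definite integral. The paper does this in one line; you add two things of genuine value. First, the reconciliation between the sampling density $\varrho$ and the uniform weight $\ii(\eta_2-\eta_1)/N$ is a real issue that the paper silently skips: for a nonuniform $\varrho$ the spacing is $\xi_{j+1}-\xi_j\sim 1/(N\varrho(\xi_j))$, and the stated limit $\int r_1(\xi)/(\pi(\lambda-\xi))\,d\xi$ is obtained only when $\varrho$ is the constant density $1/(\ii(\eta_2-\eta_1))$ (equivalently, when the factor $\ii(\eta_2-\eta_1)\varrho(\xi_j)$ is absorbed into $r_1$); your CDF change of variables makes this explicit, though your resolution of the bookkeeping is left somewhat vague. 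Second, the quantitative $\mathcal{O}(1/N)$ bound with locally uniform (but not boundary-uniform) convergence in $\lambda$ is exactly the right way to justify the restriction to $\mathbb{C}\setminus\mathbb{K}_{\pm}$.

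One concrete error: your claim that reflecting across $\mathbb{R}$ \emph{introduces the factor $\ii$ in the denominator exactly as written} is false. Conjugating the first identity and using $r_1(\xi^*)=r_1(\xi)^*$ gives
\begin{equation*}
\lim_{N\to\infty}\sum_{j=1}^{N}\frac{\ii(\eta_2-\eta_1)\,r_1(\xi_j^*)}{N\pi(\lambda-\xi_j^*)}=\int_{-\ii\eta_2}^{-\ii\eta_1}\frac{r_1(\xi)}{\pi(\lambda-\xi)}\,d\xi,
\end{equation*}
with $\pi$, not $\ii\pi$, in the denominator --- which is also what the paper's own direct computation of the second sum produces. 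The $\ii\pi$ in the displayed statement of the proposition is evidently a typo (and is inconsistent with the subsequent use of this jump in the Riemann--Hilbert problem for $\mathbf{N}$); you should derive the correct formula and flag the discrepancy rather than manufacture a spurious factor of $\ii$ to match it.
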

\begin{proof}
A direct calculation yields the following relation, 
\begin{equation}
	\lim\limits_{N\to\infty}\sum\limits_{j=1}^N\frac{\ii(\eta_2-\eta_1)r_1(\xi_j)}{N\pi(\lambda-\xi_j)}=	\lim\limits_{N\to\infty}\sum\limits_{j=1}^N\frac{\ii\eta_2-\ii\eta_1}{N}\frac{r_1(\xi_j)}{\pi(\lambda-\xi_j)}=	\lim\limits_{N\to\infty}\sum\limits_{j=1}^N\Delta \xi\frac{r_1(\xi_j)}{\pi(\lambda-\xi_j)}=\int_{\ii\eta_1}^{\ii\eta_2}\frac{r_1(\xi)}{\pi(\lambda-\xi)}d\xi.
	\end{equation}
The last one follows from the definition of the definite integral, converting the Riemann sum to the Riemann Stieltjes integral. 
Similarly, for $\lambda\in\mathbb{C}\setminus \mathbb{K}_{-}$, we have 
\begin{equation}
	\lim\limits_{N\to\infty}\sum\limits_{j=1}^N\frac{\ii(\eta_2-\eta_1)r_1(\xi_j^*)}{N\pi(\lambda-\xi_j^*)}=	\lim\limits_{N\to\infty}\sum\limits_{j=1}^N\frac{-\ii\eta_1-(-\ii\eta_2)}{N}\frac{r_1(\xi_j^*)}{\pi(\lambda-\xi_j^*)}=	\lim\limits_{N\to\infty}\sum\limits_{j=1}^N\Delta\xi\frac{r_1(\xi_j^*)}{\pi(\lambda-\xi_j^*)}=\int_{-\ii\eta_2}^{-\ii\eta_1}\frac{r_1(\xi)}{\pi(\lambda-\xi)}d\xi,
	\end{equation}
	\end{proof}
Building upon the above Proposition, we establish a new Riemann-Hilbert problem for $\mathbf{N}(\lambda; x, t)$.
\begin{rhp}Seek a $2\times 2$ matrix function $\mathbf{N}(\lambda; x, t)$ that satisfies the following conditions.
	\begin{itemize}
		\item 
{\bf Analyticity:}   $\mathbf{N}(\lambda; x, t)$ is analytic for $\lambda\in\mathbb{C}\setminus \left(\Sigma_c\ \cup \partial \gamma_+\cup\partial \gamma_-\right).$
\item {\bf Jump condition:} $\mathbf{N}(\lambda; x, t)$ takes continuous boundary values $\mathbf{N}_{\pm}(\lambda; x, t)$ on $\partial \gamma_{\pm}\cup \Sigma_c$, and they are related by the following jump conditions,
\begin{equation}
	\begin{aligned}
	&\mathbf{N}_{+}(\lambda; x, t)=\mathbf{N}_{-}(\lambda; x, t)\begin{bmatrix}1&0\\\int_{\ii\eta_1}^{\ii\eta_2}\frac{r_1(\xi)}{\pi(\lambda-\xi)}d\xi\ee^{2\ii\theta(\lambda; x, t)}&1
		\end{bmatrix},\qquad\quad&& \lambda\in\partial \gamma_+,\\
	&\mathbf{N}_{+}(\lambda; x, t)=\mathbf{N}_{-}(\lambda; x, t)\begin{bmatrix}1&\int_{-\ii\eta_2}^{-\ii\eta_1}\frac{r_1(\xi)}{\pi(\lambda-\xi)}d\xi\ee^{-2\ii\theta(\lambda; x, t)}\\0&1
		\end{bmatrix},\quad&&\lambda\in\partial \gamma_-,\\
	&\mathbf{N}_{+}(\lambda;  x, t)=\mathbf{N}_{-}(\lambda; x, t)\begin{bmatrix}0&\ii\\
		\ii&0
		\end{bmatrix},\qquad\qquad\qquad\qquad\qquad\qquad\quad\quad\,&&\lambda\in\Sigma_c.
	\end{aligned}
	\end{equation}  
\item 
{\bf Normalization:} When $\lambda\to\infty$, we have
\begin{equation}
	\lim\limits_{\lambda\to\infty}\mathbf{N}(\lambda; x, t)\to\mathbb{I}.
	\end{equation}
\end{itemize}
	\end{rhp}
To eliminate the jump conditions on the boundaries of $\gamma_{\pm}$ and convert them to the intervals $\left[\ii\eta_1, \ii\eta_2\right]$ and $\left[-\ii\eta_1, -\ii\eta_2\right]$, we introduce a new matrix function $\mathbf{P}(\lambda; x, t)$ defined as:
\begin{equation}\label{eq:jump-P}
\mathbf{P}(\lambda; x, t):=	\left\{
\begin{aligned}&\mathbf{N}(\lambda; x, t)\begin{bmatrix}1&0\\\int_{\ii\eta_1}^{\ii\eta_2}\frac{r_1(\xi)}{\pi(\lambda-\xi)}d\xi\ee^{2\ii\theta(\lambda; x, t)}&1
\end{bmatrix},\qquad\quad &&\lambda\in\gamma_+,\\
&\mathbf{N}(\lambda; x, t)\begin{bmatrix}1&\int_{-\ii\eta_2}^{-\ii\eta_1}\frac{r_1(\xi)}{\pi(\lambda-\xi)}d\xi\ee^{-2\ii\theta(\lambda; x, t)}\\0&1
\end{bmatrix},\quad&&\lambda\in \gamma_-,\\
&\mathbf{N}(\lambda; x, t),\quad &&\text{otherwise}.
\end{aligned}
\right.
	\end{equation}  
It is evident that $\mathbf{P}(\lambda; x, t)$ has no jump at $\partial \gamma_{\pm}$, however, it produces a new jump at $\lambda\in\left[\ii\eta_1, \ii\eta_2\right]$ and $\lambda\in\left[-\ii\eta_1,-\ii\eta_2\right]$. By applying the Plemelj formula, the jump conditions of $\mathbf{P}(\lambda; x, t)$ are as follows, 
\begin{equation}\label{eq:jump-P-1}
\begin{aligned}
	&\mathbf{P}_{+}(\lambda; x, t)=\mathbf{P}_{-}(\lambda; x, t)\begin{bmatrix}1&0\\-2\ii r_1(\lambda)\ee^{2\ii\theta(\lambda; x, t)}&1,
		\end{bmatrix},\quad &&\lambda\in\left[\ii\eta_1, \ii\eta_2\right],\\
	&\mathbf{P}_{+}(\lambda; x, t)=\mathbf{P}_{-}(\lambda; x, t)\begin{bmatrix}
		1&-2\ii r_1(\lambda)\ee^{-2\ii\theta(\lambda; x, t)}\\
		0&1
		\end{bmatrix},\quad &&\lambda\in\left[-\ii\eta_1,-\ii\eta_2\right],\\
	&\mathbf{P}_{+}(\lambda; x, t)=\mathbf{P}_{-}(\lambda)\begin{bmatrix}0&\ii\\\ii&0
		\end{bmatrix},\quad&&\lambda\in\Sigma_c.
	\end{aligned}
	\end{equation}
Here, let us assume that all the contours $\left[\ii\eta_1, \ii\eta_2\right], \left[-\ii\eta_1, -\ii\eta_2\right], \Sigma_c$ are oriented upwards. Based on the jump conditions of $\mathbf{P}(\lambda; x, t)$, 
we find that it is well-suited for steepest-descent analysis for large $x$ and large $t$. For the large $x$ asymptotics, we should introduce a new variable $\tilde{\xi}=\frac{t}{x}$ and  rewrite the phase term $\theta(\lambda; x, t)$ as $\theta(\lambda; x, t):=x\left(\rho(\lambda)\left(1+(4\lambda^2-2)\tilde{\xi}\right)\right)$. For the large $t$ asymptotics, we introduce a different variable $\xi=\frac{x}{t}$ and rewrite the phase term $\theta(\lambda; x, t)$ as $\theta(\lambda; x, t)=t\left(\rho(\lambda)\left(\xi+4\lambda^2-2\right)\right)$. In the next two sections, we will study these two distinct asymptotic behaviors. 
 For the large $x$ asymptotics, we fix $t=0$ and study the large $x$ behavior. For the large $t$ asymptotics, we study the large $t$ behavior to distinguish it from the first one. In the first case, the large $x$ asymptotics includes two asymptotic lines; and in the second case, the large $t$ asymptotics includes three asymptotic regions, which can be vividly described by the following sketch in Fig. \ref{fig:xtplane}.
 \begin{figure}[ht]
 	\centering
 	\includegraphics[width=0.8\textwidth]{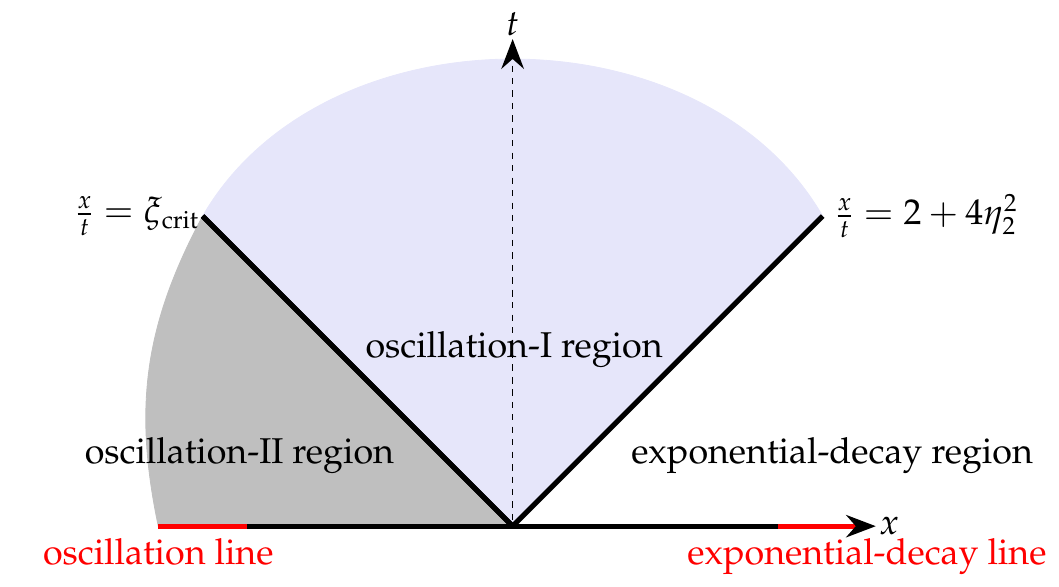}
 	\caption{Above the $x$-axis, there are two oscillation regions and one exponential-decay region in the $(x,t)$-plane, which is divided by the values of the variable $\frac{x}{t}$, where $\xi_{\rm crit}$ is defined in Eq.\eqref{eq:xi-crit} and $\eta_2$ is defined in Eq.\eqref{eq:spectral}. On the $x$-axis, there is one oscillation line and one exponential-decay line.}
 	\label{fig:xtplane}
 \end{figure}
\section{Large $x$ asymptotics of the potential $q(x,t)$ at $t=0$}
\label{sec:largex}
In this section, we prepare to study the large $x$ asymptotics at the fixed $t=0$ via the Deift-Zhou nonlinear steepest-descent method, which involves two subcases, one is the asymptotics as $x\to+\infty$ and the other one is $x\to-\infty$. When $t=0$, the jump conditions of $\mathbf{P}(\lambda; x, t)$ become simple ones, 
\begin{equation}\label{eq:jump-P}
	\begin{aligned}
		&\mathbf{P}_{+}(\lambda; x, 0)=\mathbf{P}_{-}(\lambda; x, 0)\begin{bmatrix}
			1&0\\
			-2\ii r_1(\lambda)\ee^{2\ii\rho(\lambda)x}&1
		\end{bmatrix},\quad&&\lambda\in\left[\ii\eta_1, \ii\eta_2\right],\\
		&\mathbf{P}_{+}(\lambda; x, 0)=\mathbf{P}_{-}(\lambda; x, 0)\begin{bmatrix}
			1&-2\ii r_1(\lambda)\ee^{-2\ii\rho(\lambda)x}\\
			0&1
		\end{bmatrix},\quad&&\lambda\in\left[-\ii\eta_1, -\ii\eta_2\right],\\
		&\mathbf{P}_{+}(\lambda; x, 0)=\mathbf{P}_{-}(\lambda; x, 0)\begin{bmatrix}0&\ii\\\ii&0
		\end{bmatrix},\quad&&\lambda\in\Sigma_c.
	\end{aligned}
\end{equation}
Then we give the detailed asymptotic analysis in the next two subsections. 
\subsection{The asymptotics of the potential $q(x, 0)$ as $x\to+\infty$}
In this subsection, we will study the asymptotics as $x\to+\infty$. Before studying the asymptotics, we first present a lemma to demonstrate some properties of $\mathbf{P}(\lambda; x, 0)$. 
\begin{lemma}
	When $x\to+\infty$, the first two jump matrices in Eq.\eqref{eq:jump-P} approach to the identity matrix exponentially.
\end{lemma}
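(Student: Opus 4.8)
The plan is to reduce the statement to a direct estimate of the two scalar off-diagonal entries $-2\ii r_1(\lambda)\ee^{2\ii\rho(\lambda)x}$ and $-2\ii r_1(\lambda)\ee^{-2\ii\rho(\lambda)x}$ appearing in the first and second lines of Eq.\eqref{eq:jump-P}, since in both jump matrices the diagonal entries are already equal to $1$. The only structural input needed is the sign of $\ii\rho(\lambda)$ (resp. $-\ii\rho(\lambda)$) along the imaginary axis, together with the fact that the spectral support is bounded away from the branch cut $\Sigma_c=[-\ii,\ii]$, i.e. $1<\eta_1<\eta_2$, which holds because the discrete spectrum lies in $\ii\mathbb{R}^{+}\setminus\Sigma_c$.

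First I would parametrise the interval $[\ii\eta_1,\ii\eta_2]$ by $\lambda=\ii s$ with $s\in[\eta_1,\eta_2]\subset(1,\infty)$. Because $\rho(\lambda)$ is analytic on $\mathbb{C}\setminus\Sigma_c$ with $\rho(\lambda)=\lambda+\mathcal{O}(\lambda^{-1})$ as $\lambda\to\infty$, continuing along the positive imaginary axis from $\infty$ down to $\ii\eta_2$ gives $\rho(\ii s)=\ii\sqrt{s^2-1}$ for $s>1$, hence $2\ii\rho(\ii s)=-2\sqrt{s^2-1}<0$. Therefore, for $x>0$,
\begin{equation*}
\bigl|{-2\ii r_1(\ii s)\ee^{2\ii\rho(\ii s)x}}\bigr|=2\,|r_1(\ii s)|\,\ee^{-2\sqrt{s^2-1}\,x}\le 2\Bigl(\max_{[\ii\eta_1,\ii\eta_2]}|r_1|\Bigr)\ee^{-2\sqrt{\eta_1^2-1}\,x},
\end{equation*}
where the maximum is finite because $r_1$ is analytic, hence continuous, on a neighbourhood of the compact interval. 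Since $\sqrt{\eta_1^2-1}>0$, the right-hand side tends to $0$ exponentially and uniformly in $\lambda\in[\ii\eta_1,\ii\eta_2]$, so the first jump matrix in Eq.\eqref{eq:jump-P} converges to $\mathbb{I}$ exponentially fast.

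For the interval $[-\ii\eta_1,-\ii\eta_2]$ I would invoke the oddness $\rho(-\lambda)=-\rho(\lambda)$, which follows from $\rho(\lambda)^2=\lambda^2+1$ and the normalisation at $\infty$ and is valid here because the interval avoids $\Sigma_c$; this yields $\rho(-\ii s)=-\ii\sqrt{s^2-1}$ and hence $-2\ii\rho(-\ii s)=-2\sqrt{s^2-1}<0$. The same computation then bounds $\bigl|{-2\ii r_1(-\ii s)\ee^{-2\ii\rho(-\ii s)x}}\bigr|$ by $2\bigl(\max|r_1|\bigr)\ee^{-2\sqrt{\eta_1^2-1}\,x}$, so the second jump matrix likewise converges to $\mathbb{I}$ exponentially. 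Combining the two estimates proves the lemma. I do not expect any genuine obstacle here; the only point that requires care is correctly fixing the branch of $\rho$ on the imaginary axis so that the exponent is truly negative as $x\to+\infty$, and recording that $\eta_1>1$ is what makes the decay rate bounded below away from zero, hence uniform on the whole interval.
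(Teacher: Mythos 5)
Your proof is correct and follows essentially the same route as the paper: parametrise the intervals by $\lambda=\pm\ii s$ with $s\in[\eta_1,\eta_2]$, fix the branch of $\rho$ via its normalisation at infinity so that the exponent becomes $-2\sqrt{s^2-1}\,x<0$, and conclude exponential decay. Your added remarks on the uniformity of the bound (using $\eta_1>1$ and the boundedness of $r_1$ on the compact interval) are a welcome refinement but do not change the argument.
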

\begin{proof}
	Set $\lambda=\ii b(1<\eta_1<b<\eta_2)$, with the asymptotics of $\rho(\lambda)\to\lambda$ as $\lambda\to\infty$, the factor $\ee^{2\ii\rho(\lambda)x}$ transforms into $\ee^{-2\sqrt{b^2-1}x}$, which exponentially decays as $x\to+\infty$. Additionally, due to the symmetry property, for $\lambda=-\ii b, $ we have $\ee^{-2\ii\rho(\lambda)x}=\ee^{-2\sqrt{b^2-1}x}$, also exhibiting exponential decay. 
\end{proof}
Then we can calculate the large $x$ asymptotics at $t=0$ when $x\to+\infty$ directly. Since the first two jump matrices in Eq.\eqref{eq:jump-P} approach to the identity matrix, the leading-order term of the asymptotics as $x\to+\infty$ will be determined by the third jump condition in Eq.\eqref{eq:jump-P}. By applying the Plemelj formula to the third jump condition of $\mathbf{P}(\lambda; x, 0)$, we obtain the following result, 
\begin{equation}
	\mathbf{P}(\lambda; x, 0)=\ee^{\ii\pi/4\sigma_3}\mathbf{C}\left(\frac{\lambda-\ii}{\lambda+\ii}\right)^{\sigma_3/4}\mathbf{C}^{-1}\ee^{-\ii\pi/4\sigma_3},\quad\mathbf{C}:=\frac{1}{\sqrt{2}}\begin{bmatrix}1&1\\\ii&-\ii
	\end{bmatrix}.
\end{equation}
Then the large $x$ asymptotics at $x\to+\infty$ is shown in the following theorem.
\begin{theorem}(Large $x$ asymptotics as $x\to+\infty$) For $t=0$, when $x\to+\infty$, the asymptotics of the soliton gas for the mKdV equation can be given as: 
\begin{equation}\label{eq:q}
	q(x,0)=2\ii\lim\limits_{\lambda\to\infty}	\lambda \mathbf{P}(\lambda; x, 0)_{12}=1+\mathcal{O}(\ee^{-x\mu}), (\mu>0). 
\end{equation} 
\end{theorem}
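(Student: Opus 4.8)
The plan is to read off the asymptotics directly from the formula for $\mathbf{P}(\lambda;x,0)$ that was already obtained via the Plemelj formula. First I would note that the preceding lemma shows that as $x\to+\infty$ the jump matrices on $[\ii\eta_1,\ii\eta_2]$ and $[-\ii\eta_1,-\ii\eta_2]$ differ from the identity by quantities bounded by $\mathcal{O}(\ee^{-2\sqrt{\eta_1^2-1}\,x})$ uniformly on those contours (the slowest-decaying point being the endpoint nearest the branch cut). Hence, by the standard small-norm theory for Riemann-Hilbert problems, the solution $\mathbf{P}(\lambda;x,0)$ of the full problem equals the solution $\mathbf{P}^{(\infty)}(\lambda)$ of the model problem with only the $\Sigma_c$-jump retained, up to a multiplicative error $\mathbb{I}+\mathcal{O}(\ee^{-x\mu})$ for any $0<\mu<2\sqrt{\eta_1^2-1}$, with the error uniform for $\lambda$ in a neighbourhood of $\infty$ and in particular admitting a full asymptotic expansion in powers of $\lambda^{-1}$.

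Next I would extract the $\lambda^{-1}$ coefficient of the model solution. Expanding
\begin{equation}
\left(\frac{\lambda-\ii}{\lambda+\ii}\right)^{\sigma_3/4}=\mathbb{I}-\frac{\ii}{2\lambda}\sigma_3+\mathcal{O}(\lambda^{-2}),
\end{equation}
conjugating by $\mathbf{C}$ and then by $\ee^{\ii\pi\sigma_3/4}$, and taking the $(1,2)$ entry gives $\lim_{\lambda\to\infty}\lambda\,\mathbf{P}^{(\infty)}(\lambda)_{12}=-\tfrac{\ii}{2}$, whence $2\ii\lim_{\lambda\to\infty}\lambda\,\mathbf{P}^{(\infty)}(\lambda)_{12}=1$. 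Combining this with the small-norm estimate, $q(x,0)=2\ii\lim_{\lambda\to\infty}\lambda\,\mathbf{P}(\lambda;x,0)_{12}=1+\mathcal{O}(\ee^{-x\mu})$, which is the claimed identity; note this is consistent with the boundary condition \eqref{eq:bc}, confirming the constant background is recovered as $x\to+\infty$.

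The main technical point — really the only one — is justifying the passage from "the jump matrices are exponentially close to $\mathbb{I}$" to "the solution is exponentially close to the model solution, with control near $\lambda=\infty$." This requires checking that the model RHP (only the $\Sigma_c$ jump) is solvable with an invertible, bounded solution — which it is, by the explicit formula above — and then applying the standard Beals-Coifman / small-norm argument: writing $\mathbf{P}=\mathbf{R}\,\mathbf{P}^{(\infty)}$, the error matrix $\mathbf{R}$ solves a RHP with jump supported on $[\pm\ii\eta_1,\pm\ii\eta_2]$ and jump matrix $\mathbb{I}+\mathcal{O}(\ee^{-x\mu})$ in $L^2\cap L^\infty$, so $\|\mathbf{R}-\mathbb{I}\|=\mathcal{O}(\ee^{-x\mu})$ and the same bound propagates to the coefficients of its large-$\lambda$ expansion. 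One should also remark that $\mathbf{P}^{(\infty)}$ has the correct behaviour at the endpoints $\pm\ii$ (fourth-root singularities, hence locally integrable) so that the contour $[\pm\ii\eta_1,\pm\ii\eta_2]$, bounded away from $\pm\ii$, causes no endpoint difficulty. Modulo these standard ingredients the theorem follows immediately.
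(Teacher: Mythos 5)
Your proposal is correct and follows essentially the same route as the paper: the lemma gives exponential decay of the jumps on $\left[\pm\ii\eta_1,\pm\ii\eta_2\right]$, the model problem retaining only the $\Sigma_c$ jump is solved explicitly as $\ee^{\ii\pi\sigma_3/4}\mathbf{C}\bigl(\tfrac{\lambda-\ii}{\lambda+\ii}\bigr)^{\sigma_3/4}\mathbf{C}^{-1}\ee^{-\ii\pi\sigma_3/4}$, and its $(1,2)$ coefficient yields $q(x,0)=1+\mathcal{O}(\ee^{-x\mu})$. The only difference is that you spell out the small-norm/error argument (and the explicit rate $2\sqrt{\eta_1^2-1}$) that the paper leaves implicit, which is a welcome but not substantively different addition.
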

However, when $x\to-\infty$, the first two jump matrices of $\mathbf{P}(\lambda; x, t)$ are exponentially large, resulting in a significantly different behavior for the corresponding asymptotics of $q(x,0)$. Consequently, the asymptotic expression provided in Eq.\eqref{eq:q} becomes invalid for $x\to-\infty$. Therefore, we must employ a new approach to address this scenario, as detailed in the subsequent subsection. 

\subsection{The asymptotics of the potential $q(x,0)$ as $x\to-\infty$ }
Compared to the large $x$ asymptotics as $x\to+\infty$, the behavior as $x\to-\infty$ becomes significantly more complex. The method used in the last subsection is no longer applicable, as the exponential terms in the jump conditions of $\mathbf{P}(\lambda; x, 0)$ become large. Thus we need to introduce a new way  to transform the triangularity matrices into a converse form. Beforehand, we first give a  triangular decomposition for the jump matrices,
\begin{equation}\label{eq:jump-decom}
	\begin{aligned}
		&\begin{bmatrix}
			1&0\\-2\ii r_1(\lambda)&1
			\end{bmatrix}=\begin{bmatrix}1&\frac{1}{-2\ii r_1(\lambda)}\\0&1
			\end{bmatrix}\begin{bmatrix}0&\frac{1}{2\ii r_1(\lambda)}\\
			-2\ii r_1(\lambda)&0
			\end{bmatrix}\begin{bmatrix}1&\frac{1}{-2\ii r_1(\lambda)}\\0&1
		\end{bmatrix},\\
	&\begin{bmatrix}1&-2\ii r_1(\lambda)\\0&1
		\end{bmatrix}=\begin{bmatrix}1&0\\\frac{1}{-2\ii r_1(\lambda)}&1
		\end{bmatrix}\begin{bmatrix}0&-2\ii r_1(\lambda)\\
		\frac{1}{2\ii r_1(\lambda)}&0
		\end{bmatrix}\begin{bmatrix}1&0\\\frac{1}{-2\ii r_1(\lambda)}&1
	\end{bmatrix}.
		\end{aligned}
	\end{equation}
Considering the phase term $\ee^{\pm2\ii\theta(\lambda; x, t)\sigma_3}$ in the jump matrix of $\mathbf{P}(\lambda; x, t)$ contains the problematic factor $\rho(\lambda)$, which is not analytic for $\lambda\in\Sigma_c$. We define a new matrix function $\mathbf{Q}(\lambda; x, t)$ to eliminate this factor, 
\begin{equation}
	\mathbf{Q}(\lambda; x, t):=\mathbf{P}(\lambda; x, t)\ee^{-\ii\theta(\lambda; x, t)\sigma_3}\ee^{\ii\lambda(x+4\lambda^2 t)\sigma_3},
\end{equation}
then we have the following Riemann-Hilbert problem. 
\begin{rhp}Seek a $2\times 2$ matrix value function $\mathbf{Q}(\lambda; x, t)$ with the following properties. 
	\begin{itemize}
		\item 
		{\bf Analyticity:} $\mathbf{Q}(\lambda; x, t)$ is analytic for $\lambda\in\mathbb{C}\setminus \left(\left[\ii\eta_1, \ii\eta_2\right]\cup \left[-\ii\eta_1, -\ii\eta_2\right]\cup\Sigma_c\right)$.
		\item 
		{\bf Jump condition:} $\mathbf{Q}(\lambda; x, t)$ takes continuous boundary values on $\left[\ii\eta_1, \ii\eta_2\right]\cup \left[-\ii\eta_1, -\ii\eta_2\right]\cup\Sigma_c$, and they are related by the jump conditions:
		\begin{equation}\label{eq:jump-Q}
			\begin{aligned}
				&\mathbf{Q}_{+}(\lambda; x, t)=\mathbf{Q}_{-}(\lambda; x, t)\ee^{-\ii\lambda(x+4\lambda^2 t)\sigma_3}\begin{bmatrix}1&0\\-2\ii r_1(\lambda)&1
				\end{bmatrix}\ee^{\ii\lambda(x+4\lambda^2 t)\sigma_3},\quad &&\lambda\in\left[\ii\eta_1, \ii\eta_2\right],\\
				&\mathbf{Q}_{+}(\lambda; x, t)=\mathbf{Q}_{-}(\lambda; x, t)\ee^{-\ii\lambda(x+4\lambda^2 t)\sigma_3}\begin{bmatrix}1&-2\ii r_1(\lambda)\\0&1
				\end{bmatrix}\ee^{\ii\lambda(x+4\lambda^2 t)\sigma_3},\quad &&\lambda\in\left[-\ii\eta_1,-\ii\eta_2\right],\\
				&\mathbf{Q}_{+}(\lambda; x, t)=\mathbf{Q}_{-}(\lambda; x, t)\ee^{-\ii\lambda(x+4\lambda^2 t)\sigma_3}\begin{bmatrix}0&\ii\\\ii&0
				\end{bmatrix}\ee^{\ii\lambda(x+4\lambda^2 t)\sigma_3},\quad&&\lambda\in\Sigma_c.
			\end{aligned}
		\end{equation} 
		\item {\bf Normalization:} As $\lambda\to\infty$, we have $$\lim\limits_{\lambda\to\infty}\mathbf{Q}(\lambda; x, t)\to\mathbb{I}.$$
	\end{itemize}
	Then the potential $q(x,t)$ can be recovered by 
	\begin{equation}
		q(x, t)=2\ii\lim\limits_{\lambda\to\infty}\lambda\mathbf{Q}(\lambda; x, t)_{12}.
	\end{equation}
\end{rhp}
When $t=0$, the jump matrices of $\mathbf{Q}(\lambda; x, t)$ become the following case,
	\begin{equation}\label{eq:jump-Q}
	\begin{aligned}
		&\mathbf{Q}_{+}(\lambda; x, 0)=\mathbf{Q}_{-}(\lambda; x, 0)\ee^{-\ii\lambda x\sigma_3}\begin{bmatrix}1&0\\-2\ii r_1(\lambda)&1
		\end{bmatrix}\ee^{\ii\lambda x\sigma_3},\quad &&\lambda\in\left[\ii\eta_1, \ii\eta_2\right],\\
		&\mathbf{Q}_{+}(\lambda; x, 0)=\mathbf{Q}_{-}(\lambda; x, 0)\ee^{-\ii\lambda x\sigma_3}\begin{bmatrix}1&-2\ii r_1(\lambda)\\0&1
		\end{bmatrix}\ee^{\ii\lambda x\sigma_3},\quad &&\lambda\in\left[-\ii\eta_1,-\ii\eta_2\right],\\
		&\mathbf{Q}_{+}(\lambda; x, 0)=\mathbf{Q}_{-}(\lambda; x, 0)\ee^{-\ii\lambda x\sigma_3}\begin{bmatrix}0&\ii\\\ii&0
		\end{bmatrix}\ee^{\ii\lambda x\sigma_3},\quad&&\lambda\in\Sigma_c.
	\end{aligned}
\end{equation} 
In the case of large positive $x$, we compute the large $x$ asymptotics from $\mathbf{P}(\lambda; x, 0)$. While in this case, we should introduce a $g$-function to simultaneously modify these three jump matrices of $\mathbf{Q}(\lambda; x, 0)$. Subsequently, we apply the Deift-Zhou nonlinear steepest descent method to obtain the asymptotics. For brevity, we denote the intervals as follows: $\left[\ii\eta_1, \ii\eta_2\right]=\Sigma_{+}, \left[\ii, \ii\eta_1\right]=\Gamma_{+}, \left[-\ii,-\ii\eta_1\right]=\Gamma_{-}, \left[-\ii\eta_1, -\ii\eta_2\right]=\Sigma_{-}$. Then we formulate the Riemann-Hilbert problem for the $g$-function. 
\begin{rhp}($g$-function in the large negative $x$)
	\begin{itemize}
		\item {\bf Analyticity:} $g$-function is analytic for $\lambda\in\mathbb{C}\setminus\left(\Sigma_{\pm}\cup\Gamma_{\pm}\cup\Sigma_c\right)$.
		\item {\bf Jump condition:} $g$-function takes continuous values on these intervals, which are related by the following jump conditions,
		\begin{equation}
			\begin{aligned}
		&g_{+}(\lambda)+g_{-}(\lambda)=2\ii\lambda+\kappa_1,\quad&&\lambda\in\Sigma_{+},\\
		&g_{+}(\lambda)-g_{-}(\lambda)=d_1, \quad&&\lambda\in\Gamma_{+},\\
		&g_{+}(\lambda)+g_{-}(\lambda)=2\ii\lambda+\kappa_2,\quad&&\lambda\in\Sigma_c,\\
		&g_{+}(\lambda)-g_{-}(\lambda)=d_2, \quad&&\lambda\in\Gamma_{-},\\
		&g_{+}(\lambda)+g_{-}(\lambda)=2\ii\lambda+\kappa_3,\quad&&\lambda\in\Sigma_{-}.
		\end{aligned}
	\end{equation}
\item {\bf Normalization:} As $\lambda\to\infty$, $g(\lambda)$ satisfies the normalization condition, 
\begin{equation}
	g(\lambda)\to\mathcal{O}(\lambda^{-1}).
	\end{equation}
\item {\bf Symmetry:} $g(\lambda)$ has the symmetry condition: $g(\lambda)=-g(\lambda^*)^*.$
		\end{itemize}
	\end{rhp}  
With this symmetry  $g(\lambda)=-g(\lambda^*)^*$, the integral constants $\kappa_3=\kappa_1, d_2=d_1$ are all pure imaginary.

To solve this Riemann-Hilbert problem, we introduce a $R(\lambda)$ function with the branch cut at $\Sigma_{\pm}\cup\Sigma_c$, which is defined as follows: 
\begin{equation}\label{eq:R-function}
	R(\lambda):=\sqrt{(\lambda-\ii)(\lambda+\ii)(\lambda+\ii\eta_1)(\lambda-\ii\eta_1)(\lambda+\ii\eta_2)(\lambda-\ii\eta_2)}.
\end{equation}
This function $R(\lambda)$ is related to an algebraic curve of genus-two on the Riemann surface. Let $\alpha_1, \alpha_2, \beta_1, \beta_2$ be a homology basis on the Riemann surface such that they satisfy the intersection properties $\alpha_1\cdot\alpha_2=\beta_1\cdot\beta_2=\alpha_1\cdot\beta_2=\alpha_2\cdot\beta_1=0,\alpha_1\cdot\beta_1=\alpha_2\cdot\beta_2=1,$ see Fig.\ref{fig:Rie-sur}.
\begin{figure}[ht]
	\centering
	\includegraphics[width=0.3\textwidth,angle=-90]{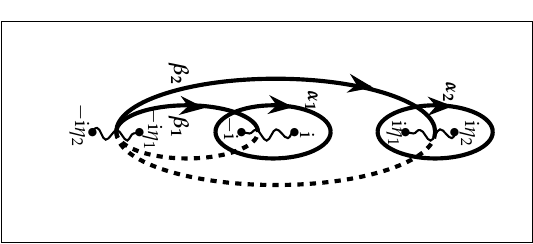}
	\caption{Homology cycles on the Riemann surface. The snake curves represent the branch cuts. The solid curves are on the first sheet and the dashed curves are on the second sheet.}
	\label{fig:Rie-sur}
\end{figure}

By differentiating this Riemann-Hilbert problem, we have 
\begin{equation}
	\begin{aligned}
		g_{+}'(\lambda)+g_{-}'(\lambda)=2\ii,\quad\lambda\in\left(\Sigma_{\pm}\cup\Sigma_c\right).
	\end{aligned}
	\end{equation} 
Utilizing the defined $R(\lambda)$, we have
\begin{equation}
	\left(g'(\lambda)R(\lambda)\right)_{+}-\left(g'(\lambda)R(\lambda)\right)_{-}=\left(2\ii R(\lambda)\right)_{+},
	\end{equation}
which can be solved by the Plemelj formula, 
\begin{equation}
	g'(\lambda)R(\lambda)=\frac{1}{2\pi\ii}\int_{\Sigma_{\pm}\cup\Sigma_c}\frac{2\ii R(\xi)}{\xi-\lambda}d\xi+\tilde{c}_1\lambda+c_0, 
	\end{equation}
where these last two added factors ensure the normalization condition as $\lambda\to\infty$. Moreover, with the generalized residue theorem, $g'(\lambda)R(\lambda)$ can be expressed as
\begin{equation}
	g'(\lambda)R(\lambda)=\ii R(\lambda)-\ii\lambda^3-\ii(\eta_1^2+\eta_2^2+1)\lambda/2+\ii\tilde{c}_1\lambda-\ii c_0:=\ii R(\lambda)-\ii\lambda^3-\ii c_1\lambda-\ii c_0,
	\end{equation}
where $c_1=(\eta_1^2+\eta_2^2+1)/2-\tilde{c}_1. $ 
Then we have
\begin{equation}
	g'(\lambda)=\ii-\frac{\ii\lambda^3+\ii c_1\lambda+\ii c_0}{R(\lambda)},
	\end{equation}
with the normalization as $\lambda\to\infty$, $g(\lambda)$ can be defined by the integration from infinity:
\begin{equation}
	g(\lambda):=\int_{\infty}^{\lambda}\left(\ii-\frac{\ii\xi^3+\ii c_1\xi+\ii c_0}{R(\xi)}\right)d\xi.
	\end{equation} 
When $\lambda\in\Sigma_{+}$, the integral constant $\kappa_1$ is calculated as 
\begin{equation}
	\kappa_1=2\int^{\ii\eta_2}_{\infty}\left(\ii-\frac{\ii\xi^3+\ii c_1\xi+\ii c_0}{R(\xi)}\right)d\xi+2\eta_2.
	\end{equation}
When $\lambda$ is in the other intervals, the other integral constants can be expressed in a similar integral form:
\begin{equation}\label{eq:inter-cons}
	\begin{aligned}
	d_1&=-2\int_{\ii\eta_2}^{\ii\eta_1}\frac{\ii\xi^3+\ii c_1\xi+\ii c_0}{R(\xi)}d\xi,\quad\quad
	\kappa_2=\kappa_1-2\int_{\ii\eta_1}^{\ii}\frac{\ii\xi^3+\ii c_1\xi+\ii c_0}{R(\xi)}d\xi,\\
	d_2&=-2\int_{-\ii\eta_2}^{-\ii\eta_1}\frac{\ii\xi^3+\ii c_1\xi+\ii c_0}{R(\xi)}d\xi, \quad\,\,\kappa_3=2\int_{\infty}^{-\ii\eta_2}\left(\ii-\frac{\ii\xi^3+\ii c_1\xi+\ii c_0}{R(\xi)}\right)d\xi-2\eta_2.
	\end{aligned}
	\end{equation}
With the constraint condition $d_2=d_1$, we know $c_0=0$. To calculate the other parameter $c_1$, we impose the integral constants $\kappa_2, \kappa_1$ as pure imaginary. Then we have  $\Re\left(\int_{\ii\eta_1}^{\ii}\frac{\ii\xi^3+\ii c_1\xi}{R(\xi)}d\xi\right)=0$. It is evident that this integral $\int_{\ii\eta_1}^{\ii}\frac{\ii\xi^3+\ii c_1\xi}{R(\xi)}d\xi$ is a real number, then we get a relation concerning the parameter $c_1$,
\begin{equation}\label{eq:cons-cond-1}
	\int_{\ii\eta_1}^{\ii}\frac{\ii\xi^3+\ii c_1\xi}{R(\xi)}d\xi=0,
	\end{equation}
thus we have
\begin{equation}\label{eq:c1}
	c_1=-\frac{\int_{\ii\eta_1}^{\ii}\frac{\xi^3}{R(\xi)}d\xi}{\int_{\ii\eta_1}^{\ii}\frac{\xi}{R(\xi)}d\xi}.
	\end{equation}
Moreover, we find that these integral constants $\kappa_1=\kappa_2=\kappa_3$ possess some special characteristic, as illustrated in the following proposition. 
\begin{prop}\label{prop:inter-cons}
The integral constants $\kappa_1=\kappa_2=\kappa_3$ are all zero.
\end{prop}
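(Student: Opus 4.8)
\emph{Plan.} I would deduce the vanishing of all three constants from the parity of the $g$-function under $\lambda\mapsto-\lambda$, combined with the conjugation symmetry $g(\lambda)=-g(\lambda^{*})^{*}$ that has already been recorded. The mechanism is that these two symmetries together force each $\kappa_{j}$ to be simultaneously purely imaginary (already known) and real.

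\emph{Step 1: parity of $R$ and of $g'$.} Since $R(\lambda)^{2}=(\lambda^{2}+1)(\lambda^{2}+\eta_{1}^{2})(\lambda^{2}+\eta_{2}^{2})$ is an even polynomial and the branch set $\Sigma_{+}\cup\Sigma_{c}\cup\Sigma_{-}$ is invariant under $\lambda\mapsto-\lambda$, the two functions $R(-\lambda)$ and $-R(\lambda)$ are analytic square roots of the same polynomial on the same cut plane with the same behaviour $\sim\lambda^{3}$ at infinity, so $R(-\lambda)=-R(\lambda)$. Because $c_{0}=0$, the numerator $\ii\lambda^{3}+\ii c_{1}\lambda$ in $g'(\lambda)=\ii-(\ii\lambda^{3}+\ii c_{1}\lambda)/R(\lambda)$ is odd, hence $g'$ is an even function of $\lambda$; consequently its expansion at infinity contains only even powers of $\lambda^{-1}$, so the residue of $g'$ at infinity vanishes. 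Since the whole jump set $\Sigma_{+}\cup\Gamma_{+}\cup\Sigma_{c}\cup\Gamma_{-}\cup\Sigma_{-}$ is exactly the segment $[-\ii\eta_{2},\ii\eta_{2}]$, the primitive $g(\lambda)=\int_{\infty}^{\lambda}g'(\xi)\,\dd\xi$ is single-valued and analytic on the connected set $\mathbb{C}\setminus[-\ii\eta_{2},\ii\eta_{2}]$.

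\emph{Step 2: $g$ is odd.} The function $\lambda\mapsto g(\lambda)+g(-\lambda)$ is analytic on $\mathbb{C}\setminus[-\ii\eta_{2},\ii\eta_{2}]$ (this set is connected and invariant under negation) and has derivative $g'(\lambda)-g'(-\lambda)=0$, so it is constant; letting $\lambda\to\infty$ and using $g(\lambda)=\mathcal{O}(\lambda^{-1})$ shows the constant is $0$, i.e. $g(-\lambda)=-g(\lambda)$.

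\emph{Step 3: evaluation at the branch points and conclusion.} Near $\pm\ii\eta_{2}$ the function $R$ has a square-root zero, so $g'$ is integrable there and $g$ is continuous; hence the jump relation $g_{+}+g_{-}=2\ii\lambda+\kappa_{1}$ on $\Sigma_{+}$ evaluated at $\lambda=\ii\eta_{2}$ gives $\kappa_{1}=2g(\ii\eta_{2})+2\eta_{2}$, and $g_{+}+g_{-}=2\ii\lambda+\kappa_{3}$ on $\Sigma_{-}$ evaluated at $\lambda=-\ii\eta_{2}$ gives $\kappa_{3}=2g(-\ii\eta_{2})-2\eta_{2}=-2g(\ii\eta_{2})-2\eta_{2}$ by the oddness of $g$. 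Therefore $\kappa_{1}+\kappa_{3}=0$, and since it has already been shown that $\kappa_{1}=\kappa_{2}=\kappa_{3}$, we conclude $\kappa_{1}=\kappa_{2}=\kappa_{3}=0$. (Equivalently, oddness of $g$ together with $g(\ii\eta_{2})=-g((\ii\eta_{2})^{*})^{*}$ gives $g(\ii\eta_{2})=g(\ii\eta_{2})^{*}$, so $\kappa_{1}=2g(\ii\eta_{2})+2\eta_{2}$ is real; being also purely imaginary, it must vanish.) The parity and local-behaviour checks are routine; the point that genuinely needs care is the single-valuedness of $g$ on the slit plane—this is what legitimizes the step ``vanishing derivative $\Rightarrow$ constant'' in Step 2 and relies precisely on the absence of a $\lambda^{-1}$ term in the large-$\lambda$ expansion of $g'$—together with keeping the orientation conventions on $\Sigma_{\pm}$ consistent with the stated jumps so that the $\pm2\eta_{2}$ terms carry the signs used above.
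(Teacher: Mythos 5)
Your argument is correct, and it is genuinely different from the paper's. The paper reduces the claim to $\kappa_3=0$ (using the same inputs you cite: the relations \eqref{eq:cons-cond-1}, \eqref{eq:inter-cons} and the Schwarz symmetry $g(\lambda)=-g(\lambda^*)^*$), and then computes $\kappa_3$ via a Riemann bilinear identity on the genus-two curve, pairing $\Omega=-2\int_{\lambda}^{-\ii\eta_2}(\ii\zeta^3+\ii c_1\zeta)R(\zeta)^{-1}\dd\zeta$ with $\dd\widehat{\Omega}=\lambda^2R(\lambda)^{-1}\dd\lambda$ so that $\kappa_3$ appears as the $\lambda^{-1}$ coefficient of $\Omega\,\dd\widehat{\Omega}$ at infinity; the surviving period products are then killed by the normalization defining $c_1$ and by the evenness of the curve. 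You instead exploit the $\lambda\mapsto-\lambda$ symmetry directly at the level of $g$: with $c_0=0$ the density $g'$ is even, the absence of a $\lambda^{-1}$ term at infinity makes the primitive single-valued on the slit plane $\mathbb{C}\setminus[-\ii\eta_2,\ii\eta_2]$, hence $g$ is odd, and evaluating the band relations at the endpoints $\pm\ii\eta_2$ — equivalently, comparing the paper's own formulas in \eqref{eq:inter-cons}, which read $\kappa_1=2g(\ii\eta_2)+2\eta_2$ and $\kappa_3=2g(-\ii\eta_2)-2\eta_2$ — gives $\kappa_3=-\kappa_1$, which together with $\kappa_3=\kappa_1$ (or with "purely imaginary" plus your reality observation) forces all three constants to vanish. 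Your route is shorter and avoids the Abelian-differential machinery, at the price of leaning entirely on the symmetry of the band configuration about the origin; the bilinear-relation computation is the template that survives when the bands are not symmetric under $\lambda\mapsto-\lambda$, although as written in the paper it also invokes that symmetry to annihilate the $\alpha_1$- and $\beta_2$-periods, so in this setting both proofs rest on the same structural facts. One presentational remark: the chain $\kappa_1=\kappa_2=\kappa_3$ is formally part of the Proposition; you import it as already known, which is defensible since $\kappa_3=\kappa_1$ is stated after the $g$-RHP and $\kappa_2=\kappa_1$ follows from \eqref{eq:cons-cond-1} with $c_0=0$, but you should state that derivation explicitly rather than only appeal to it, and your care about single-valuedness of the primitive (no residue of $g'$ at infinity) is exactly the point that needed to be checked.
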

\begin{proof}
Combining Eq.\eqref{eq:cons-cond-1}, Eq.\eqref{eq:inter-cons} and the symmetry condition $g(\lambda)=-g(\lambda^*)^*$, we deduce that $\kappa_3=\kappa_2=\kappa_1$, thus we only need to prove that one of them is zero. Let us take $\kappa_3$ as an example to demonstrate this. For brevity, we rewrite the integral $\kappa_3$ as a limit formula:
\begin{equation}
\kappa_3=-2\lim\limits_{\lambda\to\infty}\int_{\lambda}^{-\ii\eta_2}\frac{\ii\xi^3+\ii c_1\xi}{R(\xi)}d\xi-2\ii\lambda.
\end{equation}
Suppose functions $\Omega$ and $\widehat{\Omega}$ are holomorphic on the Riemann surface $\pmb{\chi}$. According to the Riemann bilinear relation, we have 
\begin{equation}\label{eq:Rie-bili}
0=\int_{\pmb{\chi}}d\left(\Omega d\widehat{\Omega}\right)=\oint_{\partial \pmb{\chi}}\Omega d\widehat{\Omega}=\sum\limits_{j=1}^{2}(A_j\widehat{B}_{j}-B_j\widehat{A}_{j}),
\end{equation}
where $A_j=\oint_{\alpha_j}d\Omega,\widehat{A}_j=\oint_{\alpha_j}d\widehat{\Omega}, B_j=\oint_{\beta_j}d\Omega, \widehat{B}_j=\oint_{\beta_j}d\widehat{\Omega}$. In fact, we want to use a differential with singularities at $\lambda\to\infty$ to prove $\kappa_3=0$. Consequently, the Riemann bilinear formula Eq.\eqref{eq:Rie-bili} should be modified by adding an integral $\Omega d\widehat{\Omega}$ along a clockwise path at the singularities $\lambda\to\infty$, that is
\begin{equation}\label{eq:Rie-bili-2}
0=\sum\limits_{j=1}^{2}(A_j\widehat{B}_{j}-B_j\widehat{A}_{j})+\oint_{\ell+\iota(\ell)}\Omega d\widehat{\Omega}.
\end{equation}
Take $\Omega=-2\int_{\lambda}^{-\ii\eta_2}\frac{\ii\zeta^3+\ii c_1\zeta}{R(\zeta)}d\zeta, d\widehat{\Omega}=\frac{\lambda^2}{R(\lambda)}d\lambda$. Expanding $\Omega d\widehat{\Omega}$ at $\lambda\to\infty$, we have 
\begin{equation}
\Omega d\widehat{\Omega}=\left(2\ii+\kappa_3\lambda^{-1}+\mathcal{O}(\lambda^{-2})\right)d\lambda.
\end{equation}
Using the Riemann bilinear relation Eq.\eqref{eq:Rie-bili-2}, we have
\begin{multline}\label{eq:cal-kappa3}
2\oint_{\alpha_1}\frac{\ii\lambda^3+\ii c_1\lambda}{R(\lambda)}d\lambda \oint_{\beta_1}\frac{\lambda^2}{R(\lambda)}d\lambda-2\oint_{\beta_1}\frac{\ii\lambda^3+\ii c_1\lambda}{R(\lambda)}d\lambda \oint_{\alpha_1}\frac{\lambda^2}{R(\lambda)}d\lambda+2\oint_{\alpha_2}\frac{\ii\lambda^3+\ii c_1\lambda}{R(\lambda)}d\lambda \oint_{\beta_2}\frac{\lambda^2}{R(\lambda)}d\lambda\\-2\oint_{\beta_2}\frac{\ii\lambda^3+\ii c_1\lambda}{R(\lambda)}d\lambda \oint_{\alpha_2}\frac{\lambda^2}{R(\lambda)}d\lambda+4\pi\ii\kappa_3=0.
\end{multline}
From Eq.\eqref{eq:cons-cond-1}, we know that $\oint_{\beta_1}\frac{\ii\lambda^3+\ii c_1\lambda}{R(\lambda)}d\lambda=\oint_{\beta_2}\frac{\ii\lambda^3+\ii c_1\lambda}{R(\lambda)}d\lambda=0$. Then Eq.\eqref{eq:cal-kappa3} is simplified into
\begin{equation}
2\oint_{\alpha_1}\frac{\ii\lambda^3+\ii c_1\lambda}{R(\lambda)}d\lambda \oint_{\beta_1}\frac{\lambda^2}{R(\lambda)}d\lambda+2\oint_{\alpha_2}\frac{\ii\lambda^3+\ii c_1\lambda}{R(\lambda)}d\lambda \oint_{\beta_2}\frac{\lambda^2}{R(\lambda)}d\lambda+4\pi\ii\kappa_3=0.
\end{equation}
Moreover, with the symmetry relation of homology cycles on Riemann surface (See Fig.\ref{fig:Rie-sur}), the first integral $\oint_{\alpha_1}\frac{\ii\lambda^3+\ii c_1\lambda}{R(\lambda)}d\lambda$ and the second integral $\oint_{\beta_2}\frac{\lambda^2}{R(\lambda)}d\lambda$ are both zero. Therefore, we conclude that the integral constant $\kappa_3$ is zero. 
\end{proof}

With this $g$-function, we begin to deform the jump matrices to separate the exponential factor with the triangular decomposition in Eq.\eqref{eq:jump-decom}. Let us define 
\begin{equation}
\mathbf{T}(\lambda; x, 0):=\left\{\begin{aligned}
	&f_{\infty}^{-\sigma_3}\mathbf{Q}(\lambda; x, 0)\ee^{-\ii\lambda x\sigma_3}\begin{bmatrix}1&\frac{1}{-2\ii r_1(\lambda)}\\0&1
		\end{bmatrix}\ee^{\ii\lambda x\sigma_3}\ee^{-xg(\lambda)\sigma_3}f(\lambda)^{\sigma_3},\quad&&\lambda\in R_{+},\\
	&f_{\infty}^{-\sigma_3}\mathbf{Q}(\lambda; x, 0)\ee^{-\ii\lambda x\sigma_3}\begin{bmatrix}1&\frac{1}{2\ii r_1(\lambda)}\\
	0&1\end{bmatrix}\ee^{\ii\lambda x\sigma_3}\ee^{-xg(\lambda)\sigma_3}f(\lambda)^{\sigma_3},\quad&&\lambda\in L_{+},\\
&f_{\infty}^{-\sigma_3}\mathbf{Q}(\lambda; x, 0)\ee^{-\ii\lambda x\sigma_3}\begin{bmatrix}1&0\\\frac{1}{-2\ii r_1(\lambda)}&1
\end{bmatrix}\ee^{\ii\lambda x\sigma_3}\ee^{-xg(\lambda)\sigma_3}f(\lambda)^{\sigma_3},\quad&&\lambda\in R_{-},\\
	&f_{\infty}^{-\sigma_3}\mathbf{Q}(\lambda; x, 0)\ee^{-\ii\lambda x\sigma_3}\begin{bmatrix}1&0\\\frac{1}{2\ii r_1(\lambda)}&
1\end{bmatrix}\ee^{\ii\lambda x\sigma_3}\ee^{-xg(\lambda)\sigma_3}f(\lambda)^{\sigma_3},\quad&&\lambda\in L_{-},\\
&f_{\infty}^{-\sigma_3}\mathbf{Q}(\lambda; x, 0)\ee^{-xg(\lambda)\sigma_3}f(\lambda)^{\sigma_3},\quad&&\text{otherwise}.
	\end{aligned}\right.
	\end{equation}
The new factor $f(\lambda)$ is introduced to eliminate $r_{1}(\lambda)$, satisfying the following Riemann-Hilbert problem, 
\begin{rhp}(The Riemann-Hilbert problem of $f(\lambda)$)
	\begin{itemize}
		\item {\bf Analyticity:} $f(\lambda)$ is analytic for $\lambda\in\mathbb{C}\setminus \left(\Sigma_{\pm}\cup \Gamma_{\pm}\cup\Sigma_c\right)$.
		\item {\bf Jump condition:} For $\lambda\in\left(\Sigma_{\pm}\cup \Gamma_{\pm}\cup\Sigma_c\right),$ $f_{\pm}(\lambda)$ are related to the following jump conditions:
		\begin{equation}
			\begin{aligned}
				f_{+}(\lambda)f_{-}(\lambda)&=\frac{1}{2\ii r_1(\lambda)},\quad&&\lambda\in\Sigma_{+},\\
				f_{+}(\lambda)f_{-}(\lambda)&=\ii,\quad&&\lambda\in\Sigma_{c},\\
				f_{+}(\lambda)f_{-}(\lambda)&=-2\ii r_1(\lambda), \quad&&\lambda\in\Sigma_{-},\\
				\frac{f_{+}(\lambda)}{f_{-}(\lambda)}&=\ee^{\Delta_1},\quad&&\lambda\in\Gamma_{+},\\
				\frac{f_{+}(\lambda)}{f_{-}(\lambda)}&=\ee^{\Delta_2},\quad&&\lambda\in\Gamma_{-}.
				\end{aligned}
			\end{equation}
		\item {\bf Normalization:} As $\lambda\to\infty$, we have $f(\lambda)\to f_{\infty}+\mathcal{O}(\lambda^{-1})$.
		\end{itemize}
	\end{rhp}
To solve this Riemann-Hilbert problem, we take the logarithm of the above $f(\lambda)$ function, that is
\begin{equation}
	\begin{aligned}
		\log\left(f_{+}(\lambda)\right)+\log\left(f_{-}(\lambda)\right)&=-\log\left(2\ii r_1(\lambda)\right),&&\quad\lambda\in\Sigma_{+},\\
		\log\left(f_{+}(\lambda)\right)+\log\left(f_{-}(\lambda)\right)&=\log(\ii)=\frac{\pi}{2}\ii,&&\quad \lambda\in\Sigma_c,\\
		\log\left(f_{+}(\lambda)\right)+\log\left(f_{-}(\lambda)\right)&=\log(-2\ii r_1(\lambda)),&&\quad\lambda\in\Sigma_{-},\\
		\log\left(f_{+}(\lambda)\right)-\log\left(f_{-}(\lambda)\right)&=\Delta_1,&&\quad\lambda\in \Gamma_{+},\\
		\log\left(f_{+}(\lambda)\right)-\log\left(f_{-}(\lambda)\right)&=\Delta_2,&&\quad\lambda\in\Gamma_{-},
		\end{aligned}
	\end{equation}
by using the function $R(\lambda)$, the expression $\log\left(f(\lambda)\right)$ can be solved through the Plemelj formula, as follows:  
\begin{multline}
	\log\left(f(\lambda)\right)=\frac{R(\lambda)}{2\pi\ii}\Bigg(\int_{\Sigma_{+}}-\frac{\log(2\ii r_1(\xi))}{R(\xi)(\xi-\lambda)}d\xi+\int_{\Sigma_{c}}\frac{\frac{\pi}{2}\ii}{R(\xi)(\xi-\lambda)}d\xi+\int_{\Sigma_{-}}\frac{\log(-2\ii r_1(\xi))}{R(\xi)(\xi-\lambda)}d\xi\\+\int_{\Gamma
_{+}}\frac{\Delta_1}{R(\xi)(\xi-\lambda)}d\xi+\int_{\Gamma_{-}}\frac{\Delta_2}{R(\xi)(\xi-\lambda)}d\xi\Bigg).
	\end{multline}
The normalization condition $f(\lambda)\to f_{\infty}+\mathcal{O}(\lambda^{-1})$ leads to the derivation of two equations regarding $\Delta_1$ and $\Delta_2$, 
\begin{equation}
\begin{aligned}
	&\int_{\Sigma_{+}}-\frac{\log(2\ii r_1(\xi))}{R(\xi)}d\xi+\int_{\Sigma_{c}}\frac{\frac{\pi}{2}\ii}{R(\xi)}d\xi+\int_{\Sigma_{-}}\frac{\log(-2\ii r_1(\xi))}{R(\xi)}d\xi+\int_{\Gamma
		_{+}}\frac{\Delta_1}{R(\xi)}d\xi+\int_{\Gamma_{-}}\frac{\Delta_2}{R(\xi)}d\xi=0,\\
	&\int_{\Sigma_{+}}-\frac{\log(2\ii r_1(\xi))}{R(\xi)}\xi d\xi+\int_{\Sigma_{c}}\frac{\frac{\pi}{2}\ii}{R(\xi)}\xi d\xi+\int_{\Sigma_{-}}\frac{\log(-2\ii r_1(\xi))}{R(\xi)}\xi d\xi+\int_{\Gamma
		_{+}}\frac{\Delta_1}{R(\xi)}\xi d\xi+\int_{\Gamma_{-}}\frac{\Delta_2}{R(\xi)}\xi d\xi=0,
	\end{aligned}
	\end{equation}
which can uniquely determine the unknown parameters $\Delta_1, \Delta_2$. As $\lambda\to\infty$, $f_{\infty}$ can be represented in the following integral form, 
\begin{equation}\label{eq:finfinity}
f_{\infty}=\ee^{\frac{1}{2\pi\ii}\left(\int_{\Sigma_{+}}\frac{\log(2\ii r_1(\xi))}{R(\xi)}\xi^2 d\xi-\int_{\Sigma_{c}}\frac{\frac{\pi}{2}\ii}{R(\xi)}\xi^2 d\xi-\int_{\Sigma_{-}}\frac{\log(-2\ii r_1(\xi))}{R(\xi)}\xi^2 d\xi-\int_{\Gamma
		_{+}}\frac{\Delta_1}{R(\xi)}\xi^2 d\xi-\int_{\Gamma_{-}}\frac{\Delta_2}{R(\xi)}\xi^2 d\xi\right)}.
\end{equation}
Then the jump conditions of  $\mathbf{T}(\lambda; x, 0)$ can be rewritten as another equivalent forms, 
\begin{equation}\label{eq:jump-T}
	\begin{aligned}
	&\mathbf{T}_{+}(\lambda; x, 0)=\mathbf{T}_{-}(\lambda; x, 0)\begin{bmatrix}
		1&-\frac{1}{2\ii r_1(\lambda)}\ee^{-2\ii\lambda x+2xg(\lambda)}\frac{1}{f^2}\\
		0&1
	\end{bmatrix},&&\quad\lambda\in C_{L}^{+},\\
&\mathbf{T}_{+}(\lambda; x, 0)=\mathbf{T}_{-}(\lambda; x, 0)\begin{bmatrix}0&1\\
	-1&0
	\end{bmatrix}, &&\quad\lambda\in\Sigma_{+},\\
&\mathbf{T}_{+}(\lambda; x, 0)=\mathbf{T}_{-}(\lambda; x, 0)\begin{bmatrix}1&-\frac{1}{2\ii r_1(\lambda)}\ee^{-2\ii\lambda x+2xg(\lambda)}\frac{1}{f^2}\\
	0&1
	\end{bmatrix},&&\quad\lambda\in C_{R}^{+},\\
&\mathbf{T}_{+}(\lambda; x, 0)=\mathbf{T}_{-}(\lambda; x, 0)\begin{bmatrix}\ee^{-xd_1+\Delta_1}&0\\0&\ee^{xd_1-\Delta_1}
	\end{bmatrix},&&\quad\lambda\in\Gamma_{+},\\
&\mathbf{T}_{+}(\lambda; x, 0)=\mathbf{T}_{-}(\lambda; x, 0)\begin{bmatrix}
0&1\\
-1&0	\end{bmatrix},&&\quad\lambda\in\Sigma_c,\\
&\mathbf{T}_{+}(\lambda; x, 0)=\mathbf{T}_{-}(\lambda; x, 0)\begin{bmatrix}\ee^{-xd_1+\Delta_2}&0\\
	0&\ee^{xd_1-\Delta_2}
	\end{bmatrix},&&\quad\lambda\in\Gamma_{-},\\
&\mathbf{T}_{+}(\lambda; x, 0)=\mathbf{T}_{-}(\lambda; x, 0)\begin{bmatrix}
	1&0\\\frac{1}{-2\ii r_1(\lambda)}\ee^{2\ii\lambda x-2xg(\lambda)}f^2&1
\end{bmatrix},&&\quad\lambda\in C_{L}^{-},\\
&\mathbf{T}_{+}(\lambda; x, 0)=\mathbf{T}_{-}(\lambda; x, 0)\begin{bmatrix}
	1&0\\\frac{1}{-2\ii r_1(\lambda)}\ee^{2\ii\lambda x-2xg(\lambda)}f^2&1
\end{bmatrix},&&\quad\lambda\in C_{R}^{-},\\
&\mathbf{T}_{+}(\lambda; x, 0)=\mathbf{T}_{-}(\lambda; x, 0)\begin{bmatrix}0&1\\
	-1&0
\end{bmatrix}, &&\quad\lambda\in\Sigma_{-}.
\end{aligned}
	\end{equation}
The jump contours of $\mathbf{T}(\lambda; x, 0)$ are illustrated in Fig.\ref{fig:jump-T}.
\begin{figure}[ht]
	\centering
	\includegraphics[width=0.5\textwidth]{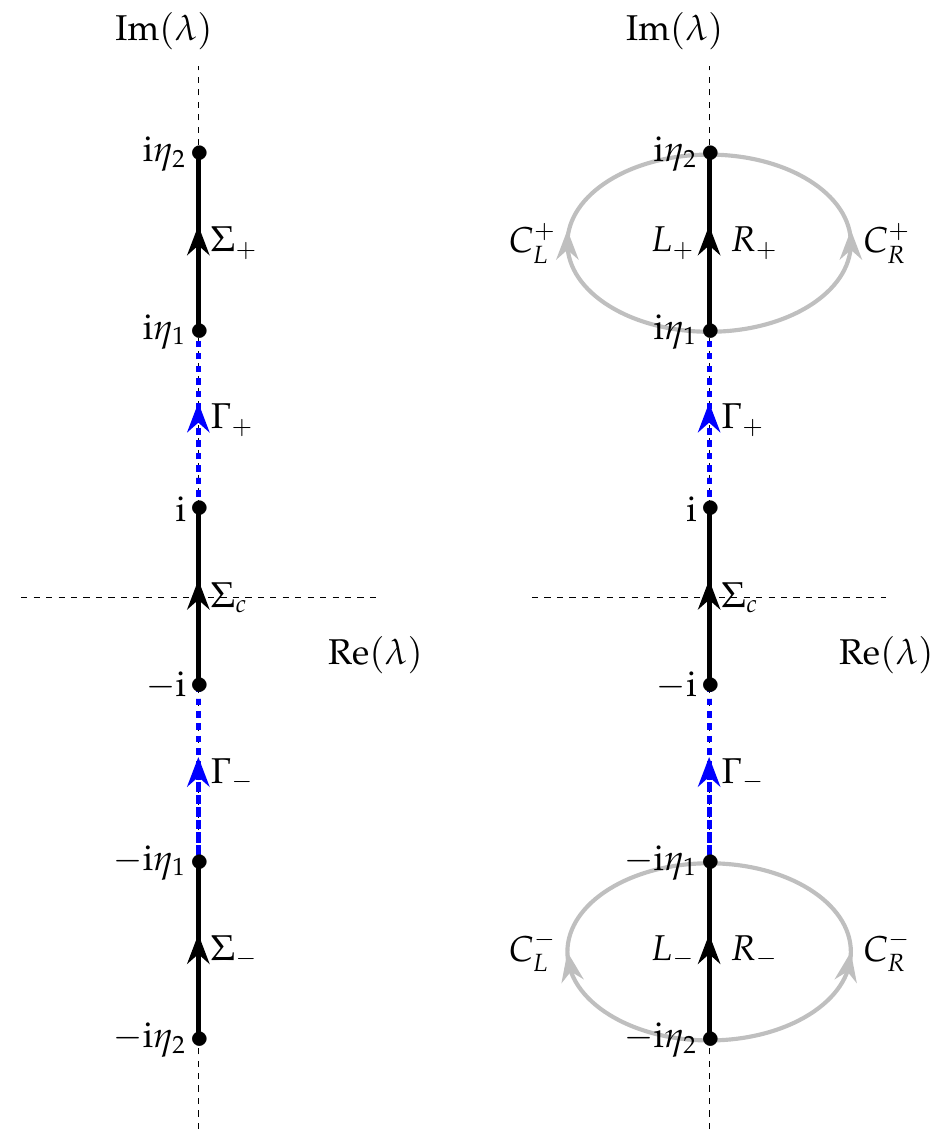}
	\caption{The left panel is the jump contours of the $g$-function and the right one is the jump contours of $\mathbf{T}(\lambda; x, 0)$. }
	\label{fig:jump-T}
\end{figure}
For large negative $x$, the exponential terms of $\mathbf{T}(\lambda; x, 0)$ will decay to zero in these corresponding regions, as demonstrated by Lemma \ref{lemma:2}. 
\begin{lemma}\label{lemma:2}
In the contours $C_{L}^{+}, C_{R}^{+}$, $\Re(-2\ii\lambda+2g(\lambda))>0$. Conversely, in the contours $C_{L}^{-}, C_{R}^{-},\Re(-2\ii\lambda+2g(\lambda))<0$. 
\end{lemma}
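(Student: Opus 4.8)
The plan is to reduce everything to the scalar phase $h(\lambda):=-2\ii\lambda+2g(\lambda)$. By \eqref{eq:jump-T}, on $C_L^{+}\cup C_R^{+}$ the $(1,2)$‑entry of the jump matrix of $\mathbf{T}(\lambda;x,0)$ equals $-\tfrac{1}{2\ii r_1(\lambda)}f(\lambda)^{-2}\ee^{x h(\lambda)}$, while on $C_L^{-}\cup C_R^{-}$ its $(2,1)$‑entry equals $-\tfrac{1}{2\ii r_1(\lambda)}f(\lambda)^{2}\ee^{-x h(\lambda)}$; since $r_1$ and $f$ are bounded and nonvanishing there, the lemma is exactly the statement that these entries are exponentially small as $x\to-\infty$, i.e. $\Re h>0$ on $C_L^{+}\cup C_R^{+}$ and $\Re h<0$ on $C_L^{-}\cup C_R^{-}$. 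From $g'(\lambda)=\ii-\dfrac{\ii\lambda^{3}+\ii c_1\lambda}{R(\lambda)}$ (using $c_0=0$) one gets
\begin{equation*}
h'(\lambda)=-\frac{2\ii\lambda\,(\lambda^{2}+c_1)}{R(\lambda)},\qquad h(\lambda)=-2\ii\lambda+\mathcal{O}(\lambda^{-1})\ \ (\lambda\to\infty).
\end{equation*}
Moreover $c_1$ is real with $c_1\in(1,\eta_1^{2})$: on $\Gamma_{+}$ the radicand $(1-t^{2})(\eta_1^{2}-t^{2})(\eta_2^{2}-t^{2})$ is negative for $t\in(1,\eta_1)$, so $R(\ii t)$ is purely imaginary there, which makes both integrals in \eqref{eq:c1} purely imaginary and gives $c_1=\big(\int_{1}^{\eta_1}t^{3}|R(\ii t)|^{-1}\dd t\big)\big/\big(\int_{1}^{\eta_1}t|R(\ii t)|^{-1}\dd t\big)$, a weighted mean of $t^{2}$ over $(1,\eta_1)$.

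First I would show $\Re h\equiv 0$ on $\Sigma_{+}$. Proposition \ref{prop:inter-cons} gives $\kappa_1=0$; since $\kappa_1=2g(\ii\eta_2)+2\eta_2$, this forces $g(\ii\eta_2)=-\eta_2$, hence $h(\ii\eta_2)=2\eta_2-2\eta_2=0$. On $\Sigma_{+}$, with $\lambda=\ii b$, $b\in(\eta_1,\eta_2)$, the radicand $(1-b^{2})(\eta_1^{2}-b^{2})(\eta_2^{2}-b^{2})$ is positive, so $R_{\pm}(\ii b)$ are real (with $R_{+}=-R_{-}$); combined with $c_1<\eta_1^{2}<b^{2}$ this shows $h'_{\pm}(\ii b)=\dfrac{2b\,(c_1-b^{2})}{R_{\pm}(\ii b)}$ is real and nonzero. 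Hence $\tfrac{\dd}{\dd b}h(\ii b)=\ii\,h'(\ii b)$ is purely imaginary, so $\Re h$ is constant along $\Sigma_{+}$, equal to $\Re h(\ii\eta_2)=0$.

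Next I would fix the sign of $\Re h$ on the two sides of $\Sigma_{+}$. Expanding at an interior point $\ii b_0\in\Sigma_{+}$ gives $h(\ii b_0+z)=h_{\pm}(\ii b_0)+z\,h'_{\pm}(\ii b_0)+\mathcal{O}(z^{2})$ with $\Re h_{\pm}(\ii b_0)=0$, so $\Re h(\ii b_0+z)=\Re(z)\,h'_{\pm}(\ii b_0)+\mathcal{O}(z^{2})$. Because $h'_{+}=-h'_{-}$ are real and nonzero while $\Re(z)$ has opposite signs on the two sides, $\Re h$ has the \emph{same} sign on both flanks of the band; continuing the branch of $R$ fixed by $R(\lambda)\sim\lambda^{3}$ as $\lambda\to\infty$ down the imaginary axis to $\Sigma_{+}$ shows $R_{+}(\ii b)>0$ (so $h'_{+}<0<h'_{-}$), whence $\Re h>0$ on a punctured neighbourhood of $\Sigma_{+}$. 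Taking $C_L^{+},C_R^{+}$ inside it yields $\Re\!\big(-2\ii\lambda+2g(\lambda)\big)>0$ on $C_L^{+}\cup C_R^{+}$. Finally, the symmetry $g(\lambda)=-g(\lambda^{*})^{*}$ gives $h(\lambda^{*})^{*}=-h(\lambda)$, i.e. $\Re h(\lambda^{*})=-\Re h(\lambda)$; since $C_L^{-},C_R^{-}$ are the reflections across $\mathbb{R}$ of $C_L^{+},C_R^{+}$ (they bound the lens around $\Sigma_{-}=\overline{\Sigma_{+}}$), this gives $\Re\!\big(-2\ii\lambda+2g(\lambda)\big)<0$ on $C_L^{-}\cup C_R^{-}$.

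The step I expect to be the main obstacle is the branch/sign determination just above: verifying that the lenses are drawn in the component of $\{\Re h>0\}$ adjacent to $\Sigma_{+}$. This reduces to correctly tracking $R(\lambda)$ along the imaginary axis from $\infty$ down to $\Sigma_{+}$ and using $c_1\in(1,\eta_1^{2})$ to ensure $\Sigma_{+}$ is a genuine band (no pinching, $h'$ nonvanishing on it). It is a finite but delicate computation, best summarised as a signature table for $\Re h$ consistent with Fig.\ \ref{fig:jump-T}; the decay on $C_L^{-},C_R^{-}$ then follows automatically from the reflection symmetry.
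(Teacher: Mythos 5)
Your proposal is correct and takes essentially the same route as the paper's proof: show that $\Re\left(-2\ii\lambda+2g(\lambda)\right)$ vanishes on the band $\Sigma_{+}$, use the realness of $R_{\pm}(\ii b)$ together with $c_1\in(1,\eta_1^2)$ to fix the sign of the derivative of the real part transverse to the band on each side, conclude positivity on the lens boundaries $C_{L}^{+}, C_{R}^{+}$, and obtain $C_{L}^{-}, C_{R}^{-}$ from the symmetry $g(\lambda)=-g(\lambda^*)^*$. The only difference is that you supply justifications the paper asserts or dispatches with ``similarly'' (the bound $1<c_1<\eta_1^2$, the vanishing of $\Re h$ on $\Sigma_{+}$ via $\kappa_1=0$, the equal sign on both flanks from $R_{+}=-R_{-}$, and the branch determination $R_{+}(\ii b)>0$).
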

\begin{proof}
For brevity, let us denote the complex variable $\lambda$ as $\lambda=x+\ii y$ and the function $-2\ii\lambda+2g(\lambda)=u(x, y)+\ii v(x,y)$. When $\lambda\in \Sigma_{+}$, we have $\Re(-2\ii\lambda+2g(\lambda))=0.$ By differentiating this function with respect to $\lambda$, when $\lambda\in\Sigma_{+}$, we have 
\begin{equation}\label{eq:ux}
u_{x}=\Re\left(-2\ii+2g'_{+}(\lambda)\right)=\Re\left(-\frac{\ii\lambda^3+\ii c_1\lambda}{R_{+}(\lambda)}\right),
\end{equation} 
from the definition of $c_1$ in Eq.\eqref{eq:c1}, we know that $c_1$ is a positive real number satisfying the inequality $1<c_1<\eta_1^2$. Then Eq.\eqref{eq:ux} can be simplified into
\begin{equation}
u_{x}=\frac{-\Im(\lambda)^3-c_1\Im(\lambda)}{R_{+}(\lambda)}<0.
\end{equation}
Thus when $\lambda\in C_{L}^{+}$, we have $u>0$, implying $\Re\left(-2\ii\lambda+2g(\lambda)\right)>0$. Similarly, for $\lambda\in \left(C_{R}^{+}\cup C_{L}^{-}\cup C_{R}^{-}\right)$, we can also obtain the same results. 
\end{proof}
With this lemma, we know that the jump matrices of $\mathbf{T}(\lambda; x, 0)$ will decay to the identity matrix exponentially when $\lambda\in C_{L}^{\pm}\cup C_{R}^{\pm}$. Subsequently, leveraging this decay behavior, we can construct the parametrix for $\mathbf{T}(\lambda; x, 0)$. 

\emph {The Parametrix construction of $\mathbf{T}(\lambda; x, 0)$.} Based on the jump conditions of $\mathbf{T}(\lambda; x, 0)$ in Eq.\eqref{eq:jump-T}, we initiate the construction of an outer parametrix denoted by $\dot{\mathbf{T}}^{\rm out}(\lambda; x, 0)$. This parametrix is designed to precisely satisfy all the jump conditions for $\lambda\in\left(\Sigma_{\pm}\cup\Sigma_{c}\cup\Gamma_{\pm}\right).$ This process closely follows our previous article studying the large-order asymptotics of KM breathers\cite{ling2022large}. Then we give the Riemann-Hilbert problem for this outer parametrix $\dot{\mathbf{T}}^{\rm out}(\lambda; x, 0)$.
\begin{rhp}(Riemann-Hilbert problem of the outer parametrix $\dot{\mathbf{T}}^{\rm out}(\lambda; x, 0)$)
\begin{itemize}
\item {\bf Analyticity:} $\dot{\mathbf{T}}^{\rm out}(\lambda; x, 0)$ is analytic for $\lambda\in\mathbb{C}\setminus\left(\Sigma_{\pm}\cup\Sigma_c\cup\Gamma_{\pm}\right)$.
\item {\bf Jump condition:} The boundary conditions on these contours $\Sigma_{\pm}\cup\Sigma_c\cup\Gamma_{\pm}$ are related by $\dot{\mathbf{T}}^{\rm out}_{+}(\lambda; x, 0)=\dot{\mathbf{T}}^{\rm out}_{-}(\lambda; x, 0)\mathbf{V}_{\dot{\mathbf{T}}^{\rm out}}(\lambda; x, 0)$, where 
    \begin{equation}
    \mathbf{V}_{\dot{\mathbf{T}}^{\rm out}}(\lambda; x, 0)=\left\{\begin{aligned}&\begin{bmatrix}0&1\\
    -1&0
    \end{bmatrix},&&\quad \lambda\in\Sigma_{\pm}\cup\Sigma_c,\\
    &\begin{bmatrix}\ee^{-xd_1+\Delta_1}&0\\
    0&\ee^{xd_1-\Delta_1}
    \end{bmatrix},&&\quad\lambda\in\Gamma_{+},\\
    &\begin{bmatrix}\ee^{-xd_2+\Delta_2}&0\\
    0&\ee^{xd_2-\Delta_2}
    \end{bmatrix},&&\quad\lambda\in\Gamma_{-}.
    \end{aligned}\right.
    \end{equation}
\item{\bf Normalization:} $\dot{\mathbf{T}}^{\rm out}(\lambda; x, 0)\to\mathbb{I}$ as $\lambda\to\infty$. 
\end{itemize}
\end{rhp}
To solve this outer parametrix, we first introduce a scalar function to eliminate the jump conditions for $\lambda\in\Gamma_{\pm}$. That is, 
\begin{equation}
\begin{aligned}
&F_{+}(\lambda)+F_{-}(\lambda)=0,&&\quad\lambda\in\Sigma_{\pm}\cup\Sigma_c,\\
&F_{+}(\lambda)-F_{-}(\lambda)=-xd_1+\Delta_1,&&\quad\lambda\in\Gamma_{+},\\
&F_{+}(\lambda)-F_{-}(\lambda)=-xd_2+\Delta_2,&&\quad\lambda\in\Gamma_{-}.
\end{aligned}
\end{equation}
Through the Plemelj formula, we can express this scalar function $F(\lambda)$ using an integral formula:
\begin{equation}
F(\lambda)=\frac{R(\lambda)}{2\pi\ii}\left(\int_{\Gamma_{+}}\frac{-xd_1+\Delta_1}{R(\xi)(\xi-\lambda)}d\xi+\int_{\Gamma_{-}}\frac{-xd_2+\Delta_2}{R(\xi)(\xi-\lambda)}d\xi\right).
\end{equation}
When $\lambda\to\infty$, $F(\lambda)$ has a series expansion formula:
\begin{equation}
F(\lambda)=F_{2}\lambda^2+F_{1}\lambda+F_{0}+\mathcal{O}(\lambda^{-1}),
\end{equation}
where 
\begin{equation}\label{eq:F2F1F0}
\begin{aligned}
F_{2}&=-\frac{1}{2\pi\ii}\left(\int_{\Gamma_{+}}\frac{-xd_1+\Delta_1}{R(\xi)}d\xi+\int_{\Gamma_{-}}\frac{-xd_2+\Delta_2}{R(\xi)}d\xi\right),\\
F_{1}&=-\frac{1}{2\pi\ii}\left(\int_{\Gamma_{+}}\frac{-xd_1+\Delta_1}{R(\xi)}\xi d\xi+\int_{\Gamma_{-}}\frac{-xd_2+\Delta_2}{R(\xi)}\xi d\xi\right)\\
F_{0}&=-\frac{1}{2\pi\ii}\left(\int_{\Gamma_{+}}\frac{-xd_1+\Delta_1}{R(\xi)}\xi^2d\xi+\int_{\Gamma_{-}}\frac{-xd_2+\Delta_2}{R(\xi)}\xi^2d\xi\right)+\frac{1+\eta_1^2+\eta_2^2}{2}F_{2}.
\end{aligned}
\end{equation}
The second step for solving it is introducing an auxiliary matrix function $\mathbf{O}(\lambda; x, 0)$ by using this newly defined scalar function, 
\begin{equation}\label{eq:O-matrix}
\mathbf{O}(\lambda; x, 0)={\rm diag}\left(\ee^{F_0}, \ee^{-F_0}\right)\dot{\mathbf{T}}^{\rm out}(\lambda; x, 0){\rm diag}\left(\ee^{-F(\lambda)},\ee^{F(\lambda)}\right).
\end{equation}
It is straightforward to see that the jump conditions of $\mathbf{O}(\lambda; x, 0)$ are independent on the variable $x$,
\begin{equation}
\mathbf{O}_{+}(\lambda; x, 0)=\mathbf{O}_{-}(\lambda; x, 0)\begin{bmatrix}0&1\\-1&0\end{bmatrix}.
\end{equation}
And when $\lambda\to\infty$, $\mathbf{O}(\lambda; x, 0)$ satisfies the following series expansion,
\begin{equation}
\mathbf{O}(\lambda; x, 0)\to{\rm diag}\left(\ee^{-F_1\lambda-F_2\lambda^2}, \ee^{F_1\lambda+F_2\lambda^2}\right).
\end{equation}
Finally, we want to solve $\mathbf{O}(\lambda; x, 0)$ with the given jump conditions and the boundary condition. Before proceeding, we give a brief review of the properties of Riemann-Theta function.
\begin{definition}
The $\Theta$-function is defined as:
\begin{equation}
\Theta(u)\equiv\Theta(u; 
\mathbf{B}):=\sum\limits_{\mathbf{m}\in\mathbb{Z}^g}\ee^{\frac{1}{2}\langle\mathbf{m},\mathbf{Bm}\rangle+\langle\mathbf{m},u\rangle}.
\end{equation}
\end{definition}
This $\Theta(u)$ function has the following periodic properties,
\begin{equation}
\Theta(u+2\pi\ii\mathbf{e}_j)=\Theta(u),\quad\Theta(u+\mathbf{B}\mathbf{e}_j)=\ee^{-\frac{1}{2}B_{jj}-u_j}\Theta(u),
\end{equation}
where $\mathbf{e}_js$ are the unit vectors with the Dirac coordinate $\left(\mathbf{e}_j\right)_k=\delta_{jk},$ $\mathbf{Be}_js$ denote the $j$-column of $\mathbf{B}$.  

Next, let $\omega_j(\lambda)(j=1,2)$ be the Abel integrals \cite{belokolos1994algebro} defined as:
\begin{equation}
\omega_j(\lambda):=\int_{-\ii\eta_2}^{\lambda}\psi_{j}(\xi)d\xi, \quad \psi_{j}(\xi):=\frac{\sum\limits_{i=1}^2c_{ji}\xi^{2-\ii}}{R(\xi)},
\end{equation}
the coefficients $c_{ji}$ can be determined by the normalization condition,
\begin{equation}
\int_{\alpha_i}d\omega_{j}(\mathcal{P})=2\pi\ii\delta_{ij}, (i,j=1,2),
\end{equation}
where $d\omega_{j}(\mathcal{P})$ are holomorphic differentials on the Riemann surface. Correspondingly, the entries of the period $\mathbf{B}$ matrix can be introduced as
\begin{equation}
B_{ji}=\int_{\beta_i}d\omega_{j}(\mathcal{P}).
\end{equation}
Then we further define an Abel mapping $\mathbf{A}$ from the Riemann surface to the Jacobian variety, 
\begin{equation}\label{eq:Aj}
A_{j}(\mathcal{P})=\int_{\mathcal{P}_{0}}^{\mathcal{P}}d\omega_j(\mathcal{Q}), \quad j=1,2,
\end{equation} 
where the point $\mathcal{P}_0$ is related to the base point $-\ii\eta_2$ via the hyperelliptic projection $\pi(\mathcal{P}_{0})=-\ii\eta_2$, and $\mathcal{Q}$ is the integration variable. For two integral divisors $\mathcal{D}=\mathcal{P}_1+\mathcal{P}_2$, the Abel mapping holds right \cite{kotlyarov2017planar},
\begin{equation}
\mathbf{A}(\mathcal{D})=\mathbf{A}(\mathcal{P}_1)+\mathbf{A}(\mathcal{P}_2).
\end{equation} 
From the definition of $A_j$ in Eq.\eqref{eq:Aj}, we have 
\begin{equation}
\begin{aligned}
&\mathbf{A}_{+}(\lambda)-\mathbf{A}_{-}(\lambda)=\mathbf{0} \quad\,\,{\rm mod}\,\, 2\pi\ii\mathbb{Z}^2, \quad\lambda\in\left[-\ii\eta_1, -\ii\right]\cup\left[\ii,\ii\eta_1\right],\\
&\mathbf{A}_{+}(\lambda)+\mathbf{A}_{-}(\lambda)=\mathbf{Be}_1 \,\,{\rm mod}\,\, 2\pi\ii\mathbb{Z}^2, \quad\lambda\in\left[-\ii, \ii\right],\\
&\mathbf{A}_{+}(\lambda)+\mathbf{A}_{-}(\lambda)=\mathbf{Be}_2 \,\,{\rm mod}\,\, 2\pi\ii\mathbb{Z}^2, \quad\lambda\in\left[\ii\eta_1, \ii\eta_2\right].\\
\end{aligned}
\end{equation}
Then we also introduce another type of Abel integrals with singularities at $\mathcal{P}_{\infty_{+}}$\cite{belokolos1994algebro},
\begin{equation}
\Omega_j(\lambda)=\int_{-\ii\eta_2}^{\lambda}\Psi_{j}(\xi)d\xi\quad (j=1,2), \quad\Psi_{j}(\xi)=\frac{\sum\limits_{i=1}^5s_{ji}\xi^{5-i}}{R(\xi)},
\end{equation}
where the parameters $s_{ji}$ are determined by the normalization condition,
\begin{equation}\label{eq:nor-condi}
\int_{\alpha_i}d\Omega_{j}(\mathcal{P})=0(i, j=1,2),\quad \Omega_1(\lambda)\to\lambda+\mathcal{O}(1),\quad\Omega_{2}(\lambda)\to\lambda^2+\mathcal{O}(1),\quad \lambda\to\infty_{+}.
\end{equation} 
By integrating around with the $\beta$ circles, the entries are set as:
\begin{equation}\label{eq:UjVj}
\mathcal{U}_j=\int_{\beta_j}d\Omega_1(\mathcal{P}),\quad \mathcal{V}_j=\int_{\beta_j}d\Omega_2(\mathcal{P}),\quad j=1,2.
\end{equation} 
From the asymptotic properties of $\Omega_1(\lambda), \Omega_2(\lambda)$ as $\lambda\to\infty_{+}$, we introduce two variables $J_1, J_2$ such that the following limits exist,
\begin{equation}\label{eq:J1J2}
J_1:=\lim\limits_{\lambda\to\infty}\int_{-\ii\eta_2}^{\lambda} d\Omega_1(\mathcal{P})-\lambda,\quad J_2:=\lim\limits_{\lambda\to\infty}\int_{-\ii\eta_2}^{\lambda} d\Omega_2(\mathcal{P})-\lambda^2.
\end{equation}
Then the solution of Eq.\eqref{eq:O-matrix} can be given via the Riemann-Theta function,
\begin{multline}\label{eq:O-matrix-1}
\mathbf{O}(\lambda; x, 0)=\frac{1}{2}{\rm diag}\left(C_1, C_2\right)\begin{bmatrix}\left(\gamma(\lambda)+\frac{1}{\gamma(\lambda)}\right)\frac{\Theta\left(\mathbf{A}(\lambda)+\mathbf{d}-\pmb{\mathcal{U}}F_1-\pmb{\mathcal{V}}F_2\right)}{\Theta\left(\mathbf{A}(\lambda)+\mathbf{d}\right)}&
\ii\left(\gamma(\lambda)-\frac{1}{\gamma(\lambda)}\right)\frac{\Theta\left(\mathbf{A}(\lambda)-\mathbf{d}+\pmb{\mathcal{U}}F_1+\pmb{\mathcal{V}}F_2\right)}{\Theta\left(\mathbf{A}(\lambda)-\mathbf{d}\right)}\\
-\ii\left(\gamma(\lambda)-\frac{1}{\gamma(\lambda)}\right)\frac{\Theta\left(\mathbf{A}(\lambda)-\mathbf{d}-\pmb{\mathcal{U}}F_1-\pmb{\mathcal{V}}F_2\right)}{\Theta\left(\mathbf{A}(\lambda)-\mathbf{d}\right)}&
\left(\gamma(\lambda)+\frac{1}{\gamma(\lambda)}\right)\frac{\Theta\left(\mathbf{A}(\lambda)+\mathbf{d}+\pmb{\mathcal{U}}F_1+\pmb{\mathcal{V}}F_2\right)}{\Theta\left(\mathbf{A}(\lambda)+\mathbf{d}\right)}\\
\end{bmatrix}\\\times\ee^{-\left(\Omega_1(\lambda)F_1+\Omega_2(\lambda)F_2\right)\sigma_3},
\end{multline}
where 
\begin{equation}
\begin{aligned}
C_1&=\frac{\Theta\left(\mathbf{A}(\infty)+\mathbf{d}\right)}{\Theta\left(\mathbf{A}(\infty)+\mathbf{d}-\pmb{\mathcal{U}}F_1-\pmb{\mathcal{V}}F_2\right)}\ee^{J_1F_1+J_2F_2},\\
C_2&=\frac{\Theta\left(\mathbf{A}(\infty)+\mathbf{d}\right)}{\Theta\left(\mathbf{A}(\infty)+\mathbf{d}+\pmb{\mathcal{U}}F_1+\pmb{\mathcal{V}}F_2\right)}\ee^{-J_1F_1-J_2F_2},\\
\gamma(\lambda)&=\left(\frac{\left(\lambda+\ii\eta_2\right)\left(\lambda+\ii\eta_1\right)\left(\lambda+\ii\right)}{\left(\lambda-\ii\eta_2\right)\left(\lambda-\ii\eta_2\right)\left(\lambda+\ii\right)}\right).
\end{aligned}
\end{equation}
For $\lambda\in\Sigma_{\pm}\cup\Sigma_c, $ $\gamma(\lambda)_{+}=-\ii\gamma(\lambda)_{-}$, the vector $\mathbf{d}$ is given by 
\begin{equation}\label{eq:d}
\mathbf{d}=\mathbf{A}(\mathcal{P}_1)+\mathbf{A}(\mathcal{P}_2)+\mathbf{K},
\end{equation}
where $\mathcal{P}_1, \mathcal{P}_2$ are two zeros of $\gamma(\lambda)-\frac{1}{\gamma(\lambda)}$ on the first sheet of the Riemann surface, and $\mathbf{K}$ is the constant vector, whose entries can be simplified as\cite{belokolos1994algebro}:
\begin{equation}
K_j=\frac{1}{2}\sum\limits_{l=1}^2B_{lj}+\pi\ii\left(j-2\right).
\end{equation} 
Next, we define the inner parametrices in the neighbourhood of $\pm\ii\eta_1, \pm\ii\eta_2$. This inner parametrices should satisfy the jump condition of $\mathbf{T}(\lambda; x, 0)$ in the neighbourhood of these points, and they also will match the outer parametrix at some small distance independent on the large variable $x$. We first consider the inner parametrix in the neighbourhood of $\ii\eta_2$. Set a small disc $B_{\rho}^{\ii\eta_2}$ centered at $\ii\eta_2$ with the radius $\rho$. Define a conformal map as:
\begin{equation}
\zeta=\frac{1}{4}\left\{x\left[-g(\lambda)+\ii\lambda\right]\right\}^2.
\end{equation}
Then we introduce an auxiliary matrix function $\mathbf{U}_{\rm Bes}(\zeta)$,
\begin{equation}
\mathbf{U}_{\rm Bes}(\zeta):=\mathbf{T}(\lambda; x, 0)\left(\frac{\ii}{\sqrt{2\ii r_1}f}\right)^{\sigma_3}\ee^{-2\zeta^{1/2}\sigma_3}\begin{bmatrix}0&1\\1&0
\end{bmatrix},\quad\lambda\in B_{\rho}^{\ii\eta_2}\cap \mathbb{C},
\end{equation}
and here we choose the positive square root, that is $\zeta^{1/2}=\frac{1}{2}\left[x\left(-g(\lambda)+\ii\lambda\right)\right]$. Then the jumps of $\mathbf{U}_{\rm Bes}(\zeta)$ with the variable $\zeta$ are locally taken into three rays $\arg(\zeta)=\pm\frac{2}{3}\pi, \arg(\zeta)=\pi$(see Fig.\ref{fig:bess}),
\begin{figure}[ht]
	\centering
	\includegraphics[width=0.3\textwidth]{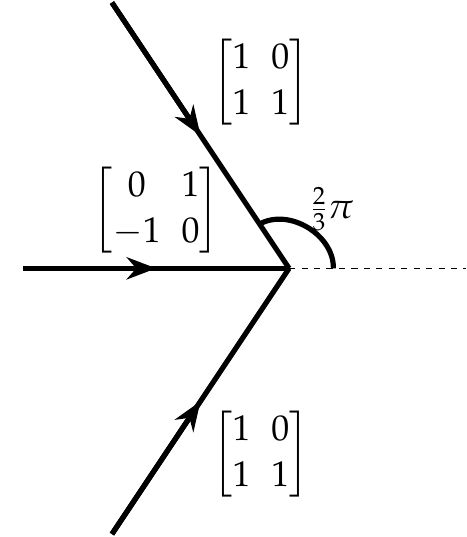}
	\caption{The jump conditions of $\mathbf{U}_{\rm Bes}(\zeta)$ in the neighbourhood of $\ii\eta_2$.}
	\label{fig:bess}
\end{figure}
\begin{equation}
\mathbf{U}_{\rm Bes, +}(\zeta)=\mathbf{U}_{\rm Bes, -}(\zeta)\left\{\begin{aligned}&\begin{bmatrix}1&0\\
1&1
\end{bmatrix},\quad\,\,\,\, \arg(\zeta)=\pm\frac{2}{3}\pi,\\
&\begin{bmatrix}0&1\\
-1&0
\end{bmatrix},\quad\arg(\zeta)=\pi.
\end{aligned}\right.
\end{equation}
We now proceed to define a matrix function $\mathbf{U}_{\rm Bes}(\zeta)$ satisfying the following Riemann-Hilbert problem.
\begin{rhp}(Riemann-Hilbert problem of the modified Bessel function)
\begin{itemize}
\item {\bf Analyticity:} $\mathbf{U}_{\rm Bes}(\zeta)$ is analytic in the $\zeta$-plane except for $\arg(\zeta)=\pm\frac{2}{3}\pi, \arg(\zeta)=\pi$(see Fig.\ref{fig:bess}).
\item {\bf Jump condition:} $\mathbf{U}_{\rm Bes, +}(\zeta)=\mathbf{U}_{\rm Bes, -}(\zeta)\mathbf{V}(\zeta)$, where $\mathbf{V}(\zeta)$ is a matrix given in the jump contours, which is shown in Fig.\ref{fig:bess}.
\item{\bf Normalization:} As $\zeta\to 0$, $\mathbf{U}_{\rm Bes}(\zeta)=\begin{bmatrix}\mathcal{O}\left(\log |\zeta|\right)&\mathcal{O}\left(\log|\zeta|\right)\\
\mathcal{O}\left(\log|\zeta|\right)&\mathcal{O}\left(\log|\zeta|\right)
\end{bmatrix}$.
\end{itemize}
\end{rhp}
The solution of this Riemann-Hilbert problem can be expressed with the modified Bessel function, and we do dot give any detailed calculation for it. In particular, when $\zeta\to\infty$, $\mathbf{U}_{\rm Bes}(\zeta)$ exhibits the following asymptotic property:
\begin{equation}
\mathbf{U}_{\rm Bes}(\zeta)=\left(2\pi\right)^{-\frac{1}{2}\sigma_3}\zeta^{-\sigma_3/4}\frac{1}{\sqrt{2}}\begin{bmatrix}1&\ii\\\ii&1
\end{bmatrix}\left(\mathbb{I}+\mathcal{O}\left(\frac{1}{\zeta^{1/2}}\right)\right)\ee^{2\zeta^{1/2}\sigma_3}.
\end{equation}
From $\mathbf{U}_{\rm Bes}(\zeta)$, we construct the inner parametrix near $\lambda=\ii\eta_2$. Define
\begin{equation}
\dot{\mathbf{T}}^{\ii\eta_2}(\lambda; x)=\mathbf{A}(\lambda)\mathbf{U}_{\rm Bes}(\zeta)\begin{bmatrix}0&1\\
1&0
\end{bmatrix}\ee^{2\zeta^{1/2}\sigma_3}\left(\frac{\ii}{\sqrt{2\ii r_1}f}\right)^{-\sigma_3},
\end{equation}
where $\mathbf{A}(\lambda)$ is a prefactor that should match the outer parametrix by the following constraint condition,
\begin{equation}\label{eq:cons-cond}
\dot{\mathbf{T}}^{\ii\eta_2}(\lambda; x)\left(\dot{\mathbf{T}}^{\rm out}(\lambda; x, 0)\right)^{-1}=\mathbb{I}+\mathcal{O}\left(|x|^{-1}\right),\,\, \text{as}\,\, x\to-\infty, \,\,\text{for}\,\,\lambda\in\partial B_{\rho}^{\ii\eta_2}.
\end{equation} 
Then this prefactor $\mathbf{A}(\lambda)$ can be given uniquely
\begin{equation}
\mathbf{A}(\lambda)=\dot{\mathbf{T}}^{\rm out}(\lambda; x, 0)\left(\frac{\ii}{\sqrt{2\ii r_1}f}\right)^{\sigma_3}\frac{1}{\sqrt{2}}\begin{bmatrix}-\ii&1\\
1&-\ii
\end{bmatrix}\left(2\pi \zeta^{1/2}\right)^{\frac{1}{2}\sigma_3}. 
\end{equation}
For the other branch points, we can similarly define the inner parametrices, and we do not give the details any more. Then the global parametrix of $\mathbf{T}(\lambda; x, 0)$ is then defined as:
\begin{equation}
\dot{\mathbf{T}}(\lambda; x, 0)=\left\{\begin{aligned}&\dot{\mathbf{T}}^{\ii\eta_1}(\lambda; x),&&\quad \lambda\in B_{\rho}^{\ii\eta_1},\\
&\dot{\mathbf{T}}^{\ii\eta_2}(\lambda; x),&&\quad \lambda\in B_{\rho}^{\ii\eta_2},\\
&\dot{\mathbf{T}}^{-\ii\eta_1}(\lambda; x),&&\quad \lambda\in B_{\rho}^{-\ii\eta_1},\\
&\dot{\mathbf{T}}^{-\ii\eta_2}(\lambda; x),&&\quad \lambda\in B_{\rho}^{-\ii\eta_2},\\
&\dot{\mathbf{T}}^{\rm out}(\lambda; x),&&\quad \lambda\in\mathbb{C}\setminus\left(\Sigma_{\pm}\cup\Sigma_{c}\cup\Gamma_{\pm}
\cup\overline{B_{\rho}^{\ii\eta_1}}\cup\overline{B_{\rho}^{-\ii\eta_1}}\cup\overline{B_{\rho}^{\ii\eta_2}}\cup\overline{B_{\rho}^{-\ii\eta_2}}\right).
\end{aligned}\right.
\end{equation}
{\emph Error Analysis:} It is straightforward to see that the error matrix $\mathbf{F}(\lambda; x):=\mathbf{T}(\lambda; x, 0)\left(\dot{\mathbf{T}}(\lambda; x, 0)\right)^{-1}$ satisfies the conditions of small-norm Riemann-Hilbert problem. The jump contours $\Sigma_{\mathbf{F}(\lambda; x)}$ of this error matrix $\mathbf{F}(\lambda; x)$ is composed with the arcs $C_{L}^{\pm}\cup C_{R}^{\pm}\cup \partial B_{\rho}^{\pm\ii\eta_1}\cup\partial B_{\rho}^{\pm\ii\eta_2}$. From our imposing condition Eq.\eqref{eq:cons-cond}, we know that the inner parametrices have the following key properties,
\begin{equation}
\begin{aligned}
&\|\dot{\mathbf{T}}^{\pm\ii\eta_1, \pm\ii\eta_2}(\lambda; x)\left(\dot{\mathbf{T}}^{\rm out}(\lambda; x, 0)\right)^{-1}-\mathbb{I}\|=\mathcal{O}(|x|^{-1}),&\quad x\to-\infty,\\
&\|\mathbf{T}(\lambda; x, 0)\left(\dot{\mathbf{T}}^{\rm out}(\lambda; x, 0)\right)^{-1}-\mathbb{I}\|=\ee^{\mu x}(\mu>0),&\quad x\to-\infty.
\end{aligned}
\end{equation}
With the small-norm Riemann-Hilbert problem, we find that $\mathbf{F}(\lambda; x)-\mathbb{I}=\mathcal{O}(|x|^{-1})$ as $x\to-\infty$. Taking the Laurent series expansion of $\mathbf{F}(\lambda; x)$ as $\lambda\to\infty$, 
\begin{equation}
\mathbf{F}(\lambda; x)=\mathbb{I}+\sum\limits_{m=1}^{\infty}\lambda^{-m}\mathbf{F}^{m}(x),
\end{equation}
where the coefficients $\mathbf{F}^{m}(x)$ can be expressed into an integral form via the matrix Plemelj formula,
\begin{equation}
\mathbf{F}^{m}(x)=-\frac{1}{2\pi\ii}\int_{\Sigma_{\mathbf{F}}}\mathbf{F}_{-}(\xi; x)\left(\mathbf{V}^{\mathbf{F}}(\xi; x)-\mathbb{I}\right)\xi^{m-1}d\xi.
\end{equation} 
Therefore, the coefficients $\mathbf{F}^{m}(x)$ satisfy the condition $\|\mathbf{F}^{m}(x)\|=\mathcal{O}\left(|x|^{-1}\right)$ as $x\to-\infty$. 
Then the potential $q(x; 0)$ at large negative $x$ is given by as follows: 
\begin{equation}
\begin{aligned}
q(x,0)&=2\ii\lim\limits_{\lambda\to\infty}\lambda \mathbf{Q}(\lambda; x, 0)_{12}=2\ii\lim\limits_{\lambda\to\infty}\mathbf{T}(\lambda; x, 0)_{12}\ee^{-xg(\infty)}f_{\infty}^2\\
&=2\ii\lim\limits_{\lambda\to\infty}\lambda\left(\mathbf{F}(\lambda; x)_{11}\dot{\mathbf{T}}^{\rm out}(\lambda; x, 0)_{12}+\mathbf{F}(\lambda; x)_{12}\dot{\mathbf{T}}^{\rm out}(\lambda; x, 0)_{22}\right)\ee^{-xg(\infty)}f_{\infty}^2\\
&=2\ii\lim\limits_{\lambda\to\infty}\lambda\dot{\mathbf{T}}^{\rm out}(\lambda; x, 0)_{12}f_{\infty}^2+\mathcal{O}(|x|^{-1}),
\end{aligned}
\end{equation} 
substituting Eq.\eqref{eq:O-matrix} and Eq.\eqref{eq:O-matrix-1} into $\dot{\mathbf{T}}^{\rm out}(\lambda; x, 0)_{12}$ yields the final asymptotic expression of $q(x,0)$ as $x\to-\infty$, which is shown in the following theorem.
\begin{theorem}\label{theo:largex}(Large $x$ asymptotics as $x\to-\infty$) For $t=0$, as $x\to-\infty$, the large $x$ asymptotics of the soliton gas for the mKdV equation can be expressed with the Riemann-Theta function, 
\begin{multline}\label{eq:q0}
q(x,0)=\frac{\Theta\left(\mathbf{A}(\infty)+\mathbf{d}\right)}{\Theta\left(\mathbf{A}(\infty)+\mathbf{d}-\pmb{\mathcal{U}}F_{1}-\pmb{\mathcal{V}}F_{2}\right)}	\frac{\Theta\left(\mathbf{A}(\infty)-\mathbf{d}+\pmb{\mathcal{U}}F_{1}+\pmb{\mathcal{V}}F_{2}\right)}{\Theta\left(\mathbf{A}(\infty)-\mathbf{d}\right)}\\
	\times \ii f_{\infty} ^2\left(\eta_1-\eta_2-1\right)\ee^{2F_{1}J_{1}+2F_{2}J_{2}-2F_{0}}+\mathcal{O}(|x|^{-1}),
\end{multline} 
where $\mathbf{A}, \mathbf{d}, \pmb{\mathcal{V}}, \pmb{\mathcal{U}}, F_0, F_1, F_2, J_1, J_2, f_{\infty} $ are defined in Eq. \eqref{eq:Aj}, Eq. \eqref{eq:d}, Eq. \eqref{eq:UjVj}, Eq. \eqref{eq:F2F1F0}, Eq. \eqref{eq:J1J2}, Eq.\eqref{eq:finfinity} respectively. $\eta_1, \eta_2$ are determined by the discrete spectra, defined in Eq. \eqref{eq:spectral}.  
\end{theorem}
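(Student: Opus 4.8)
The plan is to extract the leading term directly from the global parametrix constructed above, absorbing all corrections into the small-norm error. First I would start from $q(x,0)=2\ii\lim_{\lambda\to\infty}\lambda\mathbf{Q}(\lambda;x,0)_{12}$ and unwind the definition $\mathbf{T}=f_\infty^{-\sigma_3}\mathbf{Q}\,\ee^{-xg(\lambda)\sigma_3}f(\lambda)^{\sigma_3}$ (valid near $\lambda=\infty$); since $g(\lambda)=\mathcal{O}(\lambda^{-1})$ and $f(\lambda)\to f_\infty$ there, this yields $q(x,0)=2\ii f_\infty^2\lim_{\lambda\to\infty}\lambda\,\mathbf{T}(\lambda;x,0)_{12}$. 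Next, writing $\mathbf{T}=\mathbf{F}\dot{\mathbf{T}}$ with $\dot{\mathbf{T}}=\dot{\mathbf{T}}^{\rm out}$ near infinity, I would invoke the small-norm Riemann-Hilbert problem for $\mathbf{F}$: its jump on $C_L^\pm\cup C_R^\pm$ is exponentially close to $\mathbb{I}$ by Lemma \ref{lemma:2}, while the jumps on $\partial B_\rho^{\pm\ii\eta_1},\partial B_\rho^{\pm\ii\eta_2}$ are $\mathbb{I}+\mathcal{O}(|x|^{-1})$ by the matching condition \eqref{eq:cons-cond}; hence $\mathbf{F}(\lambda;x)=\mathbb{I}+\mathcal{O}(|x|^{-1})$ with the same bound on its Laurent coefficients, which reduces the problem to $q(x,0)=2\ii f_\infty^2\lim_{\lambda\to\infty}\lambda\,\dot{\mathbf{T}}^{\rm out}(\lambda;x,0)_{12}+\mathcal{O}(|x|^{-1})$.

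The second step is to plug in the explicit outer parametrix and take the limit term by term. Inverting \eqref{eq:O-matrix} gives $\dot{\mathbf{T}}^{\rm out}_{12}=\ee^{-F_0}\mathbf{O}(\lambda;x,0)_{12}\,\ee^{-F(\lambda)}$, and from \eqref{eq:O-matrix-1}, $\mathbf{O}_{12}=\tfrac{\ii}{2}C_1\bigl(\gamma(\lambda)-\gamma(\lambda)^{-1}\bigr)\dfrac{\Theta(\mathbf{A}(\lambda)-\mathbf{d}+\pmb{\mathcal{U}}F_1+\pmb{\mathcal{V}}F_2)}{\Theta(\mathbf{A}(\lambda)-\mathbf{d})}\,\ee^{\Omega_1(\lambda)F_1+\Omega_2(\lambda)F_2}$. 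Now I would expand as $\lambda\to\infty$: $\mathbf{A}(\lambda)\to\mathbf{A}(\infty)$ collapses the theta quotient; the definitions \eqref{eq:J1J2} give $\Omega_1(\lambda)F_1+\Omega_2(\lambda)F_2=F_2\lambda^2+F_1\lambda+(F_1J_1+F_2J_2)+\mathcal{O}(\lambda^{-1})$; the expansion $F(\lambda)=F_2\lambda^2+F_1\lambda+F_0+\mathcal{O}(\lambda^{-1})$ from \eqref{eq:F2F1F0} makes the two Gaussians $\ee^{\pm(F_2\lambda^2+F_1\lambda)}$ cancel, leaving $\ee^{\Omega_1F_1+\Omega_2F_2-F(\lambda)}\to\ee^{F_1J_1+F_2J_2-F_0}$; and a Laurent expansion of the algebraic prefactor $\gamma(\lambda)$ at infinity gives $\lambda\bigl(\gamma(\lambda)-\gamma(\lambda)^{-1}\bigr)\to$ a constant proportional to $\eta_1-\eta_2-1$. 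Using the given value $C_1=\frac{\Theta(\mathbf{A}(\infty)+\mathbf{d})}{\Theta(\mathbf{A}(\infty)+\mathbf{d}-\pmb{\mathcal{U}}F_1-\pmb{\mathcal{V}}F_2)}\ee^{J_1F_1+J_2F_2}$ and folding in the overall $\ee^{-F_0}$ from both inverting \eqref{eq:O-matrix} and the constant term of $F(\lambda)$, the limit $\lambda\,\dot{\mathbf{T}}^{\rm out}_{12}$ becomes the product of the two theta quotients appearing in \eqref{eq:q0}, the scalar $\ee^{2F_1J_1+2F_2J_2-2F_0}$, and a numerical/spectral constant.

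Multiplying by $2\ii f_\infty^2$ and reconciling the constants — the $\tfrac12$ and $\ii$ in \eqref{eq:O-matrix-1}, the limit of $\lambda(\gamma-\gamma^{-1})$, and the exponential $\ee^{J_1F_1+J_2F_2}$ inside $C_1$ — produces precisely the prefactor $\ii f_\infty^2(\eta_1-\eta_2-1)$ and the claimed formula \eqref{eq:q0}. I expect this last reconciliation to be the only delicate point: one must check that the growing Gaussians cancel identically and that the surviving numerical constant is exactly $\ii f_\infty^2(\eta_1-\eta_2-1)$ and not some other multiple, which requires an honest (though routine) expansion of $\gamma(\lambda)$ near $\lambda=\infty$ and careful tracking of the diagonal conjugations. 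Everything else is a straightforward consequence of results already in place: the $g$-function Riemann-Hilbert problem and Proposition \ref{prop:inter-cons}, the $f$-function Riemann-Hilbert problem, the theta-function formula for the outer parametrix, the Bessel inner parametrices at $\pm\ii\eta_1,\pm\ii\eta_2$, and the small-norm error analysis. It is worth noting that the error is genuinely only $\mathcal{O}(|x|^{-1})$, its order being set by the hard-edge Bessel parametrices rather than by the exponentially small lens contributions.
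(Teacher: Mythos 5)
Your proposal is correct and follows essentially the same route as the paper: recover $q$ from $\mathbf{Q}_{12}$, unwind the transformation to $\mathbf{T}$ picking up $f_\infty^2$, use the small-norm estimate $\mathbf{F}=\mathbb{I}+\mathcal{O}(|x|^{-1})$ to replace $\mathbf{T}$ by $\dot{\mathbf{T}}^{\rm out}$, and then substitute Eq.~\eqref{eq:O-matrix} and Eq.~\eqref{eq:O-matrix-1} and expand at $\lambda=\infty$. In fact you spell out the final cancellation of the $\ee^{\pm(F_2\lambda^2+F_1\lambda)}$ factors and the evaluation of $C_1$, $\lambda(\gamma-\gamma^{-1})$ in more detail than the paper does, correctly flagging the constant-tracking as the only delicate step.
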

\section{The asymptotics of the potential $q(x,t)$ as $t\to\infty$}
\label{sec:larget}
In the last section, we studied the asymptotics as $x\to\pm\infty$. When $x\to+\infty$, this potential decays the constant background exponentially. While when $x\to-\infty$, the leading-order term of this potential can be expressed with a Riemann-Theta function attached to a Riemann surface with genus-two. In this section, we will study the asymptotics as $t\to\infty$. Observing the jump conditions in Eq.\eqref{eq:jump-P-1}, when $\xi:=\frac{x}{t}>2+4\eta_2^2$, the first two jump matrices in Eq.\eqref{eq:jump-P-1} will decay to the identity matrix exponentially. Then the large $t$ asymptotics is similar to the case of large $x$ asymptotics when $x\to+\infty$. In this case, the large $t$ asymptotics can be given as 
\begin{equation}\label{eq:qt2}
	q(x,t)=1+\mathcal{O}\left(\ee^{-t\mu_1}\right), (\mu_1>0).
	\end{equation}
In the more interesting case where $\xi<2+4\eta_2^2$, we encounter two subcases, one is $\xi_{\rm crit}<\xi<2+4\eta_2^2$ and the other one is $\xi<\xi_{\rm crit}$, where $\xi_{\rm crit}$ is determined in the following calculation \eqref{eq:xi-crit}. The distinction between these subcases lies in the dependence of one result on a variable branch point parameter $\alpha$, while the other one is independent on this parameter.  Next, we give a detailed description to these two subcases. 
\subsection{Large $t$ asymptotics depending on the parameter $\alpha$}
In this subsection, we focus on the case where $\xi_{\rm crit}<\xi<2+4\eta_2^2$. For simplicity, we define the intervals as $[\ii\alpha, \ii\eta_2]=\Sigma_{\alpha,+}, \left[\ii,\ii\alpha\right]=\Gamma_{\alpha,+}, [-\ii, -\ii\alpha]=\Gamma_{\alpha,-}, [-\ii\alpha, -\ii\eta_2]=\Sigma_{\alpha,-}$, where $\alpha$ is a function of $\xi$ satisfying the condition $\alpha\in\left(\eta_1, \eta_2\right)$. In the last section, we have constructed a new Riemann-Hilbert problem for $\mathbf{Q}(\lambda; x, t)$ to eliminate the factor $\rho(\lambda)$ in the phase term. Now, we will explore the large $t$ asymptotics using this modified Riemann-Hilbert problem. By introducing the sign $\xi$, we rewrite the jump matrices of $\mathbf{Q}(\lambda; x, t)$ as follows:
	\begin{equation}\label{eq:jump-Q-1}
	\begin{aligned}
		&\mathbf{Q}_{+}(\lambda; x, t)=\mathbf{Q}_{-}(\lambda; x, t)\ee^{-\ii t(\lambda\xi+4\lambda^3)\sigma_3}\begin{bmatrix}1&0\\-2\ii r_1(\lambda)&1
		\end{bmatrix}\ee^{\ii t(\lambda \xi+4\lambda^3)\sigma_3},\quad &&\lambda\in\left[\ii\eta_1, \ii\eta_2\right],\\
		&\mathbf{Q}_{+}(\lambda; x, t)=\mathbf{Q}_{-}(\lambda; x, t)\ee^{-\ii t(\lambda \xi+4\lambda^3)\sigma_3}\begin{bmatrix}1&-2\ii r_1(\lambda)\\0&1
		\end{bmatrix}\ee^{\ii t(\lambda\xi+4\lambda^3 )\sigma_3},\quad &&\lambda\in\left[-\ii\eta_1,-\ii\eta_2\right],\\
		&\mathbf{Q}_{+}(\lambda; x, t)=\mathbf{Q}_{-}(\lambda; x, t)\ee^{-\ii t(\lambda\xi+4\lambda^3 )\sigma_3}\begin{bmatrix}0&\ii\\\ii&0
		\end{bmatrix}\ee^{\ii t(\lambda\xi+4\lambda^3 )\sigma_3},\quad&&\lambda\in\Sigma_c.
	\end{aligned}
\end{equation} 
In order to investigate the large $t$ asymptotics in the regime where $\xi_{\rm crit}<\xi<2+4\eta_2^2$, we introduce a $G$-function that satisfies the following Riemann-Hilbert problem.
\begin{rhp}($G$-function in the large $t$ asymptotics)
	\begin{itemize}
		\item {\bf Analyticity:} $G$-function is analytic for $\lambda\in\mathbb{C}\setminus\left(\Sigma_{\alpha, \pm}\cup\Gamma_{\alpha, \pm}\cup\Sigma_c\right)$.
		\item {\bf Jump condition:} $G$-function takes continuous values on these intervals, which are related by the following jump conditions,
		\begin{equation}\label{eq:G-jump}
			\begin{aligned}
				&G_{+}(\lambda)+G_{-}(\lambda)=2\ii\lambda\xi+8\ii\lambda^3+\kappa_{\alpha,1},\quad&&\lambda\in\Sigma_{\alpha,+},\\
				&G_{+}(\lambda)-G_{-}(\lambda)=d_{\alpha, 1}, \quad&&\lambda\in\Gamma_{\alpha, +}\cup\Gamma_{\alpha,+}^{[1]},\\
				&G_{+}(\lambda)+G_{-}(\lambda)=2\ii\lambda\xi+8\ii\lambda^3+\kappa_{\alpha, 2},\quad&&\lambda\in\Sigma_c,\\
				&G_{+}(\lambda)-G_{-}(\lambda)=d_{\alpha, 2}, \quad&&\lambda\in\Gamma_{\alpha, -}\cup\Gamma_{\alpha, -}^{[1]},\\
				&G_{+}(\lambda)+G_{-}(\lambda)=2\ii\lambda\xi+8\ii\lambda^3+\kappa_{\alpha, 3},\quad&&\lambda\in\Sigma_{\alpha, -}.
			\end{aligned}
		\end{equation}
		\item {\bf Normalization:} As $\lambda\to\infty$, $G(\lambda)$ satisfies the normalization condition, 
		\begin{equation}
			G(\lambda)\to\mathcal{O}(\lambda^{-1}).
		\end{equation}
		\item {\bf Symmetry:} $G(\lambda)$ has the symmetry condition: $G(\lambda)=-G(\lambda^*)^*.$
	\end{itemize}
\end{rhp}  

To solve this Riemann-Hilbert problem, we differentiate this $G$-function to eliminate the integral constants $\kappa_{\alpha, 1}, \kappa_{\alpha, 2}, \kappa_{\alpha, 3}, d_{\alpha, 1}, d_{\alpha, 2}$. That is
\begin{equation}
	\begin{split}
	G'_{+}(\lambda)+G'_{-}(\lambda)=2\ii\xi+24\ii\lambda^2,\quad\lambda\in\Sigma_{\alpha,+}\cup\Sigma_{c}\cup\Sigma_{\alpha, -}.
	\end{split}
	\end{equation}
To solve this problem, we introduce a $R_{\alpha}(\lambda)$ function with the following definition: 
\begin{equation}
	R_{\alpha}(\lambda):=\sqrt{(\lambda-\ii)(\lambda+\ii)(\lambda-\ii\alpha)(\lambda+\ii\alpha)(\lambda+\ii\eta_2)(\lambda-\ii\eta_2)}.
	\end{equation}
Following the calculation steps of $g'(\lambda)$ in the last section, we can set $G'(\lambda)$ as follows:
\begin{equation}
	G'(\lambda)=\ii\xi-\frac{\ii\lambda^3+\ii c_{\alpha,1}\lambda}{R_\alpha(\lambda)}\xi+12\ii\lambda^2-\frac{12\ii\lambda^5+6\ii\left(1+\alpha^2+\eta_2^2\right)\lambda^3+\ii c_{\alpha,3}\lambda}{R_{\alpha}(\lambda)},
	\end{equation}
where the unknown parameters $c_{\alpha, 1}, c_{\alpha, 3}$ satisfy the constraint condition,
\begin{equation}
	c_{\alpha, 1}=-\frac{\int_{\ii\alpha}^{\ii}\frac{\lambda^3}{R_{\alpha}(\lambda)}d\lambda}{\int_{\ii\alpha}^{\ii}\frac{\lambda}{R_{\alpha}(\lambda)}d\lambda}, \quad c_{\alpha, 3}=-\frac{\int_{\ii\alpha}^{\ii}\frac{12\lambda^5+6\left(1+\alpha^2+\eta_2^2\right)\lambda^3}{R_{\alpha}(\lambda)}d\lambda}{\int_{\ii\alpha}^{\ii}\frac{\lambda}{R_{\alpha}(\lambda)}d\lambda},
 	\end{equation}
 to ensure that the integral constants $\kappa_{\alpha, 2}, \kappa_{\alpha, 3}$ are pure imaginary. 
 Moreover, we can simplify the parameters $c_{\alpha, 1}, c_{\alpha, 3}$ to other formulas,
 \begin{equation}\label{eq:c1c3}
 c_{\alpha,1}=\eta_2^2-\frac{(\eta_2^2-1)E\left(m_{\alpha}\right)}{K\left(m_{\alpha}\right)}, \quad c_{\alpha, 3}=4(\eta_2^2+\alpha^2(\eta_2^2+1))-2(1+\alpha^2+\eta_2^2)c_{\alpha, 1},
 \end{equation}
 where $m_{\alpha}=\sqrt{\frac{\alpha^2-1}{\eta_2^2-1}}, K\left(m_{\alpha}\right), E\left(m_{\alpha}\right)$ are the complete elliptic integrals of the first and second kind respectively. 
 And $G(\lambda)$ can be written as an integral formula:
 \begin{equation}\label{eq:G-function}
 G(\lambda)=\int_{\infty}^{\lambda}\left(\ii\xi-\frac{\ii\zeta^3+\ii c_{\alpha,1}\zeta}{R_\alpha(\zeta)}\xi+12\ii\zeta^2-\frac{12\ii\zeta^5+6\ii\left(1+\alpha^2+\eta_2^2\right)\zeta^3+\ii c_{\alpha,3}\zeta}{R_{\alpha}(\zeta)}\right)d\zeta.
 \end{equation}
 Near the branch points $\pm\ii\eta_2, \pm\ii\alpha, \pm\ii$, we impose the function $G'(\lambda)-\ii\xi-12\ii\lambda^2$ subject to the constraint condition,
 \begin{equation}\label{eq:constr-cond}
 	\begin{aligned}
 	&G'(\lambda)-\ii\xi-12\ii\lambda^2=\mathcal{O}\left(\lambda\pm\ii\eta_2\right)^{-1/2}, &&\quad \lambda\to\mp\ii\eta_2,\\
 	&G'(\lambda)-\ii\xi-12\ii\lambda^2=\mathcal{O}\left(\lambda\pm\ii\right)^{-1/2}, &&\quad \lambda\to\mp\ii,\\
 	&G'(\lambda)-\ii\xi-12\ii\lambda^2=\mathcal{O}\left(\lambda\pm\ii\alpha\right)^{1/2},&&\quad \lambda\to\mp\ii\alpha.
 	\end{aligned}
 	\end{equation}
 Then we can establish a relationship between the parameter $\xi$ and the branch points $\pm\ii\alpha$, 
 \begin{equation}\label{eq:xi}
 \begin{aligned}
 	\xi&=-\frac{12\alpha^4-6\left(1+\alpha^2+\eta_2^2\right)\alpha^2+c_{\alpha, 3}}{-\alpha^2+c_{\alpha, 1}}=2(1+\alpha^2+\eta_2^2)+4\frac{(\alpha^2-1)(\alpha^2-\eta_2^2)}{\alpha^2-\eta_2^2+\frac{(\eta_2^2-1)E(m_\alpha)}{K(m_\alpha)}}\\
 &=6+2\left(\eta_2^2-1\right)\left[m_{\alpha}^2+1+2\frac{m_{\alpha}^2\left(m_{\alpha}^2-1\right)}{m_{\alpha}^2-1+\frac{E(m_\alpha)}{K(m_\alpha)}}\right].
 \end{aligned}
 	\end{equation}
 Similar to the calculation in \cite{girotti2021rigorous}, we know that $\xi$ is monotone increasing function with respect to the parameter $\alpha$ when $1<\alpha<\eta_2$. Thus we have the following inequality, 
 \begin{equation}\label{eq:xiinequality}
 2(2+\eta_2^2)<\xi<2(1+2\eta_2^2). 
 \end{equation}
 When $\ii\alpha$ moves to the point $\ii\eta_1$, we get a special value of $\xi$, which we denote as $\xi_{\rm crit}$,
 \begin{equation}\label{eq:xi-crit}
 	\xi_{\rm crit}=-\frac{12\eta_1^4-6\left(1+\eta_1^2+\eta_2^2\right)\eta_1^2+c_{\alpha, 3}}{-\eta_1^2+c_{\alpha, 1}}.
 	\end{equation}
 Similar to the Proposition \ref{prop:inter-cons} in the large negative $x$ asymptotics, the integral constants $\kappa_{\alpha, 1}, \kappa_{\alpha, 2}, \kappa_{\alpha, 3}$ in the branch cut are all zero. And these integral constants $d_{\alpha, 1}, d_{\alpha, 2}$ can be given as: 
 \begin{equation}
 \begin{split}
 	d_{\alpha, 1}&=-2\int_{\ii\eta_2}^{\ii\alpha}\frac{\ii\lambda^3+\ii c_{\alpha, 1}\lambda}{R_{\alpha}(\lambda)}\xi-2\int_{\ii\eta_2}^{\ii\alpha}\frac{12\ii\lambda^5+6\ii\left(1+\alpha^2+\eta_2^2\right)\lambda^3+\ii c_{\alpha, 3}\lambda}{R_{\alpha}(\lambda)}d\lambda,\\
 	d_{\alpha, 2}&=-2\int_{-\ii\eta_2}^{-\ii\alpha}\frac{\ii\lambda^3+\ii c_{\alpha, 1}\lambda}{R_{\alpha}(\lambda)}\xi-2\int_{-\ii\eta_2}^{-\ii\alpha}\frac{12\ii\lambda^5+6\ii\left(1+\alpha^2+\eta_2^2\right)\lambda^3+\ii c_{\alpha, 3}\lambda}{R_{\alpha}(\lambda)}d\lambda.
 	\end{split}
 	\end{equation}
 With this newly defined $G$-function, we will analyze the large $t$ asymptotics using the steepest-descent method.  Define 
 \begin{equation}\label{eq:Txt}
 	\mathbf{T}(\lambda; x, t):=\left\{\begin{aligned}
 		&f_{\infty}^{-\sigma_3}\mathbf{Q}(\lambda; x, t)\ee^{-\ii t\left(\lambda\xi+4\lambda^3\right)\sigma_3}\begin{bmatrix}1&\frac{1}{-2\ii r_1(\lambda)}\\0&1
 		\end{bmatrix}\ee^{\ii t\left(\lambda\xi+4\lambda^3 \right)\sigma_3}\ee^{-tG(\lambda)\sigma_3}f(\lambda)^{\sigma_3},\quad&&\lambda\in R_{\alpha, +},\\
 		&f_{\infty}^{-\sigma_3}\mathbf{Q}(\lambda; x, t)\ee^{-\ii t\left(\lambda\xi+4\lambda^3\right)\sigma_3}\begin{bmatrix}1&\frac{1}{2\ii r_1(\lambda)}\\
 			0&1\end{bmatrix}\ee^{\ii t\left(\lambda\xi+4\lambda^3\right)\sigma_3}\ee^{-tG(\lambda)\sigma_3}f(\lambda)^{\sigma_3},\quad&&\lambda\in L_{\alpha, +},\\
 		&f_{\infty}^{-\sigma_3}\mathbf{Q}(\lambda; x, t)\ee^{-\ii t\left(\lambda\xi+4\lambda^3\right)\sigma_3}\begin{bmatrix}1&0\\\frac{1}{-2\ii r_1(\lambda)}&1
 		\end{bmatrix}\ee^{\ii t\left(\lambda\xi+4\lambda^3\right)\sigma_3}\ee^{-tG(\lambda)\sigma_3}f(\lambda)^{\sigma_3},\quad&&\lambda\in R_{\alpha, -},\\
 		&f_{\infty}^{-\sigma_3}\mathbf{Q}(\lambda; x, t)\ee^{-\ii t\left(\lambda\xi+4\lambda^3\right)\sigma_3}\begin{bmatrix}1&0\\\frac{1}{2\ii r_1(\lambda)}&
 			1\end{bmatrix}\ee^{\ii t\left(\lambda\xi+4\lambda^3\right)\sigma_3}\ee^{-tG(\lambda)\sigma_3}f(\lambda)^{\sigma_3},\quad&&\lambda\in L_{\alpha, -},\\
 		&f_{\infty}^{-\sigma_3}\mathbf{Q}(\lambda; x, t)\ee^{-tG(\lambda)\sigma_3}f(\lambda)^{\sigma_3},\quad&&\text{otherwise}.
 	\end{aligned}\right.
 \end{equation}
A direct calculation obtains the jump matrices for $\mathbf{T}(\lambda; x, t)$,
\begin{equation}\label{eq:jump-T-1}
	\begin{aligned}
		&\mathbf{T}_{+}(\lambda; x, t)=\mathbf{T}_{-}(\lambda; x, t)\begin{bmatrix}
			1&-\frac{1}{2\ii r_1(\lambda)}\ee^{-2\ii t\left(\lambda\xi+4\lambda^3\right)+2tG(\lambda)}\frac{1}{f^2}\\
			0&1
		\end{bmatrix},&&\quad\lambda\in C_{\alpha, L}^{+},\\
		&\mathbf{T}_{+}(\lambda; x, t)=\mathbf{T}_{-}(\lambda; x, t)\begin{bmatrix}0&1\\
			-1&0
		\end{bmatrix}, &&\quad\lambda\in\Sigma_{\alpha, +},\\
		&\mathbf{T}_{+}(\lambda; x, t)=\mathbf{T}_{-}(\lambda; x, t)\begin{bmatrix}1&-\frac{1}{2\ii r_1(\lambda)}\ee^{-2\ii t\left(\lambda\xi+4\lambda^3\right)+2tG(\lambda)}\frac{1}{f^2}\\
			0&1
		\end{bmatrix},&&\quad\lambda\in C_{\alpha, R}^{+},\\
	&\mathbf{T}_{+}(\lambda; x, t)=\mathbf{T}_{-}(\lambda; x, t)\begin{bmatrix}\ee^{-td_{\alpha, 1}+\Delta_{1}}&0\\
		-2\ii r_{1}(\lambda)f_{+}(\lambda)f_{-}(\lambda)\ee^{2\ii t\left(\lambda\xi+4\lambda^3\right)-t\left(G_{+}(\lambda)+G_{-}(\lambda)\right)}&\ee^{td_{\alpha, 1}+\Delta_{1}}
		\end{bmatrix},&&\quad\lambda\in\Gamma_{\alpha,+}^{[1]},\\
		&\mathbf{T}_{+}(\lambda; x, t)=\mathbf{T}_{-}(\lambda; x, t)\begin{bmatrix}\ee^{-td_{\alpha, 1}+\Delta_1}&0\\0&\ee^{td_{\alpha, 1}-\Delta_1}
		\end{bmatrix},&&\quad\lambda\in\Gamma_{\alpha, +},\\
		&\mathbf{T}_{+}(\lambda; x, t)=\mathbf{T}_{-}(\lambda; x, t)\begin{bmatrix}
			0&1\\
			-1&0	\end{bmatrix},&&\quad\lambda\in\Sigma_c,\\
		&\mathbf{T}_{+}(\lambda; x, t)=\mathbf{T}_{-}(\lambda; x, t)\begin{bmatrix}\ee^{-td_{\alpha, 1}+\Delta_2}&0\\
			0&\ee^{td_{\alpha, 1}-\Delta_2}
		\end{bmatrix},&&\quad\lambda\in\Gamma_{\alpha, -},\\
	&\mathbf{T}_{+}(\lambda; x, t)=\mathbf{T}_{-}(\lambda; x, t)\begin{bmatrix}\ee^{-td_{\alpha, 1}+\Delta_{2}}&	\frac{-2\ii r_{1}(\lambda)}{f_{+}(\lambda)f_{-}(\lambda)}\ee^{-2\ii t\left(\lambda\xi+4\lambda^3\right)+t\left(G_{+}(\lambda)+G_{-}(\lambda)\right)}\\
	0&\ee^{td_{\alpha, 1}-\Delta_{2}}
	\end{bmatrix},&&\quad\lambda\in\Gamma_{\alpha,-}^{[1]},\\
		&\mathbf{T}_{+}(\lambda; x, t)=\mathbf{T}_{-}(\lambda; x, t)\begin{bmatrix}
			1&0\\\frac{1}{-2\ii r_1(\lambda)}\ee^{2\ii t\left(\lambda \xi+4\lambda^3\right)-2tG(\lambda)}f^2&1
		\end{bmatrix},&&\quad\lambda\in C_{\alpha, L}^{-},\\
		&\mathbf{T}_{+}(\lambda; x, t)=\mathbf{T}_{-}(\lambda; x, t)\begin{bmatrix}
			1&0\\\frac{1}{-2\ii r_1(\lambda)}\ee^{2\ii t\left(\lambda \xi+4\lambda^3\right)-2tG(\lambda)}f^2&1
		\end{bmatrix},&&\quad\lambda\in C_{\alpha, R}^{-},\\
		&\mathbf{T}_{+}(\lambda; x, t)=\mathbf{T}_{-}(\lambda; x, t)\begin{bmatrix}0&1\\
			-1&0
		\end{bmatrix}, &&\quad\lambda\in\Sigma_{\alpha, -}.
	\end{aligned}
\end{equation}
The jump contours of $\mathbf{T}(\lambda; x, t)$ are depicted in Fig.\ref{fig:jump-T-1}.
\begin{figure}[ht]
	\centering
	\includegraphics[width=0.5\textwidth]{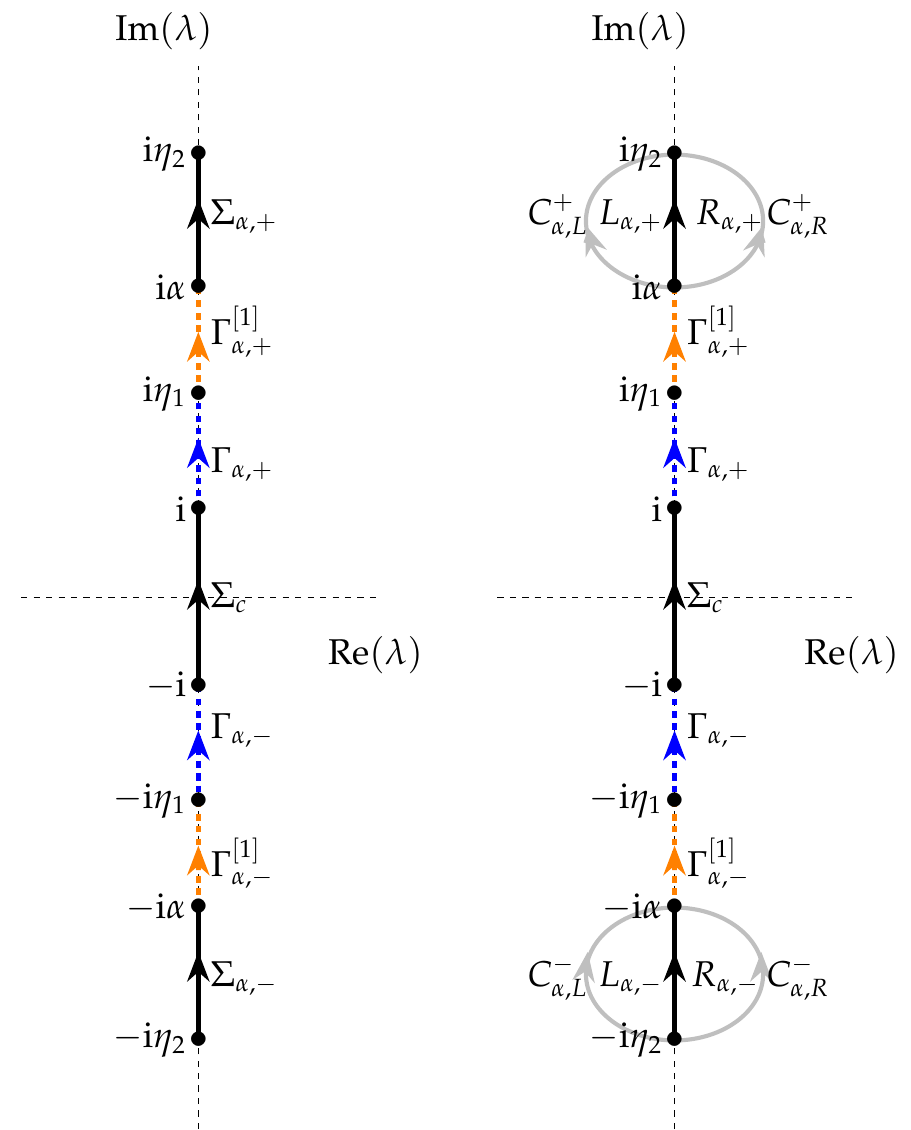}
	\caption{The left panel is the jump contour of $G$-function and the right one is the jump contour of $\mathbf{T}(\lambda; x, t)$. }
	\label{fig:jump-T-1}
\end{figure}

Next, we give a Lemma exhibiting the properties of the jump matrices of $\mathbf{T}(\lambda; x, t)$. 
\begin{lemma}\label{lemma:3}
	\begin{equation}
		\begin{aligned}
	&\Im\left(\lambda\xi+4\lambda^3+\ii G(\lambda)\right)<0,&&\quad\lambda\in C_{\alpha,L}^{+},\\
		&\Im\left(\lambda\xi+4\lambda^3+\ii G(\lambda)\right)<0,&&\quad\lambda\in C_{\alpha,R}^{+},\\
		&\Im\left(2\lambda\xi+8\lambda^3+\ii \left(G_{+}(\lambda)+G_{-}(\lambda)\right)\right)>0,&&\quad\lambda\in\Gamma_{\alpha,+}^{[1]},\\
		&\Im\left(\lambda\xi+4\lambda^3+\ii G(\lambda)\right)>0,&&\quad\lambda\in C_{\alpha,L}^{-},\\
		&\Im\left(\lambda\xi+4\lambda^3+\ii G(\lambda)\right)>0,&&\quad\lambda\in C_{\alpha,R}^{-},\\
		&\Im\left(2\lambda\xi+8\lambda^3+\ii \left(G_{+}(\lambda)+G_{-}(\lambda)\right)\right)<0,&&\quad\lambda\in\Gamma_{\alpha,-}^{[1]}.
		\end{aligned}
		\end{equation}
	\end{lemma}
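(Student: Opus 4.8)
The approach is the sign-table analysis already used for Lemma~\ref{lemma:2}, now carried out for the $G$-function of Eq.~\eqref{eq:G-function}; the point of the six inequalities is that they make all off-diagonal/exponential entries of the jump matrices in Eq.~\eqref{eq:jump-T-1} decay as $t\to\infty$, which is what the subsequent small-norm analysis needs. The plan is first to repackage everything into one phase. Set
\begin{equation*}
\Phi(\lambda):=G(\lambda)-\ii\left(\lambda\xi+4\lambda^{3}\right),
\end{equation*}
so that $\lambda\xi+4\lambda^{3}+\ii G(\lambda)=\ii\Phi(\lambda)$ and $2\lambda\xi+8\lambda^{3}+\ii\left(G_{+}+G_{-}\right)=\ii\left(\Phi_{+}+\Phi_{-}\right)$; the Lemma then asserts exactly $\Re\Phi<0$ on $C_{\alpha,L}^{+}\cup C_{\alpha,R}^{+}$, $\Re\Phi>0$ on $C_{\alpha,L}^{-}\cup C_{\alpha,R}^{-}$, $\Re\left(\Phi_{+}+\Phi_{-}\right)>0$ on $\Gamma_{\alpha,+}^{[1]}$, and $\Re\left(\Phi_{+}+\Phi_{-}\right)<0$ on $\Gamma_{\alpha,-}^{[1]}$. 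From $G(\lambda)=-G(\lambda^{*})^{*}$ one gets $\Phi(\lambda^{*})=-\Phi(\lambda)^{*}$, so $\Re\Phi$ is odd under reflection in the real axis; this already reduces the six inequalities to the three upper-half-plane ones. Since $R_{\alpha}$ is odd and the rational part of $G'$ in Eq.~\eqref{eq:G-function} is even, $G$ is an odd function, and combined with the first symmetry this gives $\Phi_{-}(\lambda)=\Phi_{+}(\lambda)^{*}$ on the imaginary axis, which I would use on the bands.

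Next I would treat the band $\Sigma_{\alpha,+}=[\ii\alpha,\ii\eta_{2}]$. Since $\kappa_{\alpha,1}=0$ (established exactly as in Proposition~\ref{prop:inter-cons}), Eq.~\eqref{eq:G-jump} gives $\Phi_{+}+\Phi_{-}=0$ on $\Sigma_{\alpha,+}$, and together with $\Phi_{-}=\Phi_{+}^{*}$ this forces $\Re\Phi\equiv0$ there; likewise on $\Sigma_{c}$ and $\Sigma_{\alpha,-}$. Differentiating $G'$ and combining its two rational terms, one finds $\Phi_{\pm}'(\lambda)=-\ii Q(\lambda)/R_{\alpha,\pm}(\lambda)$ with $Q$ an explicit odd quintic whose roots include $\pm\ii\alpha$ by the constraints Eq.~\eqref{eq:constr-cond}; writing $Q(\lambda)=12\lambda\left(\lambda^{2}+\alpha^{2}\right)\left(\lambda^{2}-\zeta_{0}\right)$ with $\zeta_{0}=-\left(\xi c_{\alpha,1}+c_{\alpha,3}\right)/\left(12\alpha^{2}\right)$ exposes the remaining pair of roots. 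On $\lambda=\ii y$, $y\in(\alpha,\eta_{2})$, $R_{\alpha,\pm}(\ii y)$ is real and $Q(\ii y)$ is purely imaginary, so the one-sided transverse derivative $(\partial_{x}\Re\Phi)_{\pm}=\Re\Phi_{\pm}'$ is real, and $(\partial_{x}\Re\Phi)_{+}=-(\partial_{x}\Re\Phi)_{-}$ because $\Phi_{+}'+\Phi_{-}'=0$ on the band. Since $\Re\Phi=0$ on the band, this yields $\Re\Phi<0$ on both lens sides close enough to $\Sigma_{\alpha,+}$ provided $(\partial_{x}\Re\Phi)_{+}$ carries the correct sign, which (up to the branch convention of $R_{\alpha}$ and the contour orientations) is equivalent to $\zeta_{0}\in(-\alpha^{2},-1)$, i.e. $y^{2}-\zeta_{0}>0$ for all $y\in(\alpha,\eta_{2})$; this one-variable inequality is verified from the explicit $c_{\alpha,1},c_{\alpha,3}$ in Eq.~\eqref{eq:c1c3} together with $\eta_{1}<\alpha<\eta_{2}$. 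The lens contours $C_{\alpha,L/R}^{+}$ are then drawn inside the resulting negativity region.

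For the gap $\Gamma_{\alpha,+}=[\ii,\ii\alpha]$ the jump of $\Phi$ is the purely imaginary constant $d_{\alpha,1}$, so $\Re\Phi$ is continuous across $\Gamma_{\alpha,+}$ and vanishes at the endpoints $\ii,\ii\alpha$ (matching the neighbouring bands $\Sigma_{c},\Sigma_{\alpha,+}$); along the gap its derivative equals $Q(\ii y)/R_{\alpha}(\ii y)$, which is real and vanishes only at the unique zero of $Q$ inside $(\ii,\ii\alpha)$ -- and that zero \emph{must} lie there, since $\Re\Phi$ returns to $0$ at both endpoints without being identically zero, which is again just $\zeta_{0}\in(-\alpha^{2},-1)$. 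Hence $\Re\Phi$ is a single-signed bump on the open gap, and its sign is pinned to be positive either by a direct check near an endpoint or, more robustly, by continuity in $\alpha$ from the degenerate configurations $\alpha\to\eta_{1}^{+}$ (where $\xi\to\xi_{\rm crit}$ by Eq.~\eqref{eq:xi}--\eqref{eq:xi-crit} and the picture collapses to the genus-two configuration of Section~\ref{sec:largex}) and $\alpha\to\eta_{2}^{-}$ (where the gap shrinks and one recovers the exponentially decaying regime Eq.~\eqref{eq:qt2}). Drawing $\Gamma_{\alpha,+}^{[1]}$ just alongside the gap then gives $\Re\left(\Phi_{+}+\Phi_{-}\right)=2\Re\Phi+o(1)>0$ there, and the lower-half-plane statements for $C_{\alpha,L/R}^{-}$ and $\Gamma_{\alpha,-}^{[1]}$ follow from $\Re\Phi(\lambda^{*})=-\Re\Phi(\lambda)$.

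The step I expect to be the main obstacle is the \emph{global} control of $\Re\Phi$ off the imaginary axis: one must know that the level set $\{\Re\Phi=0\}$ consists precisely of the imaginary axis together with the arc through the band endpoints, so that the lens contours $C_{\alpha,L/R}^{\pm}$ and the doubled gap contours $\Gamma_{\alpha,\pm}^{[1]}$ can be deformed entirely into the open regions where $\Re\Phi$, resp. $\Re\left(\Phi_{+}+\Phi_{-}\right)$, has the required sign. As in \cite{girotti2021rigorous}, this is settled by combining the maximum principle for the harmonic function $\Re\Phi$, the local square-root behaviour of $\Phi$ at the six branch points $\pm\ii,\pm\ii\alpha,\pm\ii\eta_{2}$, and the monotonicity of $\xi(\alpha)$ in Eq.~\eqref{eq:xi}; the only genuinely computational input is the inequality $\zeta_{0}\in(-\alpha^{2},-1)$ coming from Eq.~\eqref{eq:c1c3}.
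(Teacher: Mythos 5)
Your proposal is correct in substance, and for the crucial inequalities on $\Gamma_{\alpha,\pm}^{[1]}$ it takes a genuinely different (and arguably tighter) route than the paper. The paper reduces the lens-contour inequalities on $C_{\alpha,L/R}^{\pm}$ to the transverse-derivative argument of Lemma~\ref{lemma:2}, exactly as you do, but on the gap it works with the explicit representation $2\lambda\xi+8\lambda^3+\ii\left(G_{+}+G_{-}\right)=\int_{\ii\alpha}^{\lambda}H(\alpha,\zeta)\,d\zeta$, locates the extra root $\ii\rho$ of the polynomial factor only through the small-$m_\alpha$ estimate $\rho\sim\sqrt{(2+\alpha^2)/3}<\alpha$, splits into cases according to whether $\ii\rho$ lies below or above $\ii\eta_1$, and in the hardest subcase ($\eta_1<\rho<\alpha$, $\lambda\in(\ii\eta_1,\ii\rho)$) resorts to a \emph{numerical} verification that $\Im\int_{\ii\alpha}^{\ii\eta_1}H\,d\zeta>0$. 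Your argument replaces this with the observation that, since $\kappa_{\alpha,1}=\kappa_{\alpha,2}=0$ and $\Phi$ is continuous at the branch points, $\Re\Phi$ vanishes at \emph{both} ends $\ii$ and $\ii\alpha$ of the gap, while $\Phi'=-\ii Q/R_\alpha$ with $Q=12\lambda(\lambda^2+\alpha^2)(\lambda^2-\zeta_0)$ has at most one zero on the gap; Rolle then forces $\rho\in(1,\alpha)$ (so you do not need to estimate $\rho$ at all, and the case split disappears) and makes $\Re\Phi$ a single-signed bump whose sign is fixed by the same local endpoint check the paper performs ($\Im H<0$ just below $\ii\alpha$). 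This removes the numerical step and simultaneously supplies the fact ($\rho<\alpha$, hence $Q\neq0$ on the band) that the band/lens sign analysis tacitly needs and the paper does not spell out; your closing remarks on the global level set $\{\Re\Phi=0\}$ address a point the paper glosses over with ``similar to Lemma~\ref{lemma:2}''. Two small blemishes, neither fatal: the phrase ``$y^2-\zeta_0>0$ for all $y\in(\alpha,\eta_2)$'' is garbled (with $\zeta_0=-\rho^2$ the relevant condition on the band is $\rho^2<y^2$, i.e.\ $\rho\le\alpha$, which your Rolle argument already delivers, so the band step needs no separate ``computational input''), and the alternative sign-pinning ``by continuity in $\alpha$ from degenerate configurations'' is vaguer than your primary endpoint check and should be dropped in favour of it.
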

\begin{proof}In the previous section, we established in Lemma \ref{lemma:2} that at $t=0$, the exponential terms in the contours $C_{L}^{\pm}, C_{R}^{\pm}$ tend to zero when $x$ is large. For large $t$ case, when $\lambda\in C_{\alpha, L}^{\pm}\cup C_{\alpha, R}^{\pm}$, the proofs are similar to the case of Lemma \ref{lemma:2}. Thus for Lemma \ref{lemma:3}, we focus solely on proving the case where $\lambda\in \Gamma_{\alpha,\pm}^{[1]}$. Referring back to the definition of $G(\lambda)$ in Eq.\eqref{eq:G-function}, it becomes evident that when $\lambda\in\Gamma_{\alpha,+}^{[1]}$, we have
\begin{multline}
G_{+}(\lambda)+G_{-}(\lambda)=2\int_{\infty}^{\ii\eta_2}\left(\ii\xi-\frac{\ii\zeta^3+\ii c_{\alpha,1}\zeta}{R_\alpha(\zeta)}\xi+12\ii\zeta^2-\frac{12\ii\zeta^5+6\ii\left(1+\alpha^2+\eta_2^2\right)\zeta^3+\ii c_{\alpha,3}\zeta}{R_{\alpha}(\zeta)}\right)d\zeta\\+2\int_{\ii\eta_2}^{\ii\alpha}\left(\ii\xi+12\ii\zeta^2\right)d\zeta
+2\int_{\ii\alpha}^{\lambda}\left(\ii\xi-\frac{\ii\zeta^3+\ii c_{\alpha,1}\zeta}{R_\alpha(\zeta)}\xi+12\ii\zeta^2-\frac{12\ii\zeta^5+6\ii\left(1+\alpha^2+\eta_2^2\right)\zeta^3+\ii c_{\alpha,3}\zeta}{R_{\alpha}(\zeta)}\right)d\zeta.
\end{multline}
Using the first relationship in Eq.\eqref{eq:G-jump} with $\kappa_{\alpha, 1}=0$, we find that when $\lambda\in\Gamma_{\alpha, +}^{[1]}$, we have 
\begin{equation}
\begin{aligned}
2\lambda\xi+8\lambda^3+\ii\left(G_{+}(\lambda)+G_{-}(\lambda)\right)&=-2\ii\int_{\ii\alpha}^{\lambda}\left(\frac{\ii\zeta^3+\ii c_{\alpha,1}\zeta}{R_\alpha(\zeta)}\xi+\frac{12\ii\zeta^5+6\ii\left(1+\alpha^2+\eta_2^2\right)\zeta^3+\ii c_{\alpha,3}\zeta}{R_{\alpha}(\zeta)}\right)d\zeta\\
&=\int_{\ii\alpha}^{\lambda}\frac{2\zeta\sqrt{\zeta^2+\alpha^2}\left(12\zeta^2+6(1-\alpha^2+\eta_2^2)+\xi\right)}{\sqrt{(\zeta^2+1)(\zeta^2+\eta_2^2)}}d\zeta,
\end{aligned}
\end{equation}
where the final expression makes use of the identity between $c_{\alpha, 3}, c_{\alpha, 1}$ as given in Eq.\eqref{eq:c1c3} and Eq.\eqref{eq:xi}. 
Set $H(\alpha, \lambda)$ as
\begin{equation}
H(\alpha, \lambda)=\frac{2\lambda\sqrt{\lambda^2+\alpha^2}\left(12\lambda^2+6(1-\alpha^2+\eta_2^2)+\xi\right)}{\sqrt{(\lambda^2+1)(\lambda^2+\eta_2^2)}}.
\end{equation}
The roots of this expression can be determined as $0, \pm\ii\alpha, \pm\ii\rho$. From the expression of $\xi$ in Eq.\eqref{eq:xi}, $H(\alpha, \lambda)$ can be rewritten as 
\begin{equation}\label{eq:H}
H(\alpha, \lambda)=\frac{2\lambda\sqrt{\lambda^2+\alpha^2}\left(12\lambda^2+12+4(\eta_2^2-1)\left(2-m_{\alpha}^2+m_{\alpha}^2
\frac{m_{\alpha}^2-1}{m_{\alpha}^2-1+\frac{E(m_{\alpha})}{K(m_{\alpha})}}\right)\right)}{\sqrt{(\lambda^2+1)(\lambda^2+\eta_2^2)}}.
\end{equation}
When $m_{\alpha}<1$, we have the following series expansion 
\begin{equation}
\frac{E(m_\alpha)}{F(m_\alpha)}=1-\frac{1}{2}m_{\alpha}^2+\mathcal{O}(m_{\alpha}^4). 
\end{equation}
Substituting it to Eq.\eqref{eq:H}, we obtain an estimation for the root $\ii\rho$, $\rho\sim\sqrt{\frac{2+\alpha^2}{3}}<\alpha$. Furthermore, we have $\Im\left(H(\alpha, \ii\alpha+\ii\epsilon)\right)<0, \Im\left(H(\alpha, \ii-\ii\epsilon)\right)>0, \Im\left(H(\alpha, \ii\beta)\right)=0(1<\beta<\alpha), \Re\left(H(\alpha, \ii\beta)\right)<0(\rho<\beta<\alpha)$ and $\Re\left(H(\alpha, \ii\beta)\right)>0(1<\beta<\rho)$. Then we have the following results:
\begin{itemize}
\item When $\ii\rho<\ii\eta_1$, we can easily get the inequality $ \Im\left(2\lambda\xi+8\lambda^3+\ii\left(G_{+}(\lambda)+G_{-}(\lambda)\right)\right)>0$ for 
$\lambda\in\left(\ii\eta_1, \ii\alpha\right)$.
\item When $\ii\eta_1<\ii\rho<\ii\alpha$, and $\lambda\in\left[\ii\rho, \ii\alpha\right]$, we have $ \Im\left(2\lambda\xi+8\lambda^3+\ii\left(G_{+}(\lambda)+G_{-}(\lambda)\right)\right)>0$.
\item When $\ii\eta_1<\ii\rho<\ii\alpha$, and $\lambda\in\left(\ii\eta_1, \ii\rho\right)$, we have $\Im(\int_{\ii\alpha}^{\ii\rho}H(\zeta, \lambda)d\zeta)>0$ and $\Im(\int_{\ii\rho}^{\lambda}H(\zeta, \lambda)d\zeta)<0$. As $\lambda=\ii\eta_1$, $\Im(\int_{\ii\alpha}^{\ii\eta_1}H(\zeta, \lambda)d\zeta)$ reach its minimum. For this case, we use a numerical method to prove the inequality$\Im(\int_{\ii\alpha}^{\ii\eta_1}H(\zeta, \lambda)d\zeta)>0$. Thus when $\lambda\in(\ii\eta_1, \ii\rho)$, we have 
$\Im(\int_{\ii\alpha}^{\lambda}H(\zeta, \lambda)d\zeta)>\Im(\int_{\ii\alpha}^{\ii\eta_1}H(\zeta, \lambda)d\zeta)>0$. 
\end{itemize} 
	\end{proof}
When $t\to\infty$, the jump matrices of $\mathbf{T}(\lambda; x, t)$ will decay exponentially to the identity matrix except for $\lambda\in\left(\Sigma_{\alpha, +}\cup\Gamma_{\alpha,+}\cup\Sigma_{c}\cup\Gamma_{\alpha,-}\cup\Sigma_{\alpha,-}\right)$. Following the step used to study the large negative $x$ asymptotics when $t=0$, we can construct the outer parametrix for $\mathbf{T}(\lambda; x, t)$. 
\begin{rhp}(Riemann-Hilbert problem of the outer parametrix $\dot{\mathbf{T}}^{\rm out}(\lambda; x, t)$)
	\begin{itemize}
		\item {\bf Analyticity:} $\dot{\mathbf{T}}^{\rm out}(\lambda; x, t)$ is analytic for $\lambda\in\mathbb{C}\setminus\left(\Sigma_{\alpha, \pm}\cup\Sigma_c\cup\Gamma_{\alpha, \pm}\right)$.
		\item {\bf Jump condition:} The boundary conditions on these contours are related by the jump relation $\dot{\mathbf{T}}^{\rm out}_{+}(\lambda; x, t)=\dot{\mathbf{T}}^{\rm out}_{-}(\lambda; x, t)\mathbf{V}_{\dot{\mathbf{T}}^{\rm out}}(\lambda; x, t)$, where 
		\begin{equation}
			\mathbf{V}_{\dot{\mathbf{T}}^{\rm out}}(\lambda; x, t)=\left\{\begin{aligned}&\begin{bmatrix}0&1\\
					-1&0
				\end{bmatrix},&&\quad \lambda\in\Sigma_{\alpha, \pm}\cup\Sigma_c,\\
				&\begin{bmatrix}\ee^{-td_{\alpha, 1}+\Delta_1}&0\\
					0&\ee^{td_{\alpha, 1}-\Delta_1}
				\end{bmatrix},&&\quad\lambda\in\Gamma_{\alpha, +},\\
				&\begin{bmatrix}\ee^{-td_{\alpha, 2}+\Delta_2}&0\\
					0&\ee^{td_{\alpha, 2}-\Delta_2}
				\end{bmatrix},&&\quad\lambda\in\Gamma_{\alpha, -}.
			\end{aligned}\right.
		\end{equation}
		\item{\bf Normalization:} $\dot{\mathbf{T}}^{\rm out}(\lambda; x, t)\to\mathbb{I}$ as $\lambda\to\infty$. 
	\end{itemize}
\end{rhp}
Observing these jump matrices of the outer parametrix $\dot{\mathbf{T}}^{\rm out}(\lambda; x, t)$, we notice their resemblance to those of the outer parametrix $\dot{\mathbf{T}}^{\rm out}(\lambda; x, 0)$ when replacing $\eta_1$ to $\alpha,$ $x$ to $t$, $d_{1}$ to $d_{\alpha, 1}$, $d_{2}$ to $d_{\alpha, 2}$. Thus we can readily deduce the solution for the outer parametrix $\dot{\mathbf{T}}^{\rm out}(\lambda; x, t)$,
\begin{multline}	
\dot{\mathbf{T}}^{\rm out}(\lambda; x, t)=\frac{1}{2}{\rm diag}\left(C_{\alpha, 1}\ee^{-F_{\alpha, 0}}, C_{\alpha, 2}\ee^{F_{\alpha, 0}}\right)\\\times\begin{bmatrix}\left(\gamma_{\alpha}( \lambda)+\frac{1}{\gamma_{\alpha}(\lambda)}\right)\frac{\Theta\left(\mathbf{A}_{\alpha}(\lambda)+\mathbf{d}_{\alpha}-\pmb{\mathcal{U}}_{\alpha}F_{\alpha, 1}-\pmb{\mathcal{V}}_{\alpha}F_{\alpha, 2}\right)}{\Theta\left(\mathbf{A}_{\alpha}(\lambda)+\mathbf{d}_{\alpha}\right)}&
	\ii\left(\gamma_{\alpha}(\lambda)-\frac{1}{\gamma_{\alpha}(\lambda)}\right)\frac{\Theta\left(\mathbf{A}_{\alpha}(\lambda)-\mathbf{d}_{\alpha}+\pmb{\mathcal{U}}_{\alpha}F_{\alpha, 1}+\pmb{\mathcal{V}}_{\alpha}F_{\alpha, 2}\right)}{\Theta\left(\mathbf{A}_{\alpha}(\lambda)-\mathbf{d}_{\alpha}\right)}\\
	-\ii\left(\gamma_{\alpha}(\lambda)-\frac{1}{\gamma_{\alpha}(\lambda)}\right)\frac{\Theta\left(\mathbf{A}_{\alpha}(\lambda)-\mathbf{d}_{\alpha}-\pmb{\mathcal{U}}_{\alpha}F_{\alpha, 1}-\pmb{\mathcal{V}}_{\alpha}F_{\alpha, 2}\right)}{\Theta\left(\mathbf{A}_{\alpha}(\lambda)-\mathbf{d}_{\alpha}\right)}&
	\left(\gamma_{\alpha}(\lambda)+\frac{1}{\gamma_{\alpha}(\lambda)}\right)\frac{\Theta\left(\mathbf{A}_{\alpha}(\lambda)+\mathbf{d}_{\alpha}+\pmb{\mathcal{U}}_{\alpha}F_{\alpha, 1}+\pmb{\mathcal{V}}_{\alpha}F_{\alpha, 2}\right)}{\Theta\left(\mathbf{A}_{\alpha}(\lambda)+\mathbf{d}_{\alpha}\right)}\\
\end{bmatrix}\\\times\ee^{-\left(\Omega_{\alpha, 1}(\lambda)F_{\alpha, 1}+\Omega_{\alpha, 2}(\lambda)F_{\alpha, 2}\right)\sigma_3}\ee^{F_{\alpha}(\lambda)\sigma_3}.
	\end{multline}
All these functions with the subscript $_{\alpha}$ denote that $\eta_1$ is replaced by $\alpha,$ $x$ by $t$, $d_{1}$ by $d_{\alpha, 1}$, $d_{2}$ by $d_{\alpha, 2}$ with their corresponding functions from Eq.\eqref{eq:O-matrix-1} without the subscript. 

Similar to the large negative $x$ asymptotics discussed in the previous section, the outer parametrix $\dot{\mathbf{T}}^{\rm out}(\lambda; x, t)$ has singularities at the branch points $\pm\ii,\pm\ii\alpha, \pm\ii\eta_2$. Although these points are all branch points, they differ in essence. The points $\pm\ii\eta_2$ are stationary, determined by the initial spectral parameter. However, the branch point $\ii\alpha$ can vary with the variables $x$ and $t$, leading to different inner parametrices. While we provided construction details for stationary branch points in the last section, which are related to modified Bessel function. The inner parametrices in the neighbourhood of the branch points $\pm\ii\alpha$ are associated with Airy function. Detailed constructions of these inner parametrices can be found in our previous work\cite{ling2022large} and in reference \cite{Bilman-arxiv-2021}, therefore we omit them in this manuscript.  The global parametrix of $\mathbf{T}(\lambda; x, t)$ can be set as:
\begin{equation}
	\dot{\mathbf{T}}(\lambda; x, t)=\left\{\begin{aligned}&\dot{\mathbf{T}}^{\ii\alpha}(\lambda; x, t),&&\quad \lambda\in B_{\rho}^{\ii\alpha},\\
		&\dot{\mathbf{T}}^{\ii\eta_2}(\lambda; x, t),&&\quad \lambda\in B_{\rho}^{\ii\eta_2},\\
		&\dot{\mathbf{T}}^{-\ii\alpha}(\lambda; x, t),&&\quad \lambda\in B_{\rho}^{-\ii\alpha},\\
		&\dot{\mathbf{T}}^{-\ii\eta_2}(\lambda; x, t),&&\quad \lambda\in B_{\rho}^{-\ii\eta_2},\\
		&\dot{\mathbf{T}}^{\rm out}(\lambda; x, t),&&\quad \lambda\in\mathbb{C}\setminus\left(\Sigma_{\pm}\cup\Sigma_{c}\cup\Gamma_{\pm}
		\cup\overline{B_{\rho}^{\ii\alpha}}\cup\overline{B_{\rho}^{-\ii\alpha}}\cup\overline{B_{\rho}^{\ii\eta_2}}\cup\overline{B_{\rho}^{-\ii\eta_2}}\right).
	\end{aligned}\right.
\end{equation}
{\emph Error Analysis} After constructing the inner parametrices, we proceed to analyze the error between $\mathbf{T}(\lambda; x, t)$ and its global parametrix $\dot{\mathbf{T}}(\lambda; x, t)$. Let us define the error matrix as $\mathbf{F}(\lambda; x, t):=\mathbf{T}(\lambda; x, t)\left(\dot{\mathbf{T}}(\lambda; x, t)\right)^{-1}$, its jump contours is denoted as $\Sigma_{\mathbf{F}(\lambda; x, t)}$. By considering the jump conditions of $\mathbf{T}(\lambda; x, t)$ and $\dot{\mathbf{T}}(\lambda; x, t)$, we observe that $\Sigma_{\mathbf{F}(\lambda; x, t)}$ is composed with the arcs $C_{\alpha, L}^{\pm}\cup C_{\alpha, R}^{\pm}\cup \partial B_{\rho}^{\pm\ii\alpha}\cup\partial B_{\rho}^{\pm\ii\eta_2}$. Leveraging the results obtained in the preceding section and previous references, we obtain the following estimates:
\begin{equation}
	\begin{aligned}
		&\|\dot{\mathbf{T}}^{\pm\ii\alpha, \pm\ii\eta_2}(\lambda; x, t)\left(\dot{\mathbf{T}}^{\rm out}(\lambda; x, t)\right)^{-1}-\mathbb{I}\|=\mathcal{O}(|t|^{-1}),&\quad t\to\infty,\\
		&\|\mathbf{T}(\lambda; x, t)\left(\dot{\mathbf{T}}^{\rm out}(\lambda; x, t)\right)^{-1}-\mathbb{I}\|=\ee^{-\mu_{1} t}(\mu_1>0),&\quad t\to\infty.
	\end{aligned}
\end{equation}
By employing the formula derived for the large negative $x$ asymptotics, we can get the large $t$ asymptotics in this case. 
\begin{theorem}\label{theo:larget}(Large $t$ asymptotics when $\xi_{\rm crit}<\xi<2+4\eta_2^2$ ) When $\xi_{\rm crit}<\xi<2+4\eta_2^2, \eta_1<\alpha<\eta_2$, where $\xi_{\rm crit}$ is defined in Eq. \eqref{eq:xi-crit}, then the large $t$ asymptotics of the soliton gas for the mKdV equation is given as: 
\begin{multline}\label{eq:qt}	q(x,t)=\frac{\Theta\left(\mathbf{A}_{\alpha}(\infty)+\mathbf{d}_{\alpha}\right)}{\Theta\left(\mathbf{A}_{\alpha}(\infty)+\mathbf{d}_{\alpha}-\pmb{\mathcal{U}}_{\alpha}F_{\alpha,1}-\pmb{\mathcal{V}}_{\alpha}F_{\alpha, 2}\right)}
	\frac{\Theta\left(\mathbf{A}_{\alpha}(\infty)-\mathbf{d}_{\alpha}+\pmb{\mathcal{U}}_{\alpha}F_{\alpha,1}+\pmb{\mathcal{V}}_{\alpha}F_{\alpha, 2}\right)}{\Theta\left(\mathbf{A}_{\alpha}(\infty)-\mathbf{d}_{\alpha}\right)}\\
	\times \ii f_{\infty} ^2\left(\alpha-\eta_2-1\right)\ee^{2F_{\alpha, 1}J_{\alpha, 1}+2F_{\alpha, 2}J_{\alpha, 2}-2F_{\alpha, 0}}+\mathcal{O}(|t|^{-1}),
\end{multline} 
where the parameters $\mathbf{A}_{\alpha}, \mathbf{d}_{\alpha}, \pmb{\mathcal{V}}_{\alpha}, \pmb{\mathcal{U}}_{\alpha}, F_{\alpha, 0}, F_{\alpha, 1}, F_{\alpha, 2}, J_{\alpha, 1}, J_{\alpha, 2}$ are similar to the parameters $\mathbf{A}, \mathbf{d}, \pmb{\mathcal{V}}, \pmb{\mathcal{U}}, F_0, F_1, F_2, J_1, J_2 $ by replacing $\eta_1$ to $\alpha,$ $x$ to $t$, $d_{1}$ to $d_{\alpha, 1}$, $d_{2}$ to $d_{\alpha, 2}$, $f_{\infty}$ is defined in Eq.\eqref{eq:finfinity}. 
\end{theorem}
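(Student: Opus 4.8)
The plan is to run the Deift--Zhou steepest-descent analysis on the Riemann--Hilbert problem for $\mathbf{Q}(\lambda;x,t)$ exactly along the lines of the large negative $x$ analysis of Section~\ref{sec:largex}, with the stationary branch points $\pm\ii\eta_1$ replaced by the moving branch points $\pm\ii\alpha$ and the large parameter $x$ replaced by $t$. All structural ingredients are already in place: the $G$-function solving \eqref{eq:G-jump}, the scalar function $f(\lambda)$ absorbing the spectral density $r_1$, and the transformation $\mathbf{T}(\lambda;x,t)$ of \eqref{eq:Txt} with jump matrices \eqref{eq:jump-T-1}. Lemma~\ref{lemma:3} shows that on the lens boundaries $C_{\alpha,L}^{\pm},C_{\alpha,R}^{\pm}$ and on the extra arcs $\Gamma_{\alpha,\pm}^{[1]}$ the nontrivial off-/anti-diagonal entries of these jump matrices decay exponentially as $t\to+\infty$, so the only surviving jumps are the constant ones on $\Sigma_{\alpha,\pm}\cup\Sigma_c$ and the diagonal ones on $\Gamma_{\alpha,\pm}$ --- precisely the jumps of the outer parametrix $\dot{\mathbf{T}}^{\rm out}(\lambda;x,t)$.

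First I would record that $\dot{\mathbf{T}}^{\rm out}$ is obtained from the genus-two theta formula \eqref{eq:O-matrix-1} by the substitution $\eta_1\mapsto\alpha$, $x\mapsto t$, $d_{1}\mapsto d_{\alpha,1}$, $d_{2}\mapsto d_{\alpha,2}$, together with the scalar factor $F_\alpha(\lambda)$ that removes the residual jumps on $\Gamma_{\alpha,\pm}$; here one must verify that for $\xi_{\rm crit}<\xi<2+4\eta_2^2$ the theta arguments $\mathbf{A}_\alpha(\infty)\pm\mathbf{d}_\alpha$ and their shifts by $\pmb{\mathcal{U}}_\alpha F_{\alpha,1}+\pmb{\mathcal{V}}_\alpha F_{\alpha,2}$ remain off the theta divisor, so $\dot{\mathbf{T}}^{\rm out}$ is bounded and invertible. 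Next I would install local parametrices: near $\pm\ii\eta_2$ the Bessel parametrix of Section~\ref{sec:largex} applies verbatim (these are still hard edges), while near the moving points $\pm\ii\alpha$ one uses the standard Airy parametrix as in \cite{ling2022large,Bilman-arxiv-2021}; both match $\dot{\mathbf{T}}^{\rm out}$ on the disc boundaries up to $\mathcal{O}(t^{-1})$. Defining the global parametrix $\dot{\mathbf{T}}(\lambda;x,t)$ piecewise and the error $\mathbf{F}(\lambda;x,t):=\mathbf{T}(\lambda;x,t)\dot{\mathbf{T}}(\lambda;x,t)^{-1}$, the jump of $\mathbf{F}$ is supported on $C_{\alpha,L}^{\pm}\cup C_{\alpha,R}^{\pm}\cup\partial B_\rho^{\pm\ii\alpha}\cup\partial B_\rho^{\pm\ii\eta_2}$ and equals $\mathbb{I}+\mathcal{O}(t^{-1})$ in the relevant $L^1\cap L^2\cap L^\infty$ norms by Lemma~\ref{lemma:3} and the matching conditions, so small-norm Riemann--Hilbert theory gives $\mathbf{F}=\mathbb{I}+\mathcal{O}(t^{-1})$ uniformly, with the $\lambda^{-1}$-coefficient of its Laurent expansion at $\infty$ also $\mathcal{O}(t^{-1})$.

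Finally I would unravel the chain $\mathbf{Q}\to\mathbf{T}$: since $q(x,t)=2\ii\lim_{\lambda\to\infty}\lambda\,\mathbf{Q}(\lambda;x,t)_{12}$ and the transformation \eqref{eq:Txt} only conjugates by the diagonal factors $f_\infty^{\mp\sigma_3}$, $\ee^{-tG(\lambda)\sigma_3}$ (with $G(\lambda)=\mathcal{O}(\lambda^{-1})$ at infinity) and $f(\lambda)^{\sigma_3}$, the potential is recovered as $2\ii f_\infty^2$ times the $\lambda^{-1}$-coefficient of $\dot{\mathbf{T}}^{\rm out}(\lambda;x,t)_{12}$, up to $\mathcal{O}(t^{-1})$. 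Substituting the explicit theta-function formula and reading off the $\lambda\to\infty$ expansion --- using $\gamma_\alpha(\lambda)-\gamma_\alpha(\lambda)^{-1}=\mathcal{O}(\lambda^{-1})$ with an explicit leading coefficient proportional to $\alpha-\eta_2-1$, and the normalizations $J_{\alpha,1},J_{\alpha,2}$ of $\Omega_{\alpha,1},\Omega_{\alpha,2}$ --- produces the stated formula \eqref{eq:qt}. I expect the main obstacle to be the two analytic points underpinning the outer parametrix: (i) completing the sign analysis of Lemma~\ref{lemma:3} on $\Gamma_{\alpha,\pm}^{[1]}$, part of which is presently handled numerically, and (ii) confirming that the theta divisor is avoided throughout the entire range $\xi_{\rm crit}<\xi<2+4\eta_2^2$ so that $\dot{\mathbf{T}}^{\rm out}$, and hence $\mathbf{F}$, is uniformly bounded; by contrast the Bessel/Airy matching is routine.
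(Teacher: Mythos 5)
Your proposal follows essentially the same route as the paper: the $G$-function and transformation \eqref{eq:Txt}, the decay estimates of Lemma~\ref{lemma:3}, the outer parametrix obtained from the genus-two theta formula by the substitution $\eta_1\mapsto\alpha$, $x\mapsto t$, $d_j\mapsto d_{\alpha,j}$, Bessel parametrices at $\pm\ii\eta_2$ and Airy parametrices at the moving points $\pm\ii\alpha$, a small-norm argument giving $\mathbf{F}=\mathbb{I}+\mathcal{O}(|t|^{-1})$, and recovery of $q$ from the $\lambda^{-1}$-coefficient of $\dot{\mathbf{T}}^{\rm out}_{12}$ times $f_\infty^2$. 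The caveats you flag (the partly numerical sign check on $\Gamma_{\alpha,\pm}^{[1]}$ and non-vanishing of the theta denominators) are the same points the paper leaves informal, so your argument is a faithful reconstruction of its proof.
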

\subsection{The large $t$ asymptotics in the case $\xi<\xi_{\rm crit}$}
As $\ii\alpha$ approaches the point $\ii\eta_1$, a special value $\xi_{\rm crit}$ emerges. When $\xi<\xi_{\rm crit}$, slight adjustments need to be made to the branch points of the $G$-function. During the construction of the $G$-function, there is no necessity to introduce the unknown branch point $\ii\alpha$, as it solely depends on the end points $\pm\ii\eta_2, \pm\ii\eta_1, \pm\ii$ of the spectrum. In this case, $G$-function satisfies the following jump relations,
	\begin{equation}
	\begin{aligned}
		&G_{+}(\lambda)+G_{-}(\lambda)=2\ii\lambda\xi+8\ii\lambda^3,\quad&&\lambda\in\Sigma_{+},\\
		&G_{+}(\lambda)-G_{-}(\lambda)=d_{\eta_1, 1}, \quad&&\lambda\in\Gamma_{+},\\
		&G_{+}(\lambda)+G_{-}(\lambda)=2\ii\lambda\xi+8\ii\lambda^3,\quad&&\lambda\in\Sigma_c,\\
		&G_{+}(\lambda)-G_{-}(\lambda)=d_{\eta_1, 2}, \quad&&\lambda\in\Gamma_{-},\\
		&G_{+}(\lambda)+G_{-}(\lambda)=2\ii\lambda\xi+8\ii\lambda^3,\quad&&\lambda\in\Sigma_{-}, 
	\end{aligned}
\end{equation}
where the intervals $\Sigma_{\pm}, \Gamma_{\pm}, \Sigma_{c}$ are defined in the context of the last large negative $x$ asymptotics. The integral constants $d_{\eta_1, 1}, d_{\eta, 2}$ are determined by replacing $\ii\alpha$ with $\ii\eta_1$ in the $d_{\alpha, 1}, d_{\alpha, 2}$. 
Direct calculation derives the derivative of $G$-function, 
\begin{equation}
	G'(\lambda)=\ii\xi-\frac{\ii\lambda^3+\ii c_{1}\lambda}{R(\lambda)}\xi+12\ii\lambda^2-\frac{12\ii\lambda^5+6\ii\left(1+\eta_1^2+\eta_2^2\right)\lambda^3+\ii c_{\eta_1,3}\lambda}{R(\lambda)},
\end{equation}
where $c_{1}$ is defined in \eqref{eq:c1} and $R(\lambda)$ is defined in \eqref{eq:R-function}. The parameter $c_{\eta_1, 3}$ satisfies the constraint condition
\begin{equation}
c_{\eta_1, 3}=-\frac{\int_{\ii\eta_1}^{\ii}\frac{12\lambda^5+6\left(1+\eta_1^2+\eta_2^2\right)\lambda^3}{R(\lambda)}d\lambda}{\int_{\ii\eta_1}^{\ii}\frac{\lambda}{R(\lambda)}d\lambda}.
\end{equation}
From the results of large negative $x$ asymptotics and the large $t$ asymptotics depending on the parameter $\alpha$, we can express the leading-order term in this case using the Riemann-Theta function. These Riemann-Theta functions are independent of the formulas of $g$-function, and $G$-function but are related to the integral constants $d_{1}, d_{2}$ and $d_{\alpha, 1}, d_{\alpha, 2}$ respectively. Similar to the definition of $\mathbf{T}(\lambda; x, t)$ in \eqref{eq:Txt}, $\mathbf{T}(\lambda; x, t)$ in this case can be defined using the newly defined $G$-function. The outer parametrix of $\mathbf{T}(\lambda; x, t)$ can be then given by replacing $x$ with $t$, $d_{1}$ with $d_{\eta_1, 1}$, $d_{2}$ with $d_{\eta_1, 2}$ in the case of large negative $x$ for $\mathbf{T}(\lambda; x, 0)$ as discussed in the previous section. Consequently, the large $t$ asymptotics for $\xi<\xi_{\rm crit}$ can be given directly.
\begin{theorem}\label{theo:larget-1}(Large $t$ asymptotics when $\xi<\xi_{\rm crit}$) When $\xi<\xi_{\rm crit}$, we obtain the following large $t$ asymptotics, 
\begin{multline}\label{eq:qt1} q(x,t)=\frac{\Theta\left(\mathbf{A}_{t}(\infty)+\mathbf{d}_{t}\right)}{\Theta\left(\mathbf{A}_{t}(\infty)+\mathbf{d}_{t}-\pmb{\mathcal{U}}_{t}F_{t,1}-\pmb{\mathcal{V}}_{t}F_{t, 2}\right)}	\frac{\Theta\left(\mathbf{A}_{t}(\infty)-\mathbf{d}_{t}+\pmb{\mathcal{U}}_{t}F_{t,1}+\pmb{\mathcal{V}}_{t}F_{t,2}\right)}{\Theta\left(\mathbf{A}_{t}(\infty)-\mathbf{d}_{t}\right)}\\
	\times \ii f_{\infty} ^2\left(\eta_1-\eta_2-1\right)\ee^{2F_{t,1}J_{t,1}+2F_{t,2}J_{t,2}-2F_{t,0}}+\mathcal{O}(|t|^{-1}),
\end{multline} 
where the subscript $_t$ indicates that by replacing $x$ to $t$, $d_{1}, d_{2}$ are replaced by $d_{\eta_1, 1}, d_{\eta_1, 2}$ respectively, using the corresponding functions in the Theorem \ref{theo:largex}. 
\end{theorem}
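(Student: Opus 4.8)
The plan is to mirror, essentially verbatim, the argument used for the large negative $x$ asymptotics (Theorem~\ref{theo:largex}) and for the $\xi_{\rm crit}<\xi<2+4\eta_2^2$ case (Theorem~\ref{theo:larget}), with the substitutions $x\mapsto t$, $\Sigma_{\alpha,\pm}\mapsto\Sigma_{\pm}$, $\Gamma_{\alpha,\pm}\mapsto\Gamma_{\pm}$, $d_{\alpha,j}\mapsto d_{\eta_1,j}$, and $R_{\alpha}(\lambda)\mapsto R(\lambda)$ from Eq.~\eqref{eq:R-function}. First I would record that the $G$-function Riemann--Hilbert problem stated just above the theorem is solvable: differentiating the jump relations gives $G'_{+}(\lambda)+G'_{-}(\lambda)=2\ii\xi+24\ii\lambda^2$ on $\Sigma_{+}\cup\Sigma_c\cup\Sigma_{-}$, so $(G'R)_{+}-(G'R)_{-}=(2\ii\xi+24\ii\lambda^2)R_{+}$, and the Plemelj formula together with the normalization $G(\lambda)=\mathcal{O}(\lambda^{-1})$ and the generalized residue theorem yield exactly the stated $G'(\lambda)$, with $c_1$ as in Eq.~\eqref{eq:c1} and $c_{\eta_1,3}$ fixed by demanding that the $\kappa$-constants be pure imaginary. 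An argument identical to Proposition~\ref{prop:inter-cons} (Riemann bilinear relations on the genus-two surface associated to $R(\lambda)$, exploiting the symmetry of the homology cycles) then shows these integral constants vanish, while the symmetry $G(\lambda)=-G(\lambda^*)^*$ forces $d_{\eta_1,2}=d_{\eta_1,1}\in\ii\mathbb{R}$.

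Second, I would establish the sign inequalities that justify the whole construction, which is the genuine content of this subsection. One needs the analogue of Lemma~\ref{lemma:2}/Lemma~\ref{lemma:3}: along the lens boundaries $C_{L}^{\pm},C_{R}^{\pm}$ the quantity $\Im\!\left(\lambda\xi+4\lambda^3+\ii G(\lambda)\right)$ must carry the sign that drives the off-band jump matrices of $\mathbf{T}(\lambda;x,t)$ to $\mathbb{I}$ as $t\to+\infty$. As in Lemma~\ref{lemma:2} this reduces to computing $u_x=\Re\!\left(G'_{+}(\lambda)-\ii\xi-12\ii\lambda^2\right)$ on $\Sigma_{+}$ and checking it is one-signed, which amounts to controlling the sign of the numerator $\lambda^3+c_1\lambda$ together with the degree-five polynomial appearing in $G'$. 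The crucial point --- and the main obstacle --- is to verify that for $\xi<\xi_{\rm crit}$ the relevant phase has \emph{no} additional stationary point inside the gap $\Gamma_{+}$, i.e.\ that the band $\Sigma_{+}=[\ii\eta_1,\ii\eta_2]$ is the correct fixed band and no extra soft endpoint $\ii\alpha$ is required. Concretely this is the statement that $H(\alpha,\lambda)$ from the previous subsection, in the limit $\alpha\to\eta_1$ and with $\xi<\xi_{\rm crit}$, places its root $\ii\rho$ so that $\Im\int_{\ii\eta_1}^{\lambda}H$ keeps the needed sign on $\Gamma_{\pm}$; when an explicit estimate is unavailable I would close this exactly as in Lemma~\ref{lemma:3}, with a numerical verification of the boundary inequality at $\lambda=\ii\eta_1$.

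Third, granted the $G$-function and these inequalities, the remaining steps are routine and parallel the earlier sections. I would define $\mathbf{T}(\lambda;x,t)$ by opening lenses exactly as in Eq.~\eqref{eq:Txt} (with the new genus-two $G$), record its jump matrices and observe that everything off $\Sigma_{\pm}\cup\Sigma_c\cup\Gamma_{\pm}$ is exponentially close to $\mathbb{I}$; solve the outer parametrix, which has precisely the form of $\dot{\mathbf{T}}^{\rm out}(\lambda;x,0)$ from Section~\ref{sec:largex} after the substitutions and is therefore given by the genus-two Riemann--Theta formula \eqref{eq:O-matrix-1} with the subscript-$t$ data; and build local parametrices at the six fixed branch points $\pm\ii,\pm\ii\eta_1,\pm\ii\eta_2$ out of modified Bessel functions exactly as in the large negative $x$ case, no Airy parametrix being needed here since there is no soft endpoint. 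A small-norm argument then gives $\mathbf{F}(\lambda;x,t):=\mathbf{T}(\lambda;x,t)\dot{\mathbf{T}}(\lambda;x,t)^{-1}=\mathbb{I}+\mathcal{O}(|t|^{-1})$, the error jumps being $\mathcal{O}(|t|^{-1})$ on the circles and exponentially small on the lenses.

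Finally, I would recover the potential through $q(x,t)=2\ii\lim_{\lambda\to\infty}\lambda\mathbf{Q}(\lambda;x,t)_{12}$, unwinding the chain $\mathbf{Q}\leftarrow\mathbf{T}$ (picking up the factors $\ee^{-tG(\infty)}$, $f_\infty^2$ and the outer-parametrix expansion, exactly as in the computation preceding Theorem~\ref{theo:largex}), which produces the claimed formula \eqref{eq:qt1} with $\mathcal{O}(|t|^{-1})$ correction. The only place where real work is required is the second step --- the sign analysis and the accompanying ``no spurious band'' claim; everything else is a transcription of the genus-two machinery already set up for $t=0$.
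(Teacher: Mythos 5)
Your proposal is correct and follows essentially the same route as the paper: the paper proves this theorem precisely by transcribing the genus-two machinery of the $x\to-\infty$ analysis (new $G$-function with fixed branch points $\pm\ii,\pm\ii\eta_1,\pm\ii\eta_2$, constants $d_{\eta_1,1},d_{\eta_1,2}$, same outer Theta-function parametrix, Bessel local parametrices at $\pm\ii\eta_1,\pm\ii\eta_2$, small-norm error of order $|t|^{-1}$), and your write-up is if anything more explicit than the paper about the sign inequalities and the ``no extra soft endpoint for $\xi<\xi_{\rm crit}$'' claim, which the paper leaves implicit. The only cosmetic deviation is your mention of local parametrices at $\pm\ii$; none are needed (or used in the paper) there, since the outer parametrix satisfies the exact jumps near $\pm\ii$ and its fourth-root singularities are removable in the error problem.
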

Based on these three different asymptotic expressions \eqref{eq:qt2}, \eqref{eq:qt} and \eqref{eq:qt1}, the soliton gas behavior is depicted in Fig. \ref{fig:solitongas} by choosing a fixed $t$.
\begin{figure}[ht]
	\centering
	\includegraphics[width=1\textwidth]{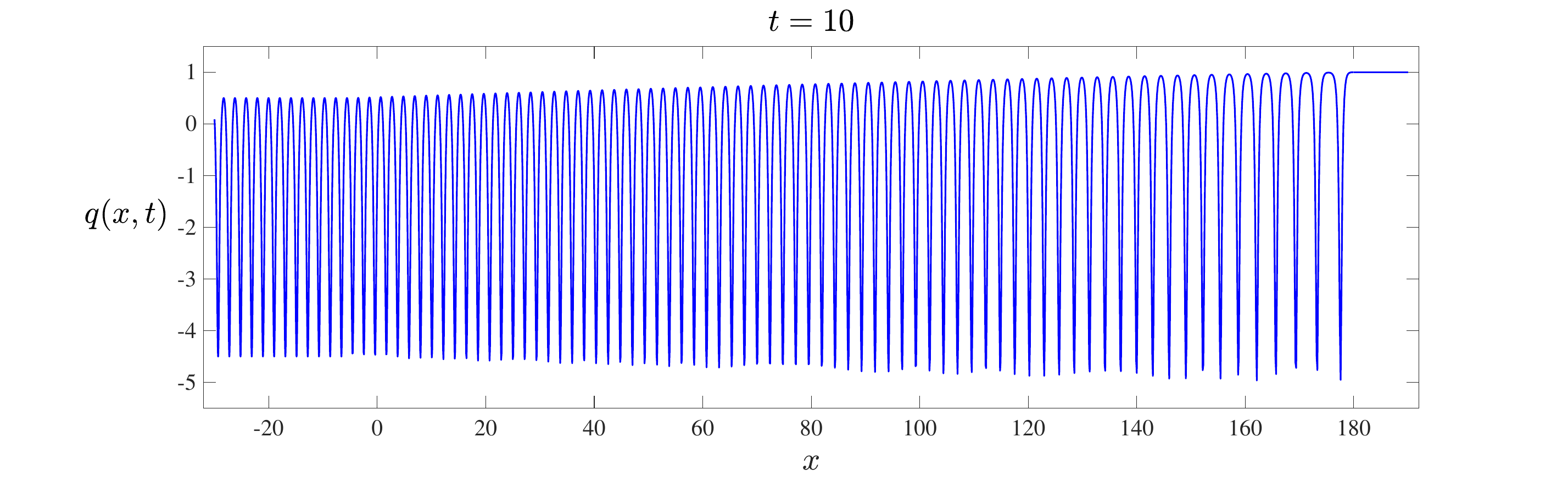}
	\caption{The large $t$ asymptotics of the solitons gas under the nonzero background, where the parameters are chosen as $t=10$, $\eta_1=\frac{3}{2}, \eta_2=2$ and $r_1(\lambda)=1$. For $x\lesssim -5.0731$, the left asymptotic region is given in Eq. \eqref{eq:qt1}, for $-5.7031\lesssim x<180$, the middle asymptotic region is given in Eq.\eqref{eq:qt}, for $x>180$, the right asymptotic region is given in Eq.\eqref{eq:qt2}.}
	\label{fig:solitongas}
\end{figure}

For the large $t$ asymptotics, the $(x, t)$-plane is divided into three different regions depending on the ratio $\xi=\frac{x}{t}$. When $\xi>2+4\eta_2^2$, the large $t$ asymptotics is given in Eq.\eqref{eq:qt2}, and the leading-order term $q(x, t)\to 1$ satisfies the mKdV equation\eqref{eq:mkdv}. For $\xi<2+4\eta_2^2$, the large $t$ asymptotics are show in Theorem \ref{theo:larget} and Theorem \ref{theo:larget-1}. By choosing fixed $(x, t)$ in the corresponding regions and substituting them to the mKdV equation \eqref{eq:mkdv}, we check the asymptotics with the numerical difference method and find that they can match the equation with the error $\mathcal{O}(t^{-1})$. 
\section{Conclusions and Discussions}
In this paper, we study the large $x$ asymptotics at $t=0$ and the large $t$ asymptotics of the soliton gas under the nonzero background for the mKdV equation. When $t=0$, the large $x$ asymptotics behave differently as $x\to-\infty$ and $x\to\infty$.  For the large $t$ asymptotics, we show that the $(x, t)$-plane is decomposed into three types of regions, $\frac{x}{t}>2+4\eta_2^2, \xi_{\rm crit}<\frac{x}{t}<2+4\eta_2^2$ and $\frac{x}{t}<\xi_{\rm crit}$, where $\xi_{\rm crit}$ is defined in Eq.\eqref{eq:xi-crit}. In the right region, the solution decays to the constant background. The asymptotics in the middle and the left regions are both expressed with the Riemann-Theta function, corresponding to a genus-two Riemann surface. 

The soliton gas in this paper is generated from the $N$-soliton solutions as $N\to\infty$, with the corresponding discrete spectra on the intervals $\left[\ii\eta_1, \ii\eta_2\right]$, where $\eta_1, \eta_2\in\mathbb{R}$ and $1<\eta_1<\eta_2$. This type of soliton gas involves the asymptotic region with genus-two, which differs from the soliton gas under the zero background. Besides the soliton solutions, the mKdV equation also admits the breather solutions under the nonzero background. These breather solutions are closely related to modulation instability and behave quite differently from soliton solutions. The ideas presented in this paper can be applied to study the asymptotics of the breather gas. Additionally, in \cite{girotti2023soliton}, Girotti et.al. studied the large $t$ asymptotics of the soliotn gas in the presence of a single trial soliton. Investigating the asymptotics of the soliton gas and a trial soliton when the soliton gas is generated under the nonzero background is a worthwhile problem. These problems will be addressed in future works. 
\subsection*{Acknowledgments}
Xiaoen Zhang is supported by the National Natural Science Foundation of China (Grant No.12101246); Liming Ling is supported by the National Natural Science Foundation of China (Grant No. 12122105) and Guangzhou Science and Technology Plan (No. 2024A04J6245).


\bibliographystyle{siam}
\bibliography{reference}
\end{document}